\newtheorem{theorem}{Theorem}[section]
\newtheorem{lemma}[theorem]{Lemma}
\newtheorem{fact}[theorem]{Fact}
\theoremstyle{definition}
\newtheorem{definition}[theorem]{Definition}
\newtheorem{remark}[theorem]{Remark}
\newtheorem{problem}{Problem}
\newenvironment{fminipage}%
{\begin{Sbox}\begin{minipage}}%
		{\end{minipage}\end{Sbox}\fbox{\TheSbox}}
\def\expec#1#2{{\mathbb{E}}_{#1}\left[ #2 \right]}
\def\defeq{\stackrel{\mathrm{def}}{=}}
\def\setof#1{\left\{#1  \right\}}
\def\sizeof#1{\left|#1  \right|}
\def\trace#1{\mathrm{Tr} \left(#1 \right)}
\def\floor#1{\left\lfloor #1 \right\rfloor}
\def\ceil#1{\left\lceil #1 \right\rceil}
\def\abs#1{\left|#1  \right|}
\def\norm#1{\left\| #1 \right\|}
\def\len#1{\left\lVert #1 \right\rVert}
\def\kh#1{\left( #1 \right)}
\newcommand{\SStil}{\boldsymbol{\mathit{\widetilde{S}}}}
\newcommand{\matlowtil}{\boldsymbol{\mathit{\widetilde{\mathcal{L}}}}}
\newcommand{\calDtil}{\boldsymbol{\mathit{\widetilde{\mathcal{D}}}}}
\newcommand{\calD}{\boldsymbol{\mathit{\mathcal{D}}}}
\newcommand{\matlow}{\boldsymbol{\mathit{{\mathcal{L}}}}}
\newcommand{\PP}{\boldsymbol{\mathit{P}}}
\newcommand{\rea}{\mathbb{R}}
\newcommand{\eps}{\epsilon}
\newcommand{\vecone}{\boldsymbol{\mathit{1}}}
\newcommand{\matzero}{\boldsymbol{\mathit{0}}}
\newcommand\PPi{\boldsymbol{\Pi}}
\newcommand\cchi{\boldsymbol{\chi}}
\newcommand\bb{\boldsymbol{\mathit{b}}}
\newcommand\cc{\boldsymbol{\mathit{c}}}
\newcommand\ee{\boldsymbol{\mathit{e}}}
\newcommand\qq{\boldsymbol{\mathit{q}}}
\newcommand\vv{\boldsymbol{\mathit{v}}}
\newcommand\yy{\boldsymbol{\mathit{y}}}
\newcommand\zz{\boldsymbol{\mathit{z}}}
\newcommand\xx{\boldsymbol{\mathit{x}}}
\renewcommand\AA{\boldsymbol{\mathit{A}}}
\newcommand\BB{\boldsymbol{\mathit{B}}}
\newcommand\CC{\boldsymbol{\mathit{C}}}
\newcommand\DD{\boldsymbol{\mathit{D}}}
\newcommand\II{\boldsymbol{\mathit{I}}}
\newcommand\NN{\boldsymbol{\mathit{N}}}
\newcommand\LL{\bm{\mathit{L}}}
\newcommand\LH{\bm{\mathit{H}}}
\newcommand\QQ{\bm{\mathit{Q}}}
\renewcommand\SS{\boldsymbol{\mathit{S}}}
\newcommand\WW{\boldsymbol{\mathit{W}}}
\newcommand\VV{\boldsymbol{\mathit{V}}}
\newcommand\LLtil{\boldsymbol{\mathit{\tilde{L}}}}
\newcommand\ZZ{\boldsymbol{\mathit{Z}}}
\def\moverlay{\mathpalette\mov@rlay}
\def\mov@rlay#1#2{\leavevmode\vtop{%
		\baselineskip\z@skip \lineskiplimit-\maxdimen
		\ialign{\hfil$\m@th#1##$\hfil\cr#2\crcr}}}
\newcommand{\charfusion}[3][\mathord]{
	#1{\ifx#1\mathop\vphantom{#2}\fi
		\mathpalette\mov@rlay{#2\cr#3}
	}
	\ifx#1\mathop\expandafter\displaylimits\fi}
\colorlet{DarkRed}{red!50!black}
\colorlet{DarkGreen}{green!50!black}
\colorlet{DarkBlue}{blue!50!black}
\let\oldnl\nl% Store \nl in \oldnl
\newcommand{\nonl}{\renewcommand{\nl}{\let\nl\oldnl}}% Remove line number for one line
\newcommand{\bsk}{\backslash_\theta}
\newcommand{\Lke}{\kh{ \LL\backslash_\theta e }}
\newcommand{\SC}{\textsc{Sc}}
\newcommand{\ECComp}{\textsc{EdgeCentComp1}}
\newcommand{\ECComptwo}{\textsc{EdgeCentComp2}}
\newcommand{\VCComp}{\textsc{VertexCentComp}}
\newcommand{\exactQuad}{\textsc{ExactQuad}}
\newcommand{\quadEst}{\textsc{QuadEst}}
\newcommand{\quadApx}{\textsc{QuadApprox}}
\newcommand{\ChebSolve}{\textsc{ChebSolve}}
\newcommand{\QuadSolver}{\textsc{QuadSolve}}
\newcommand{\LaplSolver}{\textsc{LaplSolve}}
\newcommand{\Kf}[1]{\mathcal{K}\kh{#1}}
\newcommand{\Ev}[1]{E_{#1}}
\newcommand{\degu}[1]{\mathrm{deg}^{\mathrm{u}}(#1)}
\newcommand{\sm}[2]{_{#1#2}}
\newcommand{\partialChol}{\textsc{ApxPartialCholesky}}
\newcommand{\EQ}{E^Q}
\newcommand{\ncur}{n_{\mathrm{cur}}}
\newcommand{\mcur}{m_{\mathrm{cur}}}
\DeclareRobustCommand{\vect}[1]{\bm{\mathit{#1}}}
  \renewcommand{\vect}[1]{#1}%
\title{Kirchhoff Index As a Measure of Edge Centrality in Weighted Networks: Nearly Linear Time Algorithms}
\author{}
\author{
	Huan Li \\
	%\thanks{Shanghai Key Laboratory of Intelligent Information Processing}\\
	%\normalsize Shanghai Key Laboratory of Intelligent Information Processing \\
	\normalsize School of Computer Science \\
	\normalsize Fudan University\\
	\normalsize \texttt{huanli16@fudan.edu.cn}
	%\normalsize \texttt{huanli16@fudan.edu.cn}
	\and
	Zhongzhi Zhang \\
	%\thanks{Shanghai Key Laboratory of Intelligent Information Processing}\\
	%\normalsize Shanghai Key Laboratory of Intelligent Information Processing \\
	\normalsize School of Computer Science \\
	\normalsize Fudan University\\
	\normalsize \texttt{zhangzz@fudan.edu.cn}
}
\date{}
\begin{document}
%\begin{spacing}{1.0}
{\singlespacing

\pagenumbering{roman}

\setcounter{page}{0}
\pagenumbering{arabic}

\maketitle

\begin{abstract}
Estimating the relative importance of vertices and edges is a fundamental issue in the analysis of complex networks, and has found vast applications in various aspects, such as social networks, power grids, and biological networks. Most previous work focuses on metrics of vertex importance and methods for identifying powerful vertices, while related work for edges is much lesser, especially for weighted networks, due to the computational challenge.  In this paper, we propose to use the well-known Kirchhoff index as the  measure of edge centrality in weighted networks, called $\theta$-Kirchhoff edge centrality. The  Kirchhoff index of a network is defined as the sum of effective resistances over all vertex pairs. The centrality of an edge $e$ is reflected in the increase of Kirchhoff index  of the network when the edge $e$  is partially deactivated, characterized by a parameter $\theta$. We define two equivalent measures for $\theta$-Kirchhoff edge centrality. Both are  global metrics and have a better discriminating power than commonly used measures, based on local or partial structural information of networks, e.g. edge betweenness and spanning edge centrality.

Despite the strong advantages of Kirchhoff index as a centrality measure and its wide applications, computing the exact value of  Kirchhoff edge centrality for each edge in a graph is computationally demanding.  To solve this problem, for each of the  $\theta$-Kirchhoff edge centrality metrics, we present an efficient algorithm to compute its $\eps$-approximation for all the $m$ edges in nearly linear time in $m$. The proposed $\theta$-Kirchhoff edge centrality is the first  global metric of  edge importance that can be provably approximated in nearly-linear time.   Moreover, according to the  $\theta$-Kirchhoff edge centrality, we present a $\theta$-Kirchhoff vertex centrality measure, as well as a fast algorithm that can compute $\eps$-approximate Kirchhoff vertex centrality for all the $n$ vertices in nearly linear time in $m$.%\sout{\color{blue}in $\tilde{O}(m\theta^{-2.5}\eps^{-4})$ time}.
 %Kirchhoff edge centrality in $\tilde{O}(m\eps^{-4})$ time, where the $\tilde{O}$ notation suppresses polylogarithmic factors. Our algorithm achieves this in two phases: (1) Turning the Kirchhoff index of a graph into sum of $O(\eps^{-2}\log n)$ norms of $\LL^\dag$, the pseudoinverse of the graph's Laplacian matrix, by Hutchinson's Monte-Carlo method; (2) Estimating norms of $\LL^\dag$ upon each edge's deactivation efficiently by recursions based on partial Cholesky factorizations.

\vspace{0.5cm}

%\textbf{Keywords:} Edge centrality; Effective resistances; Kirchhoff index;
%Schur complement; Cholesky factorization; Randomized algorithms

\end{abstract}

\newpage

\section{Introduction}

Most real networks (e.g. social networks) are massive and inhomogeneous~\cite{Ne10}, where the roles of vertices/edges are often largely different, with peripheral vertices/edges having a limited effect on their function, while central vertices/edges having a strong impact on dynamical processes. Thus, it is of paramount importance to design both desirable metrics measuring the centrality or importance of vertices/edges and fast algorithms identifying vital vertices/edges~\cite{LaMe12}. In past decades, a lot of centrality measures have been presented to capture diverse aspects of the informal concept of importance, and various algorithms for these different metrics have been developed by researchers  from interdisciplinary areas, such as computer science~\cite{WhSm03,BoVi14,BoDeRi16}, control science~\cite{YoTeQi17}, and physics~\cite{LuChReZhZhZh16}. At present, it is still an active research topic in the scientific community.

Most previous work about centrality measures and algorithms concentrated on the vertex level, in spite of the fact that edge centrality plays an equally important role as its vertex counterpart. For example, edge centrality has been applied to detect communities of a network~\cite{GiNe02}, which are dense subgraphs corresponding to functional units within the network. In addition, edge centrality is helpful to describe the intensity of social ties among individuals in social networks~\cite{DiStKiPaBr11}, and is also instrumental in revealing new knowledge in semantic web~\cite{BeHeLa01}. Last but not the least, edge centrality plays an indispensable role in designing or protecting infrastructure networks, e.g. power grids~\cite{BiChH14}. Therefore, it is interesting to propose an edge centrality measure and develop algorithms characterizing the importance of an edge in networks relative to other edges.

Several measures for edge centrality have been proposed, including edge betweenness~\cite{Br01,BaKiMaMi07,BrPi07,GeSaSc08}, spanning edge centrality~\cite{TeMoCaRaFa13, MaGaKoTe15, HaAkYo16}, and current-flow centrality~\cite{BrFl05}. The betweenness of an edge is the fraction of shortest paths between vertex pairs that pass through the edge. The spanning edge centrality of an edge is defined as the probability that it is present in a randomly chosen spanning tree. While the current-flow centrality of an edge describes the amount of current flowing through it. Although these edge centrality metrics have been extensively studied, they themselves are subject to weakness. For example, edge betweenness only considers shortest paths and ignores those longer paths; spanning edge centrality cannot separate an edge linked to a leaf vertex and another cut edge connecting two large subgraphs, while the importance of these two edges are obviously different. Moreover, these measures are proposed for unweighted networks, and are either unapplicable to weighted networks, or have high computational complexity when applied to weighted networks.

In fact, it is very difficult to rigorously compare different measures of edge centrality, since the criteria of edge importance depend on real applications and the problems we are concerned with~\cite{YoTeQi17}. Hence, it is neither practical nor feasible to propose a universal measure that best quantifies the importance of edges for all situations. One should define the metric of edge centrality according to particular problems. In many real scenarios,  the Kirchhoff index~\cite{KlRa93} of a network, defined as the  sum of effective resistances over all vertex pairs, can be used as a unifying indicator to measure the interesting quantities associated with different  problems in networks. For example, Kirchhoff index can be used to measure the mean cost of search in a complex network~\cite{FeQuYi14}, robustness of first-order consensus algorithm in noisy networks~\cite{PaBa14}, the global utility of social recommender systems~\cite{WoLiCh16}, among others. Notwithstanding the relevance of the Kirchhoff index in various applications, there is a disconnect between this notion  and efficiency of algorithms for estimating it.

The main purpose of this paper  is to develop an edge centrality notion that not only has good discriminating power, but also can be evaluated using algorithms with good provable performances. For a connected graph, the popular Kirchhoff index follows Rayleigh's monotonicity law~\cite{ElSpVaJaKo11}. That is, the Kirchhoff index of a graph strictly increases when the weight of any edge is decreased. Based on this property, in this paper, we adopt the Kirchhoff index as an importance measure of edge in undirected weighted connected networks with a positive weight for each edge, which we call Kirchhoff edge centrality. In order to explore the role of an edge $e$, we partially deactivate the edge $e$ by changing its weight $w(e)$ to $\theta w(e)$, where $0<\theta \leq 1/2$ is a small scalar, and compute the Kirchhoff index of the resulting graph. The centrality of edge $e$ is reflected in the Kirchhoff index of the new graph: the larger the Kirchhoff index is, the more important the edge $e$ is. We define two equivalent metrics for edge centrality. One is the
Kirchhoff index of the new graph after edge  deactivation, the other is the difference of the Kirchhoff indices between the new graph and the original graph. For either edge centrality measure, we  give a fast algorithm to compute the $\eps$-approximation  for all the $m$ edges in nearly linear time. Furthermore, based on the Kirchhoff edge centrality index,   we propose a   vertex importance measure, with the centrality of a vertex being defined as  the Kirchhoff index of a new graph,
where all edges incident to it are deactivated,
%sum of  centrality for all edges incident to it,
and provide an efficient algorithm for estimating this new vertex centrality.

%We then give a fast algorithm allowing to compute $\eps$-approximate Kirchhoff edge centrality for all the $m$ edges in $\Otil(m\eps^{-4})$ time, where the $\tilde{O}(\cdot)$ notation hides $\poly(\log n)$ factors.

\subsection{Related Works}

Some edge centrality measures and related algorithms have been proposed.  Here we give a brief introduction to these metrics and their computational complexity. Moreover, we simply describe some work or techniques that partially  motivate this paper or relate to our algorithm.

Edge betweenness is probably  the most popular and most studied measure of edge importance. It measures the probablility that a shortest path between two vertices passes through a given edge. A fast algorithm for exact  computation of edge betweenness was developed by Brandes~\cite{Br01}.  For a graph with $n$ vertices and $m$ edge, the complexity for this efficient technique is $O(n m)$ and $O(nm+n^{2}\log n)$ for unweighted graphs and weighted graphs, respectively. In order to speed up the computation,  some approximate algorithms  have been proposed~\cite{BaKiMaMi07,BrPi07,GeSaSc08}.  All these approximate approaches aim at reducing the computation of shortest paths in different ways, without providing approximation guarantees.

Another edge importance measure is spanning edge centrality  first introduced in~\cite{TeMoCaRaFa13}.  The spanning edge centrality of an edge is equal to probability that the edge is used in a randomly selected spanning tree.  The best known exact algorithm has a running time $O(mn^{3/2})$. In order to compute  spanning edge centrality for massive networks, two fast
approximation algorithms~\cite{MaGaKoTe15, HaAkYo16} have been designed,  both having   theoretical guarantees
on their accuracy.

A third  measure for edge importance  is  current-flow centrality introduced by Brandes and Fleischer~\cite{BrFl05}. An edge has relatively significant importance, if it participates in many short paths
connecting pairs of vertices.  Brandes and Fleischer~\cite{BrFl05} provided an algorithm  with time complexity  $O(mn^{3/2}\log n)$, which can actually be dropped to $O(mn\log n)$ as shown in~\cite{MaGaKoTe15}.

Both  spanning edge centrality  and  current-flow centrality are closely related to effective resistance~\cite{MaGaKoTe15}.  In fact, the Kirchhoff edge centrality we propose belongs to the same class of \emph{electrica}l centrality measures. Moreover, this is the first definition of a global notion of centrality that can be provably approximated in nearly-linear time, which means that resistance based edge centrality for graphs may actually be easier to compute than other centrality measures  based on discrete structures, e.g. triangles or shortest paths.   %), or the words of 1.1, electrical centrality may be easier to evaluate than combinatorial centrality.

As many previous theoretical studies~\cite{BhHeNaTs15,KaPr17}, our work is also motivated by graph mining applications. In~\cite{BhHeNaTs15}, a streaming algorithm was developed for analyzing large-scale rapidly-changing graphs, which  maintains densest subgraphs in one pass and achieves time and space efficiency, whereas in our case,  effective  resistances are persevered under updates.  All the notions (dense subgraph~\cite{BhHeNaTs15}, triangle~\cite{KaPr17}, and  Kirchhoff  centrality) studied before or in the present paper have vast applications in network analysis, and their related computational challenges  fall within the scope of computer theory.

Our algorithms, in particular the resistance maintenance routines,
closely buid upon the sketching based inverse maintenance routine
from~\cite{LeSiWo15} and the computation of multiple partial
states of Gaussian eliminations from~\cite{DKP+17}.
The former maintains the inverse of a matrix under updates,
and is a critical routine for many graph
algorithms~\cite{San04,LS15, HaXu16}.
However, this often leads to dense matrices,
and we combine it with techniques from
graph sparsification~\cite{SS11,AbKuKoKrPe16,KyPaPeSa17,LeSu17}
to obtain our nearly-linear running times.
The additional need to maintain dot-products against
arbitrary vectors also leads us to incorporate iterative methods
in our routine for approximating the vertex Kirchoff centrality.
This demonstrates the robustness of our algorithm in combining
two different ways (Johnson-Lindenstrauss lemma and Schur complements)
of computing effective resistances.
We believe our results can be extended to provide more access to even
more graph quantities motivated by practical problems on graphs.

\subsection{Our Results}

For a graph $G$, we write $G\bsk e$ to denote the graph obtained from $G$
by deactivating edge $e$, i.e., decreasing the weight of  $e$  from $w(e)$ to $\theta w(e)$ for
some small $0 < \theta \leq 1/2$.
Let $\LL $ be  the Laplacian matrix of $G$, and let $\LL\bsk e$ denote the
Laplacian matrix of $G\bsk e$.
Then we can define two metrics for $\theta$-Kirchhoff centrality of an edge $e$,
denoted by $\mathcal{C}_\theta(e)$ and $\mathcal{C}^{\Delta}_\theta(e)$,
respectively.
$\mathcal{C}_\theta(e)$ is the Kirchhoff index of the graph $G\bsk e$, i.e.,
the sum of effective resistances over all vertex pairs in $G\bsk e$,
while  $\mathcal{C}^{\Delta}_\theta(e)$ is the difference between
the Kirchhoff indices of graph $G\bsk e$ and graph $G$.
%{\color{red}
Let $\Kf{G}$ denote the Kirchhoff index of graph $G$, then we have $\mathcal{C}_\theta(e) = \Kf{G\bsk e}$
and $\mathcal{C}^\Delta_\theta(e) = \Kf{G\bsk e} - \Kf{G}$.
%}
%$\mathcal{C}_\theta(e) = \sum\nolimits_{u,v\in V} \mathcal{R}^{G\bsk e}_\mathrm{eff}(u,v)$.

%Since a change to the weight of one edge of the graph my affect the effective resistances between all vertex pairs,

As has been shown in~\cite{ElSpVaJaKo11}, the Kirchhoff index of a graph equals $n$ times $\trace{\LL^\dag}$,
where $\LL^\dag$ is the pseudoinverse of the graph's Laplacian matrix.
Thus, we have $\mathcal{C}_\theta(e) = n\trace{\Lke^\dag}$
%{\color{red}
and $\mathcal{C}^{\Delta}_\theta(e) = n\trace{\Lke^\dag} - n \trace{\LL^\dag}$.
%}
To compute the exact value of $\theta$-Kirchhoff centrality for each edge,
a naive algorithm
would invert the matrix $\LL\bsk e$ for all $e\in E$.
Since a single inversion takes $O(n^\omega)$ time, where $\omega \approx 2.373$ is the matrix multiplication constant~\cite{wi12},
%time\footnote{Here $\omega \approx 2.373$ is the matrix multiplication constant.},
the naive algorithm runs in $O(n^\omega m)$ time for all the $m$ edges, which makes it untractable for large networks.

In this paper, we consider the scenario in which only approximate
values of $\theta$-Kirchhoff centrality are needed.
Such approximations are acceptable in many cases because
we only need to estimate relative importance of edges.
We give a randomized algorithm $\ECComp$ that computes
$\eps$-approximate Kirchhoff edge centrality $\mathcal{C}_\theta(e)$
for all the $m$ edges in $\tilde{O}(m\eps^{-4})$ time,
and a randomized algorithm $\ECComptwo$ that computes
$\eps$-approximate Kirchhoff edge centrality $\mathcal{C}^{\Delta}_\theta(e)$
for all the $m$ edges in $\tilde{O}(m \theta^{-2} \epsilon^{-2})$ time.
The key ingredients of algorithm $\ECComp$ are
Schur complements and Cholesky factorizations, 
which have been used in
various applications, such as solving linear systems in Laplacians~\cite{KLP+16,KS16}
and counting and sampling spanning trees~\cite{DKP+17,DPPR17}.
And the key technique for algorithm $\ECComptwo$ is the combination
of sketching with the Sherman-Morrison formula~\cite{ShMo50} from
efficient maintenances of matrix inverses for optimization~\cite{LeSiWo15}.

The performance of the algorithm $\ECComp$ is characterized in the following theorem.

%We present a randomized approximation algorithm $\ECComp$ that solves Problem~\ref{prob}
%with high probability
%\footnote{We say an event occurs with high probability if the event occurs with probability at least $1 - 1/\poly(n)$.}
%in nearly-linear time
%\footnote{We say a graph algorithm runs in nearly-linear time if the algorithm runs in $O(m\cdot \poly\log(n))$ time.}.

\begin{restatable}[]{theorem}{thmedgeCentComp}
\label{lem:edgeCentComp}
	Given a connected undirected graph $G = (V,E)$ with $n$ vertices, $m$ edges, positive edge weights
  	$w : E \to \rea_{+}$, %and associated Laplacian $\LL$,
  	and scalars $0 < \theta \leq 1/2$, $0<\eps\leq1/2$,
  	the algorithm $\ECComp(G = (V,E), w, \theta,\epsilon)$ returns a set of pairs
  	$\hat{C} = \{ (e,\hat{c}_e) \mid e \in E \}$.
  	With high probability, the following statement holds: For $\forall e \in E$,
	\begin{align}
		\mathcal{C}_\theta(e) \approx_{\eps} \hat{c}_e,
	\end{align}
	where
	\[
		\mathcal{C}_\theta(e) = \sum\limits_{u,v \in V} \mathcal{R}_\mathrm{eff}^{G\bsk e}(u,v)
	\]
 is the sum of effective resistances $\mathcal{R}_\mathrm{eff}^{G\bsk e}(u,v)$ over all vertex pairs
 $u$ and $v$ in  graph  $G\bsk e$.
	%Here $\mathcal{R}_\mathrm{eff}^{\Lke}(u,v)$ is the effective resistance
	%between $u$ and $v$ in the graph corresponding to $\Lke$, and
%and	$\Lke$ is the  Laplacian  for $G\bsk e$ obtained from $L$ by decreasing the weight of edge $e$ from $w(e)$ to $\theta w(e)$, i.e.,
%	\[
%		\Lke = \LL - (1 - \theta) w(e)\bb_e\bb_e^\top.
%	\]
	The total running time of this algorithm is bounded by
	$O(m \eps^{-4} \log^2 m \log^7 n \operatorname{polyloglog}(n))$.% = \tilde{O}(m\eps^{-4})$.
	%$O(m \eps^{-2} \log^6 m \operatorname{polyloglog}(m) + m\eps^{-4}\log n \log^4 m) = \tilde{O}(m\eps^{-4})$.% + n\eps^{-6}\log n)$.
\end{restatable}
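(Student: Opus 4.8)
The plan is to reduce the trace $\trace{\Lke^\dag}$ to a single global scalar plus a per-edge correction that can be read off from a small number of Laplacian solves and sketches. Write $\bb_e$ for the signed incidence vector of $e=(a,b)$ and $\ell_e := w(e)\,\mathcal{R}_{\mathrm{eff}}^{G}(a,b)\in(0,1]$ for its leverage score, so that $\LL\bsk e = \LL - (1-\theta)w(e)\,\bb_e\bb_e^\top$. Applying the Sherman--Morrison identity to the Moore--Penrose pseudoinverse (valid here since the rank-one update preserves the kernel $\one$), taking the trace, and using $\one^\top\LL^\dag=\veczero$ to simplify the resulting sum of squared potential differences, gives the exact identity
\[
	\mathcal{C}_\theta(e) \;=\; n\,\trace{\Lke^\dag} \;=\; \Kf{G} \;+\; \frac{n\,(1-\theta)\,w(e)\,\norm{\LL^\dag\bb_e}^2}{1-(1-\theta)\ell_e}\,.
\]
Since $\Kf{G}$ and the correction term are both nonnegative and $n(1-\theta)w(e)$ is known exactly, it suffices to produce $(1\pm O(\eps))$-multiplicative estimates of (i) $\Kf{G}$; (ii) $\norm{\LL^\dag\bb_e}^2 = \bb_e^\top(\LL^\dag)^2\bb_e$ for every $e$; and (iii) the denominator $1-(1-\theta)\ell_e = \theta\ell_e + (1-\ell_e)$ for every $e$, which is again a sum of nonnegative terms.

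Quantities (i) and (ii) are handled by sketching over the whole graph. For (i), Hutchinson's estimator with $\Otil(\eps^{-2})$ random $\pm1$ queries estimates $\trace{\LL^\dag}=\tfrac1n\Kf{G}$, each query being one Laplacian solve. For (ii), a Johnson--Lindenstrauss sketch $\QQ\in\rea^{\Otil(\eps^{-2})\times n}$ satisfies $\norm{\QQ\LL^\dag\bb_e}^2 \approx_{\eps} \norm{\LL^\dag\bb_e}^2$ simultaneously for all $m$ edges; one forms $\QQ\LL^\dag$ by $\Otil(\eps^{-2})$ Laplacian solves, and for each $e=(a,b)$ reads off $(\QQ\LL^\dag)_{:,a}-(\QQ\LL^\dag)_{:,b}$ and squares its norm in $O(\eps^{-2})$ time. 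The leverage scores $\ell_e$ appearing in (iii) are obtained the same way via the Spielman--Srivastava effective-resistance sketch. Each of these costs $\Otil(m\eps^{-2})$ with a nearly-linear-time Laplacian solver.

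The crux is (iii) for \emph{near-bridge} edges: as $\ell_e\to1$ the denominator shrinks to $\theta$, so a plain sketch of $\ell_e$ -- which pins down $1-\ell_e$ only to an additive $\pm\Theta(\eps)$ -- is useless, and naively boosting its accuracy would inject a spurious $\theta$-dependence into the running time. The remedy is the batched recursive Schur-complement machinery of~\cite{DKP+17}, built on approximate partial Cholesky~\cite{KLP+16,KS16}: group the edges into a balanced binary tree, and for each batch $F$ with endpoint set $S$ compute an approximate Schur complement onto $S$. Because deactivating an edge with both endpoints in $S$ commutes with the Schur complement onto $S$, the leverage score $\ell_e$ is invariant along the entire root-to-leaf branch, and at the constant-size base case $\ell_e$ -- and hence $1-\ell_e$ -- is computed essentially exactly by direct linear algebra. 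The only error on $1-\ell_e$ then comes from the $O(\log m)$ spectral approximations along the branch; since a $(1\pm\gamma)$-spectral approximation of $G\setminus e$ (keeping $e$ at its true weight) perturbs $1-\ell_e = \big(1 + w(e)\,\mathcal{R}_{\mathrm{eff}}^{G\setminus e}(a,b)\big)^{-1}$ only by a multiplicative $(1\pm\gamma)$, choosing $\gamma=\Theta(\eps/\log m)$ yields a multiplicative $(1\pm O(\eps))$ estimate. Bridges are detected up front and handled by an elementary recursion over the $2$-edge-connected components, in which $\Kf{G\bsk e}$ splits into the Kirchhoff indices of the two sides plus an explicit cross term. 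Combining the three estimates through the displayed identity, rescaling $\eps$ by a constant, and union-bounding the $O(m)$ randomized sub-estimates over the recursion tree gives the stated with-high-probability guarantee.

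The main obstacle is precisely item (iii): decoupling ill-conditioning in $\eps$ from ill-conditioning in $\theta$, which is what forces the Schur-complement recursion rather than a single global sketch, together with the separate handling of bridges. For the running time, the recursion has $O(\log m)$ levels with $2^j$ batches at level $j$ on graphs of $\Otil(m/2^j)$ edges; because the recursion operates on the \emph{sparsified} Schur complements themselves, these must be $(1\pm\Theta(\eps/\log m))$-approximations and hence carry $\Otil((m/2^j)\eps^{-2})$ edges, and running the $\Otil(\eps^{-2})$ sketch/solve operations on each batch costs $\Otil((m/2^j)\eps^{-4})$; summing over the $2^j$ batches and the $O(\log m)$ levels yields the $\Otil(m\eps^{-4})$ bound, with the remaining $\log^2 m\,\log^7 n\,\operatorname{polyloglog}(n)$ factors accounting for the approximate-Cholesky solvers and for iterating them to the required precision. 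The secondary technical burden is the charging argument -- the analogue of the multi-level bookkeeping in~\cite{DKP+17} -- showing that the correction terms accumulated across the $O(\log m)$ levels blow up neither the error (beyond $\poly(\log m)$) nor the total work (beyond $\Otil(m)$).
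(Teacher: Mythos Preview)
Your route is correct but genuinely different from the paper's. The paper's $\ECComp$ never invokes Sherman--Morrison: it applies Hutchinson's estimator directly to $\trace{(\LL\bsk e)^\dag}$, reducing the problem to computing $\zz^\top(\LL\bsk e)^\dag\zz$ for all $e$ and each of $O(\eps^{-2}\log n)$ random $\zz$'s, and solves this via the recursive approximate-Schur algorithm $\quadEst$ (which propagates the vector $\zz$ through the Cholesky factors at every level). The two $\eps^{-2}$ factors---one from Hutchinson, one from the $\Theta(\eps/\log m)$-accurate sparsified Schur complements---multiply to give the $\eps^{-4}$. Your decomposition $\mathcal{C}_\theta(e)=\Kf{G}+n(1-\theta)w(e)\norm{\LL^\dag\bb_e}^2/(1-(1-\theta)\ell_e)$ is exactly what the paper uses for its \emph{second} algorithm $\ECComptwo$ (Theorem~\ref{lem:edgebyjl}), but there the denominator is handled by sketching $\ell_e$ to accuracy $\Theta(\theta\eps)$, yielding the $\theta^{-2}$ dependence. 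Your key innovation---writing the denominator as $\theta\ell_e+(1-\ell_e)$ and estimating $1-\ell_e$ multiplicatively via the identity $1-\ell_e=(1+w(e)\mathcal{R}^{G\setminus e}_{\mathrm{eff}}(a,b))^{-1}$ inside the recursive Schur framework---is sound, because (as in the paper's Lemma~\ref{lem:approxRecursion}) removing the entire batch before calling $\partialChol$ and adding it back keeps $e$'s weight exact in the approximate Schur complement, so $\mathcal{R}^{G\setminus e}_{\mathrm{eff}}$ (hence $1-\ell_e$) is preserved to a multiplicative $(1\pm O(\eps))$ along the branch. This buys you $\theta$-independence without the outer Hutchinson loop over the Schur recursion, and in fact your recursion for $1-\ell_e$ alone should cost only $\tilde O(m\eps^{-2})$---your stated $\eps^{-4}$ seems to come from an unnecessary ``$\tilde O(\eps^{-2})$ sketch/solve operations per batch'' that your algorithm does not actually perform. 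Two minor points: the separate bridge recursion over $2$-edge-connected components is not needed (for a bridge the Sherman--Morrison denominator is exactly $\theta$, and your Schur recursion returns $1-\ell_e=0$ since the approximate Schur complement of $G\setminus e$ onto $\{a,b\}$ is spectrally dominated by the zero matrix); and, as in the paper, one must ensure $\partialChol$ is invoked only on connected inputs when the batch removal disconnects the graph, which both proofs gloss over.
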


The proof of this theorem appears in Section~\ref{sec:main}.

In Theorem~\ref{lem:edgeCentComp}, $\theta$ is arbitrary and can even depend on $n$, e.g. $1/n$.
When $\theta$ is constant,
we can give a simpler algorithm $\ECComptwo$ that approximates
$\mathcal{C}_\theta^\Delta$-Kirchhoff edge centrality for all $m$ edges in $\tilde{O}(m\theta^{-2}\eps^{-2})$ time.
The idea is to use the Sherman-Morrison formula, which gives a fractional expression of the difference
between $\LL^\dag$ and $\kh{\LL\bsk e}^\dag$, where we can approximate the numerator by
the Johnson-Lindenstrauss lemma, and the denominator by estimating effective resistances.
The technique is similar to the approach in~\cite{LeSiWo15}.

\begin{restatable}[]{theorem}{thmedgebyjl}
\label{lem:edgebyjl}
	Given a connected undirected graph $G = (V,E)$ with $n$ vertices, $m$ edges, positive edge weights
  	$w : E \to \rea_{+}$, %and associated Laplacian $\LL$,
  	and scalars $0 < \theta \leq 1/2$, $0<\eps\leq1/2$,
  	the algorithm $\ECComptwo(G = (V,E), w, \theta,\epsilon)$ returns a set of pairs
  	$\hat{C} = \{ (e,\hat{c}^\Delta_e) \mid e \in E \}$.
  	With high probability, the following statement holds: For $\forall e \in E$,
	\begin{align}
		\mathcal{C}_\theta^\Delta(e) \approx_{\eps} \hat{c}^\Delta_e,
	\end{align}
	where
	\[
		\mathcal{C}_\theta^\Delta(e) = \Kf{G\bsk e} - \Kf{G}.% = \sum\limits_{u,v \in V} \mathcal{R}_\mathrm{eff}^{G\bsk e}(u,v)
	\]
% is the sum of effective resistances $\mathcal{R}_\mathrm{eff}^{G\bsk e}(u,v)$ over all vertex pairs
% $u$ and $v$ in  graph  $G\bsk e$.
	The total running time of this algorithm is bounded by
	$O(m \theta^{-2} \eps^{-2} \log^{2.5} n \log(1/\eps) \operatorname{polyloglog}(n))$.
\end{restatable}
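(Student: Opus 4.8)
The plan is to turn $\mathcal{C}_\theta^\Delta(e)$ into an explicit rational quantity via the Sherman--Morrison formula, and then approximate its numerator and denominator separately by Johnson--Lindenstrauss sketches evaluated through a nearly-linear-time Laplacian solver. \emph{Step 1 (closed form).} Write $\LL=\sum_{e'\in E}w(e')\bb_{e'}\bb_{e'}^\top$, where $\bb_{e'}$ is the signed incidence vector of $e'$; deactivating $e$ is the rank-one downdate $\LL\bsk e=\LL-(1-\theta)w(e)\,\bb_e\bb_e^\top$. Since $\theta>0$, the graph $G\bsk e$ is still connected, so $\LL\bsk e$ has the same kernel $\mathrm{span}(\one)$ as $\LL$ (using $\bb_e\perp\one$), and on $\one^\perp$ both matrices are invertible; the downdate maps $\one^\perp$ to itself, so ordinary Sherman--Morrison applies there (it is valid because $1-(1-\theta)w(e)\,\bb_e^\top\LL^\dag\bb_e\ge\theta>0$, see Step 2), giving
\[
\Lke^\dag \;=\; \LL^\dag + \frac{(1-\theta)\,w(e)\,\LL^\dag\bb_e\bb_e^\top\LL^\dag}{1-(1-\theta)\,w(e)\,\bb_e^\top\LL^\dag\bb_e}.
\]
Taking traces and using $\Kf{G}=n\trace{\LL^\dag}$, $\trace{\LL^\dag\bb_e\bb_e^\top\LL^\dag}=\norm{\LL^\dag\bb_e}^2$, and $\bb_e^\top\LL^\dag\bb_e=\mathcal{R}_\mathrm{eff}^{G}(e)$, we obtain
\[
\mathcal{C}_\theta^\Delta(e)\;=\;\frac{n\,(1-\theta)\,w(e)\,\norm{\LL^\dag\bb_e}^2}{\,1-(1-\theta)\,w(e)\,\mathcal{R}_\mathrm{eff}^{G}(e)\,}.
\]

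\emph{Step 2 (the denominator never degenerates).} Let $\tau_e:=w(e)\,\mathcal{R}_\mathrm{eff}^{G}(e)$ be the leverage score of $e$; since $\tau_e\in[0,1]$, the denominator $D_e:=1-(1-\theta)\tau_e$ satisfies $\theta\le D_e\le1$. Hence it suffices to approximate $\tau_e$ to \emph{relative} accuracy $O(\eps\theta)$: if $\widetilde\tau_e\approx_{\eps\theta/4}\tau_e$, then $\bigl|\,1-(1-\theta)\widetilde\tau_e-D_e\,\bigr|\le(1-\theta)(\eps\theta/4)\tau_e\le(\eps/4)\theta\le(\eps/4)D_e$, so the approximate denominator lies within a $(1\pm\eps/4)$ factor of $D_e$. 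Likewise $\norm{\LL^\dag\bb_e}^2>0$ and $\tau_e>0$ for every edge of a connected graph, so a multiplicative approximation of the whole fraction is well defined, and composing a $(1\pm\eps/4)$ numerator approximation with the $(1\pm\eps/4)$ denominator approximation yields $\hat c^\Delta_e\approx_{\eps}\mathcal{C}_\theta^\Delta(e)$.

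\emph{Step 3 (sketching both pieces and bounding the cost).} For the numerator, pick a random $\pm1$ (or Gaussian) matrix $\QQ\in\rea^{k\times n}$ with $k=O(\eps^{-2}\log m)$; by the Johnson--Lindenstrauss lemma, with high probability $\norm{\QQ\LL^\dag\bb_e}^2\approx_{\eps/4}\norm{\LL^\dag\bb_e}^2$ simultaneously for all $m$ edges. For the resistances, use $\mathcal{R}_\mathrm{eff}^{G}(e)=\norm{\WW^{1/2}\BB\,\LL^\dag\bb_e}^2$ (with $\BB$ the incidence matrix, $\WW$ the weight matrix) and a second sketch $\QQ'$ with $k'=O((\eps\theta)^{-2}\log m)$ rows to get $\widetilde\tau_e:=w(e)\norm{\QQ'\WW^{1/2}\BB\,\LL^\dag\bb_e}^2\approx_{\eps\theta/4}\tau_e$ for all $e$. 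Both sketched matrices $\QQ\LL^\dag$ and $\QQ'\WW^{1/2}\BB\,\LL^\dag$ are computed by first forming the left factors ($O((k+k')m)$ time, since $\WW^{1/2}\BB$ has $2m$ nonzeros) and then applying a nearly-linear-time Laplacian solver once per row, i.e. $k+k'=O(\eps^{-2}\theta^{-2}\log n)$ solves at $\Otil(m)$ each; afterwards $\QQ\LL^\dag\bb_e$ and $\QQ'\WW^{1/2}\BB\LL^\dag\bb_e$ are each a difference of two columns of these precomputed matrices, so every estimate $\hat c^\Delta_e:=n(1-\theta)w(e)\norm{\QQ\LL^\dag\bb_e}^2/\bigl(1-(1-\theta)\widetilde\tau_e\bigr)$ is read off in $O(k+k')$ additional time. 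Summing, the running time is dominated by the $O(\eps^{-2}\theta^{-2}\log n)$ Laplacian solves; substituting the per-call cost of the solver and its dependence on the (polynomially small) solve tolerance gives the claimed $O(m\theta^{-2}\eps^{-2}\log^{2.5}n\log(1/\eps)\operatorname{polyloglog}(n))$ bound. A union bound over the two $1/\poly(n)$ sketch-failure events gives the high-probability guarantee.

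\emph{Main obstacle.} The one genuinely delicate point is the error analysis of the quotient, and in particular obtaining a $\theta^{-2}$ — and not worse — dependence: this hinges on the leverage-score bound $\tau_e\le1$, which alone converts a relative sketching error on $\mathcal{R}_\mathrm{eff}^{G}(e)$ into a relative error on $D_e$ that is only a factor $1/\theta$ larger, preventing blow-up when $D_e$ is as small as $\theta$. A secondary technicality is that the solver returns $\QQ\LL^\dag+\EE$ and $\QQ'\WW^{1/2}\BB\LL^\dag+\EE'$ rather than the exact products; one must set the solve tolerance to $1/\poly(n)$ so that these perturbations only cost another $(1\pm\eps/4)$ factor in each estimate, which affects the logarithmic factors but not the leading $\Otil(m\theta^{-2}\eps^{-2})$ term.
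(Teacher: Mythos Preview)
Your proposal is correct and follows essentially the same route as the paper: Sherman--Morrison gives $\mathcal{C}_\theta^\Delta(e)=n(1-\theta)w(e)\norm{\LL^\dag\bb_e}^2/\bigl(1-(1-\theta)w(e)\mathcal{R}_{\mathrm{eff}}(e)\bigr)$, the numerator is sketched via Johnson--Lindenstrauss / Hutchinson applied to $\LL^\dag$ through a fast Laplacian solver, and the denominator is handled by the Spielman--Srivastava effective-resistance sketch with accuracy $O(\eps\theta)$ so that the $\theta\le D_e\le 1$ bound yields a $(1\pm O(\eps))$ error on $D_e$. The paper phrases the numerator as a trace estimate $\trace{\LL^\dag\bb_e\bb_e^\top\LL^\dag}$ via Hutchinson rather than as a JL estimate of $\norm{\LL^\dag\bb_e}^2$, but since this trace is rank one these are literally the same computation; your error bookkeeping for the solver and the quotient also matches theirs (their Lemmas~5.3--5.6).
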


The proof of this theorem appears in Section~\ref{sec:edgebyjl}.
Its advantage is that for moderate values of $\theta$,
it can obtain a more accurate estimate of $\mathcal{C}_\theta^\Delta(e)$
even if $\Kf{G}$
%\sout{\color{blue} $\mathcal{C}_\theta(e)$}
is large.
However, when $\theta$ is small, the removal of high effective resistance
edges can cause a large error in this routine, and we are not guaranteed
to even get a good estimate of $\mathcal{C}_\theta(e)$ by adding this
result to an estimate of $\Kf{G}$.
%\sout{\color{blue} $\mathcal{C}_\theta(e)$}.
As a result we believe both of our algorithms for estimating Kirchhoff edge
 centrality are of interest, and complement each other.

Based on the same idea of the definition for $\mathcal{C}_\theta^\Delta(e)$,
we can define a centrality measure for any vertex $v$,
which is the difference of Kirchhoff indices between the new
graph $G\bsk \Ev{v}$ and original graph $G$,
where $G\bsk \Ev{v}$ is obtained from $G$ by multiplying
the weights of all edges incident with $v$ by $\theta$.
We write $\mathcal{C}_\theta^\Delta(v)$ to denote the $\theta$-Kirchhoff vertex centrality of $v$.
In this situation, the matrix perturbation caused by removing the
neighborhood of $v$ is no longer rank 1, and we need to leverage
the approximate Schur complement routines from Section~\ref{sec:approx}
to compute these intermediate matrices.
Specifically, for constant $\theta$, we give an algorithm $\VCComp$ that
approximates $\theta$-Kirchhoff vertex centrality
for all $n$ vertices in $\tilde{O}(m\theta^{-2.5}\eps^{-4})$ time.

\begin{restatable}[]{theorem}{thmvertexbyjlschur}
\label{lem:vertexbyjlschur}
	Given a connected undirected graph $G = (V,E)$ with $n$ vertices, $m$ edges, positive edge weights
  	$w : E \to \rea_{+}$, %and associated Laplacian $\LL$,
  	and scalars $0 < \theta \leq 1/2$, $0<\eps\leq1/2$,
  	the algorithm $\VCComp(G = (V,E), w, \theta,\epsilon)$ returns a set of pairs
  	$\hat{C} = \{ (v,\hat{c}^\Delta_v) \mid v \in V \}$.
  	With high probability, the following statement holds: For $\forall v \in V$,
	\begin{align}
		\mathcal{C}_\theta^\Delta(v) \approx_{\eps} \hat{c}^\Delta_v,
	\end{align}
	where
	\[
		\mathcal{C}_\theta^\Delta(v) = \Kf{G\bsk \Ev{v}} - \Kf{G}
		% = \sum\limits_{u,v \in V} \mathcal{R}_\mathrm{eff}^{G\bsk e}(u,v)
	\]
	and $\Ev{v} = \setof{(u,v) \,|\, u\sim v}$ is the set of edges incident with $v$.
% is the sum of effective resistances $\mathcal{R}_\mathrm{eff}^{G\bsk e}(u,v)$ over all vertex pairs
 %$u$ and $v$ in  graph  $G\bsk e$.
	The total running time of this algorithm is bounded by
	%$O(m \theta^{-2.5} \eps^{-4} \log^2 m \log^4 n \log (1/\eps) +
	%m\theta^{-2}\eps^{-4}\log^2 m \log^7 n \operatorname{polyloglog}(n)) $.
	$O(m(\theta^{-2}\eps^{-4}\log^9 n + \theta^{-2.5}\eps^{-4}\log^6 n \log(1/\eps)) \operatorname{polyloglog}(n))$.
\end{restatable}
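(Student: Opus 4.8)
The plan mirrors the strategy behind Theorem~\ref{lem:edgeCentComp}: reduce Kirchhoff indices to traces of Laplacian pseudoinverses, traces to quadratic forms via a Johnson--Lindenstrauss / Hutchinson estimator, and quadratic forms to \emph{local} problems via partial Cholesky factorization. The one genuinely new issue is that the perturbation at a vertex $v$ is no longer rank one but affects the whole star at $v$, so the local problem lives on the closed neighborhood of $v$ and must be handled with \emph{approximate} Schur complements (the routine \approxSchur of Section~\ref{sec:approx}). \textbf{Step 1 (trace reduction and Hutchinson sampling).} Using $\Kf{H}=n\,\trace{\LL_H^\dag}$ we get $\mathcal{C}_\theta^\Delta(v)=n\,\trace{\MM_v}$ with $\MM_v:=(\LL\bsk\Ev{v})^\dag-\LL^\dag$. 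Since $\LL\bsk\Ev{v}\preceq\LL$ and $G$ is connected, $\MM_v\succeq\veczero$, so $\trace{\MM_v}\ge 0$ and a $(1\pm\eps)$ estimate is meaningful. Draw $M=O(\eps^{-2}\log n)$ i.i.d.\ random $\pm1$ vectors $\zz_1,\dots,\zz_M$; by Hutchinson's identity $\expec{}{\zz_i^\top\MM_v\zz_i}=\trace{\MM_v}$, and the concentration argument used for Theorem~\ref{lem:edgeCentComp} shows $\frac1M\sum_i\zz_i^\top\MM_v\zz_i\Approx{\eps}\trace{\MM_v}$ for all $n$ vertices simultaneously with high probability. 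The catch: solving in $\LL$ and $\LL\bsk\Ev{v}$ separately to relative error $\eps$ and subtracting is \emph{not} good enough, since $\trace{\MM_v}$ may be far smaller than $\trace{\LL^\dag}$; Step 2 produces the difference directly.

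\textbf{Step 2 (localization to the closed neighborhood).} Fix $v$, let $F_v=\setof{v}\cup\setof{u:u\sim v}$ and $C_v=V\setminus F_v$. Every star edge at $v$ lies inside $F_v$, so $\LL$ and $\LL\bsk\Ev{v}$ differ only in the $F_v\times F_v$ block, with the $C_v\times C_v$ and $C_v\times F_v$ blocks untouched. Hence eliminating $C_v$ commutes with the $\theta$-scaling: writing $\SS_v=\SC(\LL,C_v)$ for the Schur complement onto $F_v$, we have $\SC(\LL\bsk\Ev{v},C_v)=\SS_v\bsk\Ev{v}$. The standard block-factorization identity for the pseudoinverse gives, for each sample $\zz_i$, a vector $\yy_{v,i}\in\rea^{F_v}$ and a scalar $g_{v,i}$ — both built only from the $C_v\times C_v$ and $F_v\times C_v$ blocks and $\zz_i$, hence unaffected by the star scaling — with $\zz_i^\top\LL^\dag\zz_i=\yy_{v,i}^\top\SS_v^\dag\yy_{v,i}+g_{v,i}$ and $\zz_i^\top(\LL\bsk\Ev{v})^\dag\zz_i=\yy_{v,i}^\top(\SS_v\bsk\Ev{v})^\dag\yy_{v,i}+g_{v,i}$. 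Subtracting, $\zz_i^\top\MM_v\zz_i=\yy_{v,i}^\top\big((\SS_v\bsk\Ev{v})^\dag-\SS_v^\dag\big)\yy_{v,i}$, a problem on a Laplacian with only $\deg(v)+1$ vertices in which the two operators differ by a $\theta$-scaling of one star.

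\textbf{Step 3 (producing all local data within a nearly-linear budget).} A separate Schur-complement computation per vertex would cost $\tilde O(nm)$, and $\SS_v$ is dense (elimination fills in a near-clique on the neighbors of $v$). Instead, as in the proof of Theorem~\ref{lem:edgeCentComp}, we recurse over $V$ (equivalently $E$), splitting into halves and reusing the multiple--partial-state partial Cholesky machinery of Section~\ref{sec:approx}; along the recursion we output, for \emph{every} $v$ at once, a sparse spectral approximation $\SStil_v\Approx{\eps'}\SS_v$ with $\tilde O(\deg(v))$ edges (via \approxSchur) together with the vectors $\yy_{v,i}$. Spectral closeness of $\SStil_v$ to $\SS_v$ carries over to $\SStil_v\bsk\Ev{v}$ versus $\SS_v\bsk\Ev{v}$ and distorts both quadratic forms by $(1\pm\eps')$; taking $\eps'$ a small constant times $\eps$ controls the error of their difference. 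Orchestrating this shared-work recursion so that all of the $\SStil_v$ and $\yy_{v,i}$ appear within the target running time — while routing each star, split across the recursion's two halves, to the right subproblem — is the main obstacle.

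\textbf{Step 4 (evaluating the local differences; accounting).} For each $v$ and $i$ we evaluate $\yy_{v,i}^\top\big((\SStil_v\bsk\Ev{v})^\dag-\SStil_v^\dag\big)\yy_{v,i}$, a difference of two nearly equal quadratic forms (much smaller than either term in general), so a black-box relative-error solve in each is inadequate; we compute the difference directly, e.g.\ by a Chebyshev / preconditioned-CG scheme that solves in $\SStil_v\bsk\Ev{v}$ using $\SStil_v$ as preconditioner. Scaling a star by $\theta$ moves eigenvalues by at most a factor $\theta$, so the relevant condition number is $O(1/\theta)$, giving $O(\theta^{-1/2}\log(1/\eps))$ iterations, each costing $\tilde O(\deg(v)\,\theta^{-1})$ (a preconditioner solve plus sketching overhead). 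Summing the $\tilde O(\deg(v))$-scale work over all $n$ vertices and all $M=O(\eps^{-2}\log n)$ samples, plus the cost of Step 3, yields the stated running time, and $\hat c_v^\Delta:=\frac nM\sum_i\yy_{v,i}^\top\big((\SStil_v\bsk\Ev{v})^\dag-\SStil_v^\dag\big)\yy_{v,i}\Approx{\eps}\mathcal{C}_\theta^\Delta(v)$ for all $v$ with high probability by Steps 1--3. I expect Steps 3 and 4 to be the crux: the shared-work Schur-complement recursion, and the accurate evaluation of a difference of near-equal quadratic forms on each small, $\theta$-ill-conditioned graph.
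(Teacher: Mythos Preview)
Your high-level plan (Hutchinson sampling, localization to the closed neighborhood via Schur complement, recursive approximate Schur complements shared across vertices, Chebyshev with $O(\theta^{-1/2})$ iterations) matches the paper's structure. But there is a genuine gap in Step~3, and it is exactly the subtraction problem you correctly flagged in Step~1, now reappearing at the local level through the approximate Schur complement.

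You assert that $\SStil_v\Approx{\eps'}\SS_v$ ``distorts both quadratic forms by $(1\pm\eps')$; taking $\eps'$ a small constant times $\eps$ controls the error of their difference.'' This is false. Even granting $\SStil_v\bsk\Ev{v}\Approx{\eps'}\SS_v\bsk\Ev{v}$ (which already requires the star edges to be preserved exactly in $\SStil_v$; a plain spectral sparsification of $\SS_v$ loses a factor $1/\theta$ under subtraction), you only obtain $\yy^\top\SStil_v^\dag\yy\Approx{\eps'}\yy^\top\SS_v^\dag\yy$ and $\yy^\top(\SStil_v\bsk\Ev{v})^\dag\yy\Approx{\eps'}\yy^\top(\SS_v\bsk\Ev{v})^\dag\yy$ \emph{separately}. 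Their difference can be arbitrarily smaller than either term, so the multiplicative error on the difference is not $O(\eps')$. Your Step~4 acknowledges this issue for the iterative solve but does nothing to repair the damage already inflicted by replacing $\SS_v$ with $\SStil_v$.

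The paper avoids this entirely by applying the Woodbury formula (Lemma~\ref{lem:Woodbury}) \emph{before} any approximation, writing $\mathcal{C}_\theta^\Delta(v)=n(1-\theta)\trace{\LL^\dag\BB_T^\top\WW_T^{1/2}\bigl(\II-(1-\theta)\WW_T^{1/2}\BB_T\LL^\dag\BB_T^\top\WW_T^{1/2}\bigr)^{-1}\WW_T^{1/2}\BB_T\LL^\dag}$ with $T=\Ev{v}$ --- a single positive quadratic form with no subtraction. It then computes $\yy_i\approx\LL^\dag\zz_i$ once globally via a high-accuracy Laplacian solve (so the outer vectors $\WW_T^{1/2}\BB_T\yy_i$ are essentially exact), and uses approximate Schur complements only to access the \emph{middle} leverage-score matrix $\WW_T^{1/2}\BB_T\LL^\dag\BB_T^\top\WW_T^{1/2}=\WW_T^{1/2}\BB_{T}\SC(\LL,F_v)^\dag\BB_{T}^\top\WW_T^{1/2}$ (Fact~\ref{fact:SchurSubset}). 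The approximation error there propagates through $(\II-(1-\theta)\,\cdot\,)^{-1}$ with a $1/\theta$ blow-up (Lemma~\ref{lem:SubtractError}), forcing Schur-complement accuracy $O(\eps\theta)$, not $O(\eps)$; this is precisely the source of the $\theta^{-2}$ in the stated running time. Your localization-first route could be salvaged by applying Woodbury on the local problem and tracking this $1/\theta$ loss, but as written Step~3 does not, and Step~4's ``preconditioned CG in $\SStil_v\bsk\Ev{v}$ using $\SStil_v$'' yields $(\SStil_v\bsk\Ev{v})^\dag\yy$, not the difference, unless you unwind it into exactly the Woodbury expression anyway.
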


The proof of this theorem appears in Section~\ref{sec:vertexbyjlschur}.

}

%\subsection{Empirical Results}

\begin{figure}
	\centering
   \includegraphics[width=0.3\textwidth]{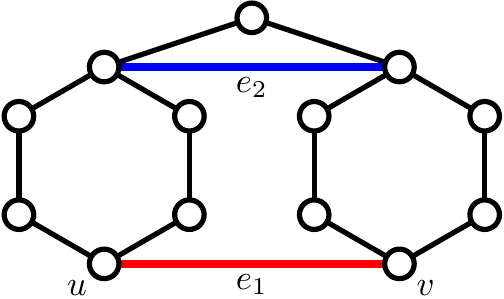}
   \caption{Betweenness cannot distinguish between edges $e_1$ and $e_2$.}
   \label{fig:btw}
\end{figure}

\subsection{Comparison With Other Measures}

In addition to the low computational complexity, the  $\theta$-Kirchhoff centrality is more discriminating than other edge centrality measures, such as edge  betweenness centrality and spanning edge centrality.  For example, in the graph in Figure~\ref{fig:btw},
the importance of edge $e_1$ and edge $e_2$ are different, which can be seen  by intuition. In fact, we can also understand this difference from the influences  when the two edges are deleted.  If   $e_1$ is removed,  the length of shortest path between vertices $u$ and $v$ increases by 6, while  the removal of $e_2$  will increase  the length of shortest path between any pair of vertices by at most 1. However, the betweenness centrality  for   $e_1$ and   $e_2$ are the same,  being equal to  $18$, implying that betweenness centrality cannot differentiate  $e_1$ between  $e_2$. However, these two edges can be  discriminated by the $\theta$-Kirchhoff edge centrality.  Exact  computation shows that the $0.1$-Kirchhoff edge centrality for   $e_1$ and   $e_2$  is  $\mathcal{C}_{0.1}(e_1) = 132.65$ and $\mathcal{C}_{0.1}(e_2) = 112.34$, respectively. Thus, $e_1$ is relatively more important than $e_2$,  which agrees with our human intuition.

\begin{figure}
	\centering
  \includegraphics[width=0.4\textwidth]{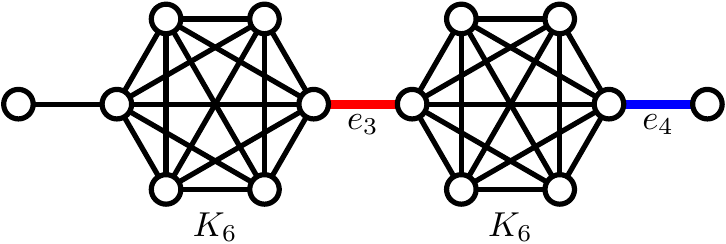}
   \caption{Spanning edge centrality cannot distinguish between edges $e_3$ and $e_4$.}
   \label{fig:spanning}
\end{figure}

We continue to show that the $\theta$-Kirchhoff centrality is also more discriminating than the spanning edge centrality.
By intuition, the importance for the two edges $e_3$ and  $e_4$ of the graph illustrated in Figure~\ref{fig:spanning} are
distinct. Unfortunately,   spanning edge centrality cannot distinguish  these two edges, since their  spanning edge centrality is identical, both equalling $1$.  In contrast, the $0.1$-Kirchhoff edge centrality of these two edges  $e_3$ and  $e_4$ is $\mathcal{C}_{0.1}(e_3) = 467.33$ and $\mathcal{C}_{0.1}(e_4) = 197.33$, respectively.  This implies that  $e_3$ plays a relatively more significant role than  $e_4$, which is consistent with our intuition.

To further show the capability of our $\theta$-Kirchhoff edge centrality to discriminate between different edges, we experimentally compare our measure $\mathcal{C}_\theta^\Delta$ with other metrics, including edge centrality, spanning edge centrality, and current-flow edge centrality. For each measure, we numerically evaluate the importance of each edge for some classic real-world networks\footnote{All data can be found at http://www-personal.umich.edu/\textasciitilde mejn/netdata/} in Table~1. The data sets are from published data mining related papers~\cite{Zac77,Knu93,New06,LSB+03,WS98}. Based on which we then compute the relative standard deviation for each centrality measure (as the authors did in~\cite{BWLM16}), where the relative standard deviation is defined as the standard deviation divided by the average.  Figure~\ref{fig:deviation} shows the relative standard deviation for all the centrality measures. It is always significantly higher for $\theta$-Kirchhoff edge centrality than it is for other measures, meaning that our measure has a better capability to distinguish between different edges.

\begin{table}\label{tabb}
	\centering
	\caption{Some classic real networks.}
	\begin{tabular}{|c|c|c|}
		\hline
		Network name & Number of vertices & Number of edges \\
		\hline
		Karate~\cite{Zac77} & 34 & 78 \\
		\hline
		Lesmis~\cite{Knu93} & 77 & 254 \\
		\hline
		Adjnoun~\cite{New06} & 112 & 425 \\
		\hline
		Dolphins~\cite{LSB+03} & 62 & 159 \\
		\hline
		Celegansneural~\cite{WS98} & 297 & 2148 \\
		\hline
	\end{tabular}
\end{table}

\begin{figure}
    \centering
  \includegraphics[width=0.8\textwidth]{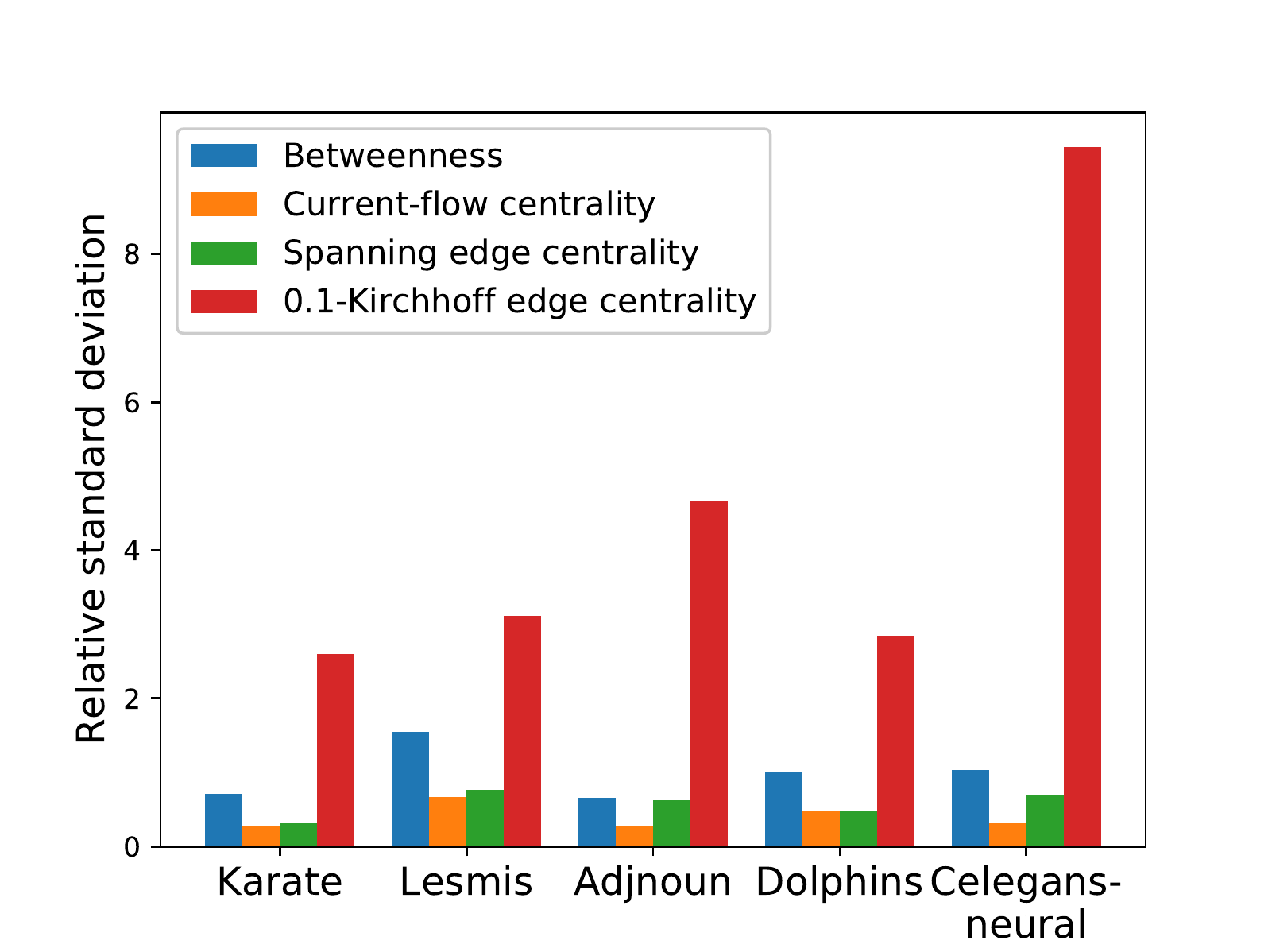}
  \caption{Relative standard deviation for different edge centrality measures.}
	\label{fig:deviation}
\end{figure}

\subsection{Organization}

The remaining part of the paper is organized as follows.
In Section~\ref{sec:prob}, we present the background and formulate the problem of computing $\theta$-Kirchhoff centrality.
In Section~\ref{sec:schur}, we introduce Schur complements and partial Cholesky factorizations,
and a lemma with regard to the performance of the approximate partial Cholesky factorization algorithm in~\cite{DKP+17}.
In Section~\ref{sec:main}, we introduce our algorithm $\ECComp$ that approximates $\mathcal{C}_\theta(e)$.
In Section~\ref{sec:edgebyjl}, we introduce our algorithm $\ECComptwo$ that approximates $\mathcal{C}_\theta^\Delta(e)$.
In Section~\ref{sec:vertexbyjlschur}, we introduce our algorithm $\VCComp$ that approximates $\mathcal{C}_\theta^\Delta(v)$.
In Section~\ref{sec:conclude}, we give our conclusion and discuss
some directions for future works.

\section{Background and the Problem}\label{sec:prob}

\subsection{Multiplicative Approximation of Scalars and Matrices}

We use the notion of $\eps$-approximation in~\cite{PS14}.

Let $a,b\geq 0$ be two nonnegative scalars.
We say $a$ is an $\eps$-approximation of $b$ if
\begin{align}\label{eq:epsApprox}
	\exp(-\eps)\, a \leq b \leq \exp(\eps)\, a.
\end{align}
We write $a \approx_{\eps} b$ to denote Eq.~(\ref{eq:epsApprox}).

For two matrices $\AA$ and $\BB$, we write $\AA \preceq \BB$ to indicate that $\BB - \AA$ is positive semidefinite.
We say $\AA$ is an $\eps$-spectral approximation of $\BB$ if
\begin{align}\label{eq:epsSpecApprox}
	\exp(-\eps)\, \AA \preceq \BB \preceq \exp(\eps)\, \AA.
\end{align}
We write $\AA \approx_{\eps} \BB$ to denote Eq.~(\ref{eq:epsSpecApprox}).

Note that these two relations are symmetric. Namely, $a \approx_{\eps} b$ implies $b \approx_{\eps} a$ and
$\AA \approx_{\eps} \BB$ implies $\BB \approx_{\eps} \AA$.

The following facts are basic properties of $\eps$-approximation:
%\cite{PS14} also gives the following facts about $\eps$-approximation:

\begin{fact}
\label{fact:Approximations}
	For nonnegative scalars $a,b,c,d \geq 0$, positive semidefinite matrices $\AA,\BB,\CC,\DD$,
	\begin{enumerate}
		\item \label{apxstart} if $a \approx_\eps b$, then $a + c \approx_\eps b + c$;
		\item if $a \approx_\eps b$ and $c \approx_\eps d$, then $a + c \approx_\eps b + d$;
		\item if $a \approx_{\eps_1} b$ and $b \approx_{\eps_2} c$, then $a \approx_{\eps_1 + \eps_2} c$;
		\item if $a$ and $b$ are positive such that $a \approx_\eps b$, then $1/a \approx_\eps 1/b$;
		\item \label{apxnum} if $a \approx_\eps b$, then $ac \approx_\eps bc$;
		\item \label{apxmat} if $\AA \approx_\eps \BB$, then $\AA + \CC \approx_\eps \BB + \CC$;
		\item if $\AA \approx_\eps \BB$ and $\CC \approx_\eps \DD$, then $\AA + \CC \approx_\eps \BB + \DD$;
		\item if $\AA \approx_{\eps_1} \BB$ and $\BB \approx_{\eps_2} \CC$, then $\AA \approx_{\eps_1 + \eps_2} \CC$;
		\item if $\AA$ and $\BB$ are positive definite matrices such that $\AA \approx_\eps \BB$,
		then $\AA^{-1} \approx_\eps \BB^{-1}$;
		\item 
\label{part:CompositionMatrix}
if $\AA \approx_\eps \BB$ and $\VV$ is a matrix, then $\VV^\top \AA \VV \approx_\eps \VV^\top \BB \VV$.
		%\item if $\LL$ and $\LH$ are Laplacian matrices corresponding to connected graphs such that $\LL \approx_\eps \LH$,
		%then $\LL^\dag \approx_\eps \LH^\dag$.
		%{\color{red}
		%\item \label{part:CompositionMatrices}
		%if $\AA \approx_\eps \BB$, then $\AA \CC \AA \approx_{2\eps} \BB \CC \BB$
		%}
	\end{enumerate}
\end{fact}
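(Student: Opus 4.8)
The plan is to verify the ten items one at a time, each directly from the defining inequalities \eqref{eq:epsApprox} and \eqref{eq:epsSpecApprox}, using only the elementary facts that $t\mapsto e^{t}$ is increasing and that $e^{-\eps}\le 1\le e^{\eps}$ (taking $\eps\ge 0$). I would first dispatch the five scalar items. For item~\ref{apxstart}, write $a\approx_\eps b$ as $e^{-\eps}a\le b\le e^{\eps}a$; since $c\ge 0$ we also have $e^{-\eps}c\le c\le e^{\eps}c$, and adding the two chains gives $e^{-\eps}(a+c)\le b+c\le e^{\eps}(a+c)$. Item~2 is the same argument with $e^{-\eps}c\le d\le e^{\eps}c$ in place of the trivial bounds on $c$. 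Item~3 (transitivity) follows by multiplying $e^{-\eps_1}a\le b\le e^{\eps_1}a$ by the corresponding chain relating $b$ and $c$ and using $e^{x}e^{y}=e^{x+y}$. Item~4 takes reciprocals of $e^{-\eps}a\le b\le e^{\eps}a$, which is legitimate because $a,b>0$ and reverses each inequality; item~\ref{apxnum} multiplies the chain through by $c\ge 0$. None of these needs anything beyond one-line arithmetic.

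The five matrix items~\ref{apxmat}--\ref{part:CompositionMatrix} go through in exactly the same way, with $\le$ replaced by the Loewner order $\preceq$ and ``nonnegative'' replaced by ``positive semidefinite.'' The point to check is that the operations used above are monotone with respect to $\preceq$: if $\XX\preceq\YY$ then $\XX+\ZZ\preceq\YY+\ZZ$ and $c\XX\preceq c\YY$ for any scalar $c\ge 0$, both immediate from the definition of positive semidefiniteness. Thus item~\ref{apxmat} follows from $e^{-\eps}\CC\preceq\CC\preceq e^{\eps}\CC$ (since $(e^{\eps}-1)\CC\succeq 0$ and $(1-e^{-\eps})\CC\succeq 0$), added to $e^{-\eps}\AA\preceq\BB\preceq e^{\eps}\AA$; item~7 adds the two spectral chains; item~8 chains them multiplicatively just as in the scalar transitivity argument.

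The only two items invoking a genuine structural fact are item~9 and item~\ref{part:CompositionMatrix}. For item~9 I would use the operator antitonicity of inversion: for positive definite $\XX,\YY$, $\XX\preceq\YY$ implies $\YY^{-1}\preceq\XX^{-1}$ (a standard consequence of operator monotonicity of $X\mapsto -X^{-1}$, or directly via congruence by $\YY^{-1/2}$). Applying this to both halves of $e^{-\eps}\AA\preceq\BB\preceq e^{\eps}\AA$ gives $e^{-\eps}\AA^{-1}\preceq\BB^{-1}\preceq e^{\eps}\AA^{-1}$, i.e.\ $\AA^{-1}\approx_\eps\BB^{-1}$; this also recovers the scalar item~4 as the $1\times 1$ case. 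For item~\ref{part:CompositionMatrix} I would use congruence-invariance of the Loewner order: $\XX\preceq\YY$ implies $\VV^\top\XX\VV\preceq\VV^\top\YY\VV$, since for any vector $z$ we have $z^\top\VV^\top(\YY-\XX)\VV z=(\VV z)^\top(\YY-\XX)(\VV z)\ge 0$. Applying this to $e^{-\eps}\AA\preceq\BB\preceq e^{\eps}\AA$ and pulling the scalars $e^{\pm\eps}$ through the congruence yields $e^{-\eps}\VV^\top\AA\VV\preceq\VV^\top\BB\VV\preceq e^{\eps}\VV^\top\AA\VV$.

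I do not expect any real obstacle here: the entire content is in tracking the direction of the inequalities correctly — in particular the reversal in items~4 and~9 — and in observing that adding or scaling by a nonnegative (resp.\ positive semidefinite) quantity is monotone. The two ``named'' lemmas, antitonicity of matrix inversion and congruence-invariance of $\preceq$, are the only external inputs, and both are elementary; everything else is bookkeeping with the factors $e^{\pm\eps}$.
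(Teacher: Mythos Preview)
Your proposal is correct and complete. The paper does not actually prove this statement at all: it is stated as a ``Fact'' (basic properties of $\eps$-approximation) and left without proof, so your elementary verification of each item from the defining inequalities is exactly the routine argument the authors are taking for granted.
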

%{\color{red}
%Since~\ref{apxstart}\,-\,\ref{apxnum} are easy to verify and~\ref{apxmat}\,-\,\ref{part:CompositionMatrix}
%are directly from~\cite{PS14}, we only give the proof of~\ref{part:CompositionMatrices}.
%\begin{proof}[Proof of~\ref{part:CompositionMatrices} of Fact~\ref{fact:Approximations}]
%\quad\\
%We prove this statement by showing that
%once we have $\AA \approx_\eps \BB$,
%$\xx^\top \AA \CC \AA \xx \approx_{2\eps} \xx^\top \BB \CC \BB \xx$ holds for any vector $\xx$.
%For any vector $\xx$, let $\XX = \xx\xx^\top$. Since $\CC$ is positive semidefinite,
%there is a positive semidefinite matrix $\CC^{1/2}$ such that $\CC^{1/2} \CC^{1/2} = \CC$.
%Then,
%we have
%\begin{align*}
%	\xx^\top \AA \CC \AA \xx
%	= \kh{\xx^\top \AA \CC^{1/2}} \kh{\CC^{1/2} \AA \xx}
%	= \trace{\CC^{1/2} \AA \xx \xx^\top \AA \CC^{1/2}}
%	= \trace{\CC^{1/2} \AA \XX \AA \CC^{1/2}},
%\end{align*}
%where the second equality follows by the cyclic property of trace.
%\end{proof}
%}

\subsection{Graphs and Laplacians}

We consider a connected undirected graph $G = (V,E)$ with $n$ vertices, $m$ edges, and positive edge weights
$w : E \to \rea_{+}$. For a pair of vertices $u,v\in E$, we write $u\sim v$ to denote $(u,v) \in E$.
The Laplacian matrix of $G$ is an $n\times n$ matrix $\LL$ with
the entry on its $u^{\mathrm{th}}$ row and $v^{\mathrm{th}}$ column being
\begin{align*}
	\LL(u,v) = \begin{cases}
		- w(u,v)\quad & \mathrm{if}\ u\sim v, \\
		\mathrm{deg}(u)\quad & \mathrm{if}\ u = v, \\
		0\quad & \mathrm{otherwise},
	\end{cases}
\end{align*}
where $\mathrm{deg}(u) = \sum\limits_{u\sim v} w(u,v)$.
If $A$ and $B$ are two sets of vertices in $G$,
we write $\LL_{AB}$ to denote the submatrix of $\LL$
with rows corresponding to $A$ and columns corresponding to $B$.

Let $\ee_i$ denote the $i^{\mathrm{th}}$ standard basis vector, and $\bb_{u,v} = \ee_u - \ee_v$.
We fix an arbitrary orientation of the edges in $G$.
For each edge $e\in E$, we define $\bb_e = \bb_{u,v}$, where $u$ and $v$
are head and tail of $e$, respectively. It is easy to show that $\LL = \sum\nolimits_{e\in E} w(e)\bb_e \bb_e^\top$.
We refer to $w(e)\bb_e \bb_e^\top$ as the Laplacin of $e$.

It is immediate that $\LL$ is positive semidefinite since
\begin{align*}
	\xx^\top \LL \xx = \xx^\top \kh{ \sum\limits_{e\in E} w(e)\bb_e \bb_e^\top } \xx\
	= \sum\limits_{e\in E} w(e)\xx^\top \bb_e \bb_e^\top \xx
	= \sum\limits_{e\in E} w(e) \kh{ \xx^\top \bb_e }^2 \geq 0
\end{align*}
holds for any $\xx \in \mathbb{R}^n$.

\subsection{The Pseudoinverse, Effective Resistances, and Kirchhoff Index}

Since $\LL$ is positive semidefinite, we can diagonalize it and write
\begin{align*}
	\LL = \sum\limits_{i=1}^{n-1} \lambda_i \vv_i \vv_i^\top,
\end{align*}
where $\lambda_1,\ldots,\lambda_{n-1}$ are the nonzero eigenvalues of $\LL$ and
$\vv_1,\ldots,\vv_{n-1}$ are the corresponding orthonormal eigenvectors.
The pseudoinverse of $\LL$ is defined as
\begin{align*}
	\LL^\dag = \sum\limits_{i=1}^{n-1} \frac{1}{\lambda_i} \vv_i \vv_i^\top.
\end{align*}

It is not hard to show that
if $\LL$ and $\LH$ are Laplacian matrices of connected graphs and $\LL \approx_\eps \LH$, then
$\LL^\dag \approx_\eps \LH^\dag$.

We then give the definitions of effective resistance and Kirchhoff index:

\begin{definition}[Effective Resistance]
	%Given a connected undirected graph $G = (V,E)$ with $n$ vertices, $m$ edges, and positive edge weights
	%$w : E \to \rea_{+}$.
For a connected undirected graph $G = (V,E)$,	 the effective resistance between $u$ and $v$ is defined as
	\begin{align*}
		\mathcal{R}_{\mathrm{eff}}^G(u,v) = \bb_{u,v}^\top \LL^\dag \bb_{u,v}.
		%= \LL^\dag(u,u) - \LL^\dag(u,v) - \LL^\dag(v,u) + \LL^\dag(v,v).
	\end{align*}
\end{definition}

\begin{definition}[Kirchhoff Index]
	%Given a connected undirected graph $G = (V,E)$ with $n$ vertices, $m$ edges, and positive edge weights
	%$w : E \to \rea_{+}$.
	The Kirchhoff index $\Kf{G}$ of a graph $G = (V,E)$ is defined as the sum of effective resistances over all vertex pairs.
	Namely,
	\begin{align*}
		\Kf{G} = \sum\limits_{u,v\in V} \mathcal{R}_{\mathrm{eff}}^G(u,v).
	\end{align*}
\end{definition}
For a graph, Kirchhoff index is a measure of its overall connectedness. A graph with smaller Kirchhoff index is better connected on an average. It is known that the Kirchhoff index of a graph equals $n$ times
the sum of reciprocals of nonzero eigenvalues of $\LL$~\cite{ElSpVaJaKo11},
and hence also equals $n$ times the trace of $\LL^\dag$. We give this
relation in the following Fact:

\begin{fact}\label{fact:kirchtr}
	%Given a connected undirected graph $G = (V,E)$ with $n$ vertices, $m$ edges, and positive edge weights
	%$w : E \to \rea_{+}$.
	Let $\lambda_1,\ldots,\lambda_{n-1}$ be the nonzero eigenvalues of $\LL$.
	The Kirchhoff index of graph $G$ satisfies
	\begin{align*}
		\Kf{G} = n\sum\limits_{i=1}^{n-1} \frac{1}{\lambda_i} = n \trace{\LL^\dag}.
	\end{align*}
\end{fact}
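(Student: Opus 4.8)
The plan is to evaluate $\Kf{G}$ directly from the spectral form of $\LL^\dag$ recorded just above, namely $\LL^\dag = \sum_{i=1}^{n-1}\frac{1}{\lambda_i}\vv_i\vv_i^\top$ with $\vv_1,\dots,\vv_{n-1}$ orthonormal eigenvectors for the nonzero eigenvalues of $\LL$. First I would expand each effective resistance as
\[
	\mathcal{R}_{\mathrm{eff}}^G(u,v) = \bb_{u,v}^\top \LL^\dag \bb_{u,v}
	= \sum_{i=1}^{n-1}\frac{1}{\lambda_i}\kh{\vv_i^\top\bb_{u,v}}^2
	= \sum_{i=1}^{n-1}\frac{1}{\lambda_i}\kh{\vv_i(u)-\vv_i(v)}^2 ,
\]
and substitute this into $\Kf{G} = \sum_{\{u,v\}}\mathcal{R}_{\mathrm{eff}}^G(u,v)$, where the outer sum ranges over unordered pairs of vertices. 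Swapping the two summations, the claim reduces to showing $\sum_{\{u,v\}}\kh{\vv_i(u)-\vv_i(v)}^2 = n$ for each fixed $i$.

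For this I would use the elementary identity $\sum_{\{u,v\}}\kh{a_u-a_v}^2 = n\sum_u a_u^2 - \kh{\sum_u a_u}^2$, valid for any $a\in\rea^n$ (obtained by writing the left-hand side as $\tfrac12\sum_{u,v}(a_u-a_v)^2$ and expanding the square). Applying it with $a=\vv_i$: orthonormality gives $\sum_u \vv_i(u)^2 = \norm{\vv_i}^2 = 1$; and since $G$ is connected, $\ker\LL = \Span{\vecone}$, so the eigenvector $\vv_i$ (belonging to a nonzero eigenvalue, hence orthogonal to $\ker\LL$) satisfies $\sum_u \vv_i(u) = \vecone^\top\vv_i = 0$. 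Hence $\sum_{\{u,v\}}\kh{\vv_i(u)-\vv_i(v)}^2 = n\cdot 1 - 0 = n$, and therefore $\Kf{G} = \sum_{i=1}^{n-1}\frac{1}{\lambda_i}\cdot n = n\sum_{i=1}^{n-1}\frac{1}{\lambda_i}$. The remaining equality $n\sum_i 1/\lambda_i = n\trace{\LL^\dag}$ is immediate from the same spectral form, since $\trace{\vv_i\vv_i^\top} = \norm{\vv_i}^2 = 1$ and trace is linear.

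There is no genuinely hard step here; the proof is a short computation, and the only points demanding care are (i) fixing once and for all that $\Kf{G}$ ranges over \emph{unordered} pairs of vertices, since with ordered pairs every resistance is counted twice and one would instead obtain $2n\trace{\LL^\dag}$; and (ii) the use of connectedness of $G$ to guarantee that $\ker\LL$ is exactly one-dimensional and spanned by $\vecone$, which is precisely what makes the cross term $\kh{\sum_u\vv_i(u)}^2$ vanish. As a cross-check I would also mention the alternative entrywise route: expand $\mathcal{R}_{\mathrm{eff}}^G(u,v) = \LL^\dag_{uu} + \LL^\dag_{vv} - 2\LL^\dag_{uv}$, sum the diagonal contributions to $(n-1)\trace{\LL^\dag}$, and sum the off-diagonal contributions to $\trace{\LL^\dag}$ using $\LL^\dag\vecone = \veczero$; adding gives $n\trace{\LL^\dag}$ again.
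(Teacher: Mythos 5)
Your proof is correct. Note, though, that the paper does not prove this fact at all: it states it as known and cites Ellens et al.\ (the reference \cite{ElSpVaJaKo11}), so there is no internal argument to compare against. Your spectral derivation is the standard one and is complete: expanding $\mathcal{R}_{\mathrm{eff}}^G(u,v)=\bb_{u,v}^\top\LL^\dag\bb_{u,v}$ in the eigenbasis, swapping sums, and using the identity $\sum_{\{u,v\}}(a_u-a_v)^2=n\sum_u a_u^2-\kh{\sum_u a_u}^2$ together with $\norm{\vv_i}=1$ and $\vecone^\top\vv_i=0$ (which indeed requires connectedness, so that $\ker\LL=\Span{\vecone}$) gives exactly $n$ per eigenvalue, hence $\Kf{G}=n\sum_i 1/\lambda_i=n\trace{\LL^\dag}$. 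Your two points of care are well placed: the paper's $\sum_{u,v\in V}$ must be read as a sum over unordered pairs (the standard Kirchhoff-index convention, and the only reading consistent with the stated constant $n$ rather than $2n$), and the vanishing of the cross term is precisely where connectivity enters. The entrywise cross-check via $\mathcal{R}_{\mathrm{eff}}^G(u,v)=\LL^\dag_{uu}+\LL^\dag_{vv}-2\LL^\dag_{uv}$ and $\LL^\dag\vecone=\veczero$ is also sound.
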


By Rayleigh's Monotonicity Law~\cite{ElSpVaJaKo11},
%if the resistances of a circuit are increased,
%the effective resistance between any two points
%can only increase.
%As edge weights in a graph are essentially conductances (i.e., reciprocals of resistances) of a circuit,
the effective resistance between any pair of vertices can only increase when edges are deleted or edge weights are decreased.
Since the Kirchhoff index is the sum of effective resistances over all vertex pairs, we have the following Fact:

\begin{fact}\label{lem:kfmono}
	The Kirchhoff index of a graph does not decrease when edges are deleted or edge weights are decreased.
\end{fact}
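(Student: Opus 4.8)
This Fact is essentially an immediate corollary of Rayleigh's Monotonicity Law~\cite{ElSpVaJaKo11}, and the plan is just to spell out the reduction cleanly. First recall that, by definition, $\Kf{G} = \sum_{u,v\in V}\mathcal{R}_{\mathrm{eff}}^G(u,v)$ is a \emph{finite} sum, taken over the \emph{unchanged} vertex set $V$, of pairwise effective resistances. Hence it suffices to prove that each single term $\mathcal{R}_{\mathrm{eff}}^G(u,v)$ is non-decreasing under (a)~decreasing the weight $w(e)$ of one edge $e$, and (b)~deleting one edge $e$; then one sums the (finitely many) inequalities and iterates over all edges being modified.

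For a single term I would invoke the Dirichlet (energy-minimization) characterization of effective conductance: for $u \ne v$,
\[
	\mathcal{R}_{\mathrm{eff}}^G(u,v)^{-1}
	\;=\; \min_{x \in \rea^V :\, x_u = 1,\, x_v = 0}\ \sum_{(a,b)\in E} w(a,b)\,\kh{x_a - x_b}^2 ,
\]
which is equivalent to the quadratic-form definition $\mathcal{R}_{\mathrm{eff}}^G(u,v) = \bb_{u,v}^\top\LL^\dag\bb_{u,v}$ stated above. Both operations (a) and (b) can only \emph{decrease} the objective $\sum_{(a,b)\in E} w(a,b)(x_a-x_b)^2$ pointwise in $x$ --- (a) by scaling down one nonnegative summand, (b) by removing one nonnegative summand --- and therefore can only decrease the minimum. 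So the effective conductance does not increase, i.e.\ $\mathcal{R}_{\mathrm{eff}}^G(u,v)$ does not decrease. (Equivalently, one can quote Rayleigh's law directly for weight decreases and treat an edge deletion as the limiting case $w(e)\downarrow 0$.)

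The only point that needs a word of care is that deleting an edge may disconnect $G$: for a pair $u,v$ that ends up in different components the effective resistance is $+\infty$ (equivalently the minimum above equals $0$), so the desired inequality $\mathcal{R}_{\mathrm{eff}}^{G'}(u,v)\ge\mathcal{R}_{\mathrm{eff}}^{G}(u,v)$ holds trivially, as does $\Kf{G'}=+\infty\ge\Kf{G}$; if $G$ stays connected, the energy argument applies verbatim. Summing over all vertex pairs and over all edges being deleted or down-weighted then gives $\Kf{G'}\ge\Kf{G}$. There is really no obstacle here beyond this bookkeeping and the (standard) equivalence of the Dirichlet and pseudoinverse formulations of effective resistance; the substantive content is entirely carried by the cited monotonicity law.
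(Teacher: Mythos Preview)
Your proposal is correct and takes essentially the same approach as the paper: the paper simply notes that by Rayleigh's Monotonicity Law the effective resistance between any vertex pair can only increase when edges are deleted or weights decreased, and then observes that the Kirchhoff index is a sum of such resistances. You add some extra detail (the Dirichlet energy characterization and the disconnection edge case), but the core argument is identical.
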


\subsection{Kirchhoff Edge Centrality}

%\todo{Only say deactivation.}

With Fact~\ref{lem:kfmono}, it is reasonable to measure the importance  of an edge $e$ in graph $G$ by %the increase of
the Kirchhoff index of the new graph in which $e$ is deactivated.
We formalize the notion of edge deactivation by defining $\theta$-deletion of an edge.

%However, deleting an edge might make the graph disconnected and consequently make Kirchhoff infinite.
%To resolve this, we only delete a constant fraction of the edge's weight and keep the rest fraction retained. Thus, we define
%a $\theta$-deletion of an edge:

\begin{definition}[$\theta$-Deletion]
	%Given a connected undirected graph $G = (V,E)$ with $n$ vertices, $m$ edges, and positive edge weights
	%$w : E \to \rea_{+}$, and associated Laplacin $\LL$.
	Let $0< \theta \leq 1/2$ be a scalar.
	For an edge $e\in E$, the $\theta$-deletion of $e$ is to decrease its weight from $w(e)$ to $\theta w(e)$.
	If we use $G\bsk e$ to denote the graph obtained from $G$ by $\theta$-deleting edge $e$,
	and $\LL\bsk e$ to denote the Laplacin matrix corresponding to $G\bsk e$, we have
	\begin{align*}
		\LL\bsk e = \LL - (1 - \theta) w(e) \bb_e \bb_e^\top.
	\end{align*}
	We can also define $\theta$-Deletion of an edge set $B \subset E$, which is to
	decrease the weight of each edge $e\in B$ from $w(e)$ to $\theta w(e)$. Similar notations are
	$G\bsk B$ and
	\begin{align*}
		\LL\bsk B = \LL - (1 - \theta) \sum\limits_{e\in B} w(e)\bb_e \bb_e^\top.
	\end{align*}
\end{definition}
%Clearly, $\theta$-deletion is equivalent to a normal edge deletion when $\theta = 0$.

We then give the definitions of $\theta$-Kirchhoff edge centrality
$\mathcal{C}_\theta(e)$ and $\mathcal{C}_\theta^\Delta(e)$.
%\sout{\color{blue} and $\mathcal{C}_\theta^\Delta$-Kirchhoff edge centrality}.
%As it is the same graph $G$ from which each edge is $\theta$-deleted when measuring its centrality, we use
%directly the Kirchhoff index of $G\bsk e$ to denote the edge centrality of $e$:
\begin{definition}[$\theta$-Kirchhoff Edge Centrality $\mathcal{C}_\theta$]
	Let $0< \theta \leq 1/2$ be a scalar.
	For an edge $e\in E$, its $\theta$-Kirchhoff edge centrality $\mathcal{C}_\theta(e)$ is defined as
	the Kirchhoff index of the graph obtained from $G$ by $\theta$-deleting $e$. Namely,
	\begin{align*}
		\mathcal{C}_\theta(e) = \Kf{G\bsk e}.
	\end{align*}
\end{definition}

\begin{definition}[$\theta$-Kirchhoff Edge Centrality $\mathcal{C}_\theta^\Delta$]
	Let $0< \theta \leq 1/2$ be a scalar.
	For an edge $e\in E$,
	its $\theta$-Kirchhoff edge centrality $\mathcal{C}_\theta^\Delta(e)$ is defined as
	the increase of the Kirchhoff index of the graph upon edge $e$'s $\theta$-deletion. Namely,
	\begin{align*}
		\mathcal{C}_\theta^\Delta(e) = \Kf{G\bsk e} - \Kf{G}.
	\end{align*}
\end{definition}

Clearly, these two definitions of Kirchhoff edge centrality lead to the same ranking of edges.

Following the above two definitions of Kirchhoff edge centrality, we can also define $\theta$-centrality of
a vertex.

\begin{definition}[$\theta$-Kirchhoff Vertex Centrality]
\label{def:VertexRemoval}
	Let $0 < \theta \leq 1/2$ be a scalar.
	For a vertex $v\in V$,
	its $\theta$-Kirchhoff vertex centrality $\mathcal{C}_\theta^\Delta(v)$ is defined as
	the increase of the Kirchhoff index of the graph upon $\theta$-deletion of its incident edges. Namely,
	\begin{align*}
		\mathcal{C}_\theta^\Delta(v) = \Kf{G\bsk \Ev{v}} - \Kf{G},
	\end{align*}
	where $\Ev{v} = \setof{ (u,v)\,|\,u\sim v }$ is the set of edges incident with $v$.
\end{definition}

%\todo{A naive algorithm ...}

We now formulate the core problems of approximating $\theta$-Kirchhoff centrality:

\begin{problem}\label{prob}
	Given a connected undirected graph $G = (V,E)$ with $n$ vertices, $m$ edges, and positive edge weights
	$w : E \to \rea_{+}$, and scalars $0 < \theta \leq 1/2$, $0 < \eps \leq 1/2$, for each $e \in E$,
	find an $\eps$-approximation of
	its $\theta$-Kirchhoff edge centrality $\mathcal{C}_\theta(e)$.
\end{problem}

%{\color{red}
\begin{problem}\label{probedge}
	Given a connected undirected graph $G = (V,E)$ with $n$ vertices, $m$ edges, and positive edge weights
	$w : E \to \rea_{+}$, and scalars $0 < \theta \leq 1/2$, $0 < \eps \leq 1/2$, for each $e \in E$,
	find an $\eps$-approximation of
	its $\theta$-Kirchhoff edge centrality $\mathcal{C}_\theta^\Delta(e)$.
\end{problem}

\begin{problem}\label{probvert}
	Given a connected undirected graph $G = (V,E)$ with $n$ vertices, $m$ edges, and positive edge weights
	$w : E \to \rea_{+}$, and scalars $0 < \theta \leq 1/2$, $0 < \eps \leq 1/2$, for each $v \in V$,
	find an $\eps$-approximation of
	its $\theta$-Kirchhoff vertex centrality $\mathcal{C}_\theta^\Delta(v)$.
\end{problem}
%}

%In the next sections, we will present a randomized approximation algorithm $\ECComp$ that solves Problem~\ref{prob}
%with high probability
%\footnote{We say an event occurs with high probability if the event occurs with probability at least $1 - 1/\poly(n)$.}
%in nearly-linear time
%\footnote{We say a graph algorithm runs in nearly-linear time if the algorithm runs in $O(m\cdot \poly\log(n))$ time.}.

\section{Schur Complements and Partial Cholesky Factorizations}\label{sec:schur}

In this section, we introduce Schur complements and partial Cholesky factorizations,
which are key techniques in our algorithm.

\subsection{Preliminaries}

We first give the definitions of Schur complements and partial Cholesky factorizations
according to~\cite{KS16,DKP+17}.

\begin{definition}[Schur Complement]
Suppose $\LL$ is the Laplacian of an undirected positive-weighted connected graph,
and $\LL(:,i)$ is the $i^{\text{th}}$ column of $\LL$. For a vertex $v_1$,
\[ \SS^{(1)} \defeq \LL - \frac{1}{\LL(v_1,v_1)}\LL(:,v_1) \LL(:,v_1)^{\top}\]
is called the \emph{Schur complement} of $\LL$ with respect to vertex
$v_1$.
The operation of subtracting $\frac{1}{\LL(v_1,v_1)}\LL(:,v_1) \LL(:,v_1)^{\top}$ from $\LL$
is called the \emph{elimination} of vertex $v_1$.
Suppose we perform a sequence of eliminations, where in the $i^{th}$
step, we select a vertex
$v_{i} \in V\setminus \setof{v_{1}, \ldots, v_{i-1}}$ and eliminate
the vertex $v_{i}.$ We define
%row and column of $S^{(k-1)}$, obtaining
\begin{align*}
  \alpha_{i} &= \SS^{(i-1)}(v_{i},v_{i}), \\
  \cc_{i} &= \frac{1}{\alpha_{i}} \SS^{(i-1)}(:,v_{i}), \\
  \SS^{(i)} &= \SS^{(i-1)} - \alpha_{i} \cc_{i}\cc_{i}^{\top}.
\end{align*}
Then $\SS^{(i)}$ is called the Schur complement with respect to vertices $\setof{v_{1}, \ldots, v_{i}}$.
Let $C = \setof{v_{1}, \ldots, v_{i}}$, then we also write $\SC(\LL, C) = \SS^{(i)}_{CC}$ to denote the Schur complement
of $\LL$ onto $C$.
\end{definition}

\begin{definition}[Partial Cholesky factorization]
Suppose we eliminate a sequence of vertices $v_{1}, \ldots, v_{i}$.
Let $\matlow$ be the
%If we define the following
$n \times i$ matrix
with $\cc_{j}$ as its $j^{\textrm{th}}$ column,
% \begin{align*}
% \mathcal{C} =
% \begin{pmatrix}
%   \cc_{1} &  \cc_{2} & \ldots & \cc_{n}
% \end{pmatrix}
% ,
% \end{align*}
and $\calD$ be the $i \times i$ diagonal matrix
$\calD(j,j) = \alpha_{j}$, then
% \begin{align}
% \label{eq:schursum}
\[
\LL = \SS^{(i)} + \sum_{j=1}^{i} \alpha_{j} \cc_{j} \cc_{j}^{\top}
=\SS^{(i)} + \matlow \calD \matlow^{\top}.
\]
% \end{align}
% Define the permutation matrix $P$ by $P \ee_{i} = \ee_{v_{i}}.$
% Letting $\matlow = P^{\top}\matlow,$ we have
% $L =
%  P \matlow \calD \matlow^{\top} P^{\top}.$
% Crucially,
% % \[
% % \matlow = P^{\top}\matlow = \begin{pmatrix}
% %   P^{\top}\cc_{1} &  P^{\top}\cc_{2} & \ldots & P^{\top}\cc_{n}
% % \end{pmatrix}
% % \]
% $\matlow$ is lower triangular, since
% $\matlow(i,j) = (P^{\top}\cc_{j})(i) = \cc_{j}(v_{i}),$ and for
% $i < j,$ we have $\cc_{j}(v_{i}) = 0$.
Let us write $F = \setof{v_{1},\ldots,v_{i}}$, and $C = V\setminus F$.
Let $\SS$ be the submatrix of $\SS^{(i)}$ with rows and columns corresponding to vertices in $C$, i.e., $\SS = \SS^{(i)}_{CC}$.
Since $\SS^{(i)}_{CC}$ contains all nonzero entries of $\SS^{(i)}$,
we can write
$\matlow = \begin{pmatrix} \matlow\sm{F}{F} \\ \matlow\sm{C}{F} \end{pmatrix} $ and
%If we abuse notation and also identify $\SC(\LL,C) = \SS^{(i)}$ if the matrix restricted to its non-zero support $C$, then we can also write
\begin{align}
	\LL
	=
 	\begin{pmatrix}
    	\matlow\sm{F}{F} \\
    	\matlow\sm{C}{F}
  	\end{pmatrix}
  	\calD
  	\begin{pmatrix}
    	\matlow\sm{F}{F} \\
    	\matlow\sm{C}{F}
  	\end{pmatrix}^\top
  	+
  	\begin{pmatrix}
  		\matzero_{FF} & \matzero_{FC} \\
  		\matzero_{CF} & \SS
  	\end{pmatrix}
  	=
	\begin{pmatrix}
   		\matlow\sm{F}{F} & \matzero \\
	    \matlow\sm{C}{F} & \II\sm{C}{C}
	\end{pmatrix}
  	\begin{pmatrix}
  		\calD & \matzero \\
  		\matzero & \SS
  	\end{pmatrix}
  	\begin{pmatrix}
   		\matlow\sm{F}{F} & \matzero \\
    	\matlow\sm{C}{F} & \II\sm{C}{C}
  	\end{pmatrix}^{\top}.
\label{eq:partialCholesky}
\end{align}
Here $\begin{pmatrix}
    \matlow_{FF} & \matzero \\
    \matlow_{CF} & \II_{CC}
  \end{pmatrix}$ is a lower triangular matrix up to row exchanges, and $\calD$ is diagonal. Eq.~(\ref{eq:partialCholesky}) is known as \emph{partial Cholesky factorization}.
\end{definition}

It is known that Schur complements of a Laplacian are also Laplacians:
\begin{fact}[Fact 5.1 of~\cite{DKP+17}]
	The Schur complement of a Laplacian w.r.t. vertices $v_{1}, \ldots, v_{i}$ is a Laplacian.
\end{fact}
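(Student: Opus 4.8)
The plan is to reduce to the single-vertex case and then induct. First I would observe that $\SS^{(i)}$ is obtained from $\LL$ by eliminating $v_1$, then eliminating $v_2$ from the resulting matrix, and so on: indeed, the update $\SS^{(i)} = \SS^{(i-1)} - \alpha_i \cc_i \cc_i^\top = \SS^{(i-1)} - \frac{1}{\SS^{(i-1)}(v_i,v_i)}\SS^{(i-1)}(:,v_i)\SS^{(i-1)}(:,v_i)^\top$ is exactly the Schur-complement (elimination) operation of the definition applied to $\SS^{(i-1)}$. Hence it suffices to prove that eliminating one vertex from a Laplacian yields a Laplacian, and then conclude by induction on $i$ starting from $\SS^{(0)} = \LL$.

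So fix a Laplacian $\LL$ of a connected weighted graph on $V$ and a vertex $v_1$; note $\LL(v_1,v_1) > 0$ since every vertex of a connected graph on at least two vertices has positive degree, so the pivot is valid. Set $\SS^{(1)} = \LL - \frac{1}{\LL(v_1,v_1)}\LL(:,v_1)\LL(:,v_1)^\top$. I would check the defining properties of a Laplacian one at a time. Symmetry is immediate since $\LL$ and the rank-one correction are symmetric. The $v_1$-th column vanishes: $\SS^{(1)}\ee_{v_1} = \LL(:,v_1) - \frac{1}{\LL(v_1,v_1)}\LL(:,v_1)\,\LL(v_1,v_1) = \veczero$, and by symmetry the $v_1$-th row vanishes too, so $\SS^{(1)}$ is supported on $V \setminus \setof{v_1}$. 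For zero row sums I use $\LL\vecone = \veczero$, which also gives $\LL(:,v_1)^\top\vecone = (\vecone^\top\LL)_{v_1} = 0$, whence $\SS^{(1)}\vecone = \LL\vecone - \frac{1}{\LL(v_1,v_1)}\LL(:,v_1)(\LL(:,v_1)^\top\vecone) = \veczero$. Finally, for $u \neq w$ with $u,w \neq v_1$, $\SS^{(1)}(u,w) = \LL(u,w) - \frac{\LL(u,v_1)\LL(v_1,w)}{\LL(v_1,v_1)}$, where $\LL(u,w) \leq 0$ and $\LL(u,v_1), \LL(v_1,w) \leq 0$ while $\LL(v_1,v_1) > 0$, so both terms are $\leq 0$ and $\SS^{(1)}(u,w) \leq 0$; the diagonal entries are then automatically nonnegative, since $\SS^{(1)}(u,u) = -\sum_{w \neq u}\SS^{(1)}(u,w)$. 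A symmetric matrix with nonpositive off-diagonal entries and zero row sums is the Laplacian of a weighted graph (edge weight $-\SS^{(1)}(u,w)$ between $u,w$), so $\SS^{(1)}$ is a Laplacian — that of the graph on $V$ in which $v_1$ is isolated.

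To finish, I would induct: $\SS^{(0)} = \LL$ is a Laplacian, and if $\SS^{(i-1)}$ is a Laplacian then $\SS^{(i)}$ is the elimination of $v_i$ from it, hence a Laplacian by the single-step claim. For the division to be well defined at each step I would invoke the standard fact that the Schur complement of a connected graph's Laplacian onto a vertex subset is again the Laplacian of a connected graph, so every eliminated $v_i$ still has positive degree, i.e. $\SS^{(i-1)}(v_i,v_i) > 0$ (alternatively one simply skips isolated vertices, for which the correction is zero). Restricting to $C = V \setminus \setof{v_1,\ldots,v_i}$, the principal submatrix $\SS^{(i)}_{CC} = \SC(\LL,C)$ is then a Laplacian on $C$, as claimed. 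I expect no real obstacle: the only point requiring care is the sign of the off-diagonal correction term $-\LL(u,v_1)\LL(v_1,w)/\LL(v_1,v_1)$, which uses all three Laplacian sign conventions at once, together with the mild nondegeneracy observation that the pivot is strictly positive.
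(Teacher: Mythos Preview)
Your proof is correct. The paper itself does not prove this fact directly---it cites it as Fact~5.1 of~\cite{DKP+17}---but in Section~\ref{sec:schur} (Equation~(\ref{eq:schurAdd})) it does write down the one-step elimination explicitly as
\[
\SS^{(1)} = \sum_{e \in E : e \not\ni v_{1}} w(e) \bb_{e}\bb_{e}^{\top} + \sum_{u \sim v_1} \sum_{v \sim v_1} \frac{w(u,v_1) w(v,v_1)}{\mathrm{deg}(v_1)} \bb_{u,v} \bb_{u,v}^{\top},
\]
which is visibly a sum of edge Laplacians and hence a Laplacian; the multi-step claim then follows by induction exactly as you say. So the paper's implicit argument is combinatorial/structural (identify the new graph on $V\setminus\{v_1\}$ as the old induced subgraph plus a weighted clique among the neighbors of $v_1$), whereas your argument is a direct entrywise verification of the Laplacian axioms. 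Both buy the same conclusion; the explicit formula has the advantage of naming the new edge weights, which the paper later uses to argue that edge operations inside $C$ commute with elimination, while your route is more self-contained and avoids any combinatorial bookkeeping. One tiny remark: the paper's formula makes clear that the resulting graph on $V\setminus\{v_1\}$ stays connected whenever $G$ was (every path through $v_1$ is replaced by a direct edge), which supplies the positivity of the next pivot more transparently than the aside you give.
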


\subsection{Commutativity With Edge Deletions}

According to~\cite{KS16},
we can write the Schur complement w.r.t. a vertex $v_1$ as
%%%%%%%
\begin{align}
	\SS^{(1)} %\kh{\LL - \vstar{\LL}{v_1}} + C_{v_1}(\LL) \notag\\
	\label{eq:schurAdd}
	&= \sum_{e \in E : e \not\ni v_{1}} w(e) \bb_{e}
	\bb_{e}^{\top} + \sum_{u \sim v_1} \sum_{v \sim v_1}
	\frac{w(u,v_1) w(v,v_1)}
	{\mathrm{deg}(v_1)}
	\bb_{u,v} \bb_{u,v}^{\top},
\end{align}
where the first term on rhs is the Laplacian corresponding to the edges not incident with $v_1$,
and the second term on rhs is a Laplacian whose edges are supported on $V\setminus \setof{v_1}$.
Thus, $\SS^{(1)}$ can be seen as a multigraph obtained by adding edges to $G\left[V\setminus \setof{v_1}\right]$,
the induced graph of $G$ on $V \setminus \setof{v_1}$. By induction,
for all $i$, $\SS^{(i)}$ can be seen as a multigraph obtained by adding edges to
$G\left[V\setminus \setof{v_1, \ldots, v_i}\right]$, the induced graph of $G$ on $V \setminus \setof{v_1, \ldots, v_i}$.
%%%%%%%
%%%%%%%
Also, by Eq.~(\ref{eq:schurAdd}), edges added to $G\left[V\setminus \setof{v_1}\right]$ to obtain $\SS^{(1)}$
are fully determined by edges incident with	 $v_1$ in the original graph $G$. By induction, for all $i$,
edges added to $G\left[V\setminus \setof{v_1, \ldots, v_i}\right]$ to obtain $\SS^{(i)}$ are fully determined
by edges incident with $\setof{v_1, \ldots, v_i}$ in the original graph $G$.
Thus, deletions (or $\theta$-deletions) performed to edges with both endpoints in $V \setminus \setof{v_1, \ldots, v_i}$
commute with taking partial Cholesky factorization. Therefore, we have the following lemma:
\begin{lemma}\label{lem:choleskyDel}
	Given a connected undirected graph $G =(V,E)$, with positive edge weights $w : E \to \rea_{+}$,
	and associated Laplacian $\LL$,
	a set of vertices $F = \setof{v_1, \ldots, v_i} \in V$.
	Let $C = V\setminus F$, and the partial Cholesky factorization of $\LL$ be
  	\begin{align*}
  		\LL =
  		\begin{pmatrix}
    		\matlow\sm{F}{F} \\
    		\matlow\sm{C}{F}
	  	\end{pmatrix}
	  	\calD
	  	\begin{pmatrix}
	    	\matlow\sm{F}{F} \\
	    	\matlow\sm{C}{F}
	  	\end{pmatrix}^\top
	  	+
	  	\begin{pmatrix}
	  		\matzero_{FF} & \matzero_{FC} \\
	  		\matzero_{CF} & \SS
	  	\end{pmatrix}
	  	=
  		\begin{pmatrix}
	    	\matlow_{FF} & \matzero \\
	    	\matlow_{CF} & \II_{CC}
	  	\end{pmatrix}
	  	\begin{pmatrix}
  			\calD & \matzero \\
			\matzero & \SS
		\end{pmatrix}
  		\begin{pmatrix}
    		\matlow_{FF} & \matzero \\
    		\matlow_{CF} & \II_{CC}
  		\end{pmatrix}^{\top}.
	\end{align*}
  	For any edge $e$ whose endpoints are both in $C$, any $0\leq \theta < 1$,
    \begin{align*}
  		\LL\bsk e =
  		\begin{pmatrix}
    		\matlow\sm{F}{F} \\
    		\matlow\sm{C}{F}
	  	\end{pmatrix}
	  	\calD
	  	\begin{pmatrix}
	    	\matlow\sm{F}{F} \\
	    	\matlow\sm{C}{F}
	  	\end{pmatrix}^\top
	  	+
	  	\begin{pmatrix}
	  		\matzero_{FF} & \matzero_{FC} \\
	  		\matzero_{CF} & \SS \bsk e
	  	\end{pmatrix}
	  	=
  		\begin{pmatrix}
	    	\matlow_{FF} & \matzero \\
	    	\matlow_{CF} & \II_{CC}
	  	\end{pmatrix}
	  	\begin{pmatrix}
  			\calD & \matzero \\
			\matzero & \SS \bsk e
		\end{pmatrix}
  		\begin{pmatrix}
    		\matlow_{FF} & \matzero \\
    		\matlow_{CF} & \II_{CC}
  		\end{pmatrix}^{\top}.
	\end{align*}
\end{lemma}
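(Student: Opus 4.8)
The plan is to verify the displayed matrix identity directly, exploiting the fact that $\theta$-deleting an edge $e$ with both endpoints in $C$ is a rank-one update supported entirely on the $C$-block. Recall from the definition of $\theta$-deletion that $\LL\bsk e = \LL - (1-\theta)w(e)\bb_e\bb_e^\top$, and similarly that $\SS\bsk e$ denotes $\SS - (1-\theta)w(e)\bb_e\bb_e^\top$, with $\bb_e$ read as a vector on $C$ in the latter. Since neither endpoint of $e$ lies in $F = V\setminus C$, the incidence vector $\bb_e$ vanishes on all $F$-coordinates; in the block form induced by the partition $(F,C)$ we may write $\bb_e = \kh{\veczero_F ;\, (\bb_e)_C}$.

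First I would record the key observation that this vector is fixed by the lower-triangular factor of the partial Cholesky factorization:
\begin{align*}
	\begin{pmatrix} \matlow\sm{F}{F} & \matzero \\ \matlow\sm{C}{F} & \II\sm{C}{C} \end{pmatrix}
	\begin{pmatrix} \veczero_F \\ (\bb_e)_C \end{pmatrix}
	=
	\begin{pmatrix} \veczero_F \\ (\bb_e)_C \end{pmatrix},
\end{align*}
which holds because $\matlow\sm{F}{F}$ only meets the zero $F$-part of $\bb_e$ and the lower-right block is the identity. Consequently the entire rank-one correction can be pulled through the triangular factors:
\begin{align*}
	(1-\theta)w(e)\,\bb_e\bb_e^\top
	=
	\begin{pmatrix} \matlow\sm{F}{F} & \matzero \\ \matlow\sm{C}{F} & \II\sm{C}{C} \end{pmatrix}
	\begin{pmatrix} \matzero & \matzero \\ \matzero & (1-\theta)w(e)\,\bb_e\bb_e^\top \end{pmatrix}
	\begin{pmatrix} \matlow\sm{F}{F} & \matzero \\ \matlow\sm{C}{F} & \II\sm{C}{C} \end{pmatrix}^\top .
\end{align*}
Subtracting this identity from the partial Cholesky factorization of $\LL$ and absorbing the correction into the middle block — which turns $\SS$ into $\SS - (1-\theta)w(e)\bb_e\bb_e^\top = \SS\bsk e$ and leaves $\calD$, and hence $\matlow$, untouched — gives exactly the factorization of $\LL\bsk e$ claimed in the lemma; re-expanding the product back into the two-term sum of $\matlow\calD\matlow^\top$ and the block matrix having $\SS\bsk e$ in its lower-right corner recovers the left-hand expression as well.

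An equivalent and perhaps more illuminating route, mirroring the commutativity discussion just before the lemma, is induction on $|F|$. Having eliminated $v_1,\dots,v_{i-1}$, the inductive hypothesis gives that the intermediate Schur complements of $\LL$ and $\LL\bsk e$ differ by exactly $(1-\theta)w(e)\bb_e\bb_e^\top$; since $v_i$ is never an endpoint of $e$, this perturbation has zero entries in row and column $v_i$, so the pivot $\alpha_i = \SS^{(i-1)}(v_i,v_i)$ and the column $\cc_i = \alpha_i^{-1}\SS^{(i-1)}(:,v_i)$ are unchanged, and the $i$-th elimination step therefore produces the same $\alpha_i,\cc_i$ while carrying the rank-one perturbation forward verbatim. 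Iterating proves $\matlow$ and $\calD$ are common to $\LL$ and $\LL\bsk e$ and that only the final Schur complement changes, by the rank-one term.

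I expect this to be essentially bookkeeping. The one point worth care is confirming that the displayed factorization of $\LL\bsk e$ is genuinely the partial Cholesky factorization obtained by eliminating $F$ in the same order — not merely some algebraic factorization — which follows since the elimination order is fixed and the pivots and columns depend only on quantities the perturbation does not touch; positivity of all $\alpha_j$ is preserved because $\theta$-deletion keeps edge weights positive, so the factorization stays well-defined throughout.
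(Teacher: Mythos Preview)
Your proposal is correct. The paper does not give a separate proof of this lemma; it states the lemma immediately after the inductive discussion of how, by Equation~(\ref{eq:schurAdd}), the edges added in passing from $G[V\setminus\{v_1,\ldots,v_i\}]$ to $\SS^{(i)}$ depend only on edges incident to $\{v_1,\ldots,v_i\}$, so that modifying an edge inside $C$ cannot affect any elimination step. That argument is precisely your second route.

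Your primary route---observing that $\bb_e$ is fixed by the lower-triangular factor whenever it is supported on $C$, and hence that the rank-one correction slides through the factorization untouched---is a clean, self-contained algebraic verification that avoids the induction altogether. It has the advantage of being a one-line matrix identity once the fixed-point observation is made, and it makes transparent why only the $\SS$-block changes. The paper's inductive viewpoint, on the other hand, explains structurally \emph{why} the commutativity holds (the elimination process never ``sees'' $e$) and makes it obvious that the resulting factorization is genuinely the partial Cholesky factorization of $\LL\bsk e$ in the same elimination order---the point you correctly flag in your final paragraph as the one subtlety. Either argument is complete; together they complement each other nicely.
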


\subsection{Approximate Partial Cholesky Factorization Algorithm}

In~\cite{DKP+17}, the authors give an approximate partial Cholesky factorization algorithm,
whose performance can be characterized in the following lemma:
\begin{lemma}[Lemma 5.7 of~\cite{DKP+17}, paraphrased] \label{lem:partialChol}
	There is an algorithm \\ $\partialChol(\LL,C,\eps)$ that
	when given a connected undirected graph $G =(V,E)$, with positive edge weights
	$w : E \to \rea_{+}$, and associated Laplacian $\LL$, a set of vertices $C \subset V$,
	and a scalar $0<\eps\leq1/2$, returns a decomposition $(\matlowtil,\calDtil,\SStil)$.
	With high probability, the following statement hold:
	\begin{align} \label{eq:lapErrorBounds}
	%\SC(\LL, C) & \approx_{\eps} \SC(\LLtil, C) = \SStil, \\
  		\LL & \approx_{\eps} \LLtil,
	\end{align}
	where $F = V\setminus C$ and
	\begin{align}\label{eq:choleskyLLtil}
		\LLtil =
 		\begin{pmatrix}
	    	\matlowtil\sm{F}{F} \\
	    	\matlowtil\sm{C}{F}
	  	\end{pmatrix}
	  	\calDtil
	  	\begin{pmatrix}
	    	\matlowtil\sm{F}{F} \\
    		\matlowtil\sm{C}{F}
	  	\end{pmatrix}^\top
	  	+
	  	\begin{pmatrix}
	  		\matzero_{FF} & \matzero_{FC} \\
	  		\matzero_{CF} & \SStil
	  	\end{pmatrix}
		=
		\begin{pmatrix}
    		\matlowtil_{FF} & \matzero \\
    		\matlowtil_{CF} & \II_{CC}
  		\end{pmatrix}
  		\begin{pmatrix}
			\calDtil & \matzero \\
			\matzero & \SStil
		\end{pmatrix}
		\begin{pmatrix}
		    \matlowtil_{FF} & \matzero \\
		    \matlowtil_{CF} & \II_{CC}
		\end{pmatrix}^\top.
	\end{align}
	Here $\SStil$ is a Laplacian matrix whose edges are supported on $C$ such that $\SStil \approx_\eps \SC(\LL, C)$.
	Let $k = \sizeof{C} = n-\sizeof{F}$.
	The total number of non-zero entries in $\SStil$ is $O(k
	\eps^{-2}\log n)$.
	$\begin{pmatrix}
   		\matlowtil_{FF} & \matzero \\
	   	\matlowtil_{CF} & \II_{CC}
   	\end{pmatrix}$ is a lower triangular matrix up to row exchanges.
    The total number of non-zero entries in $\begin{pmatrix}
   		\matlowtil_{FF} & \matzero \\
	   	\matlowtil_{CF} & \II_{CC}
   	\end{pmatrix}$ is $O(m + n \eps^{-2}\log^3 n)$.  $\calDtil$ is a diagonal matrix.
	
		For any vector $\bb \in \mathbb{R}^n$,
		one can evaluate
		$\begin{pmatrix}
 			\matlowtil_{FF} & \matzero \\
   	 		\matlowtil_{CF} & \II_{CC}
  		\end{pmatrix}^{-1} \bb$ in $O(m + n \eps^{-2}\log^3 n)$ time.
  		For any vector $\cc \in \mathbb{R}^{\sizeof{F}}$,
		one can evaluate $\kh{\calDtil}^{-1} \cc$ in $O(\sizeof{F})$ time.
  	
	The total running time is bounded by
	$O((m\log^{3} n+ n \eps^{-2} \log^{5} n) \operatorname{polyloglog}(n))$.
\end{lemma}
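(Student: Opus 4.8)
The statement is Lemma~5.7 of~\cite{DKP+17}, which in turn rests on the approximate (full) Cholesky factorization of~\cite{KS16}; the plan is to recall that construction and verify each claimed property in turn. The algorithm $\partialChol(\LL,C,\eps)$ performs a \emph{sampled} Gaussian elimination of the vertices of $F = V\setminus C$, processed one at a time in a randomized order. When it eliminates a vertex $v_i$, the exact fill-in added to the remaining graph is the clique Laplacian appearing as the second term of Eq.~(\ref{eq:schurAdd}); instead of adding this dense clique, the algorithm adds a sparse random multigraph, obtained by, for each edge $(v_i,u)$ currently incident to $v_i$, drawing a second incident edge $(v_i,w)$ with probability proportional to its weight and inserting the ``series'' edge $(u,w)$ with the corresponding weight, repeated $O(\eps^{-2}\log n)$ times and rescaled. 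The key point is that this random multigraph is an \emph{unbiased} estimator of the exact clique, conditioned on the current state. The output $(\matlowtil,\calDtil,\SStil)$ records the normalized elimination columns $\cc_i$ (the columns of $\matlowtil$), the pivots $\alpha_i$ (the diagonal of $\calDtil$), and the sampled trailing block $\SStil$ supported on $C$; each intermediate matrix remains a Laplacian with positive pivots (by Fact~5.1 of~\cite{DKP+17} and induction on the eliminations), so the process is well defined, and the decomposition satisfies the \emph{exact} identity~(\ref{eq:choleskyLLtil}) by construction.

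The heart of the argument is the spectral bound $\LL \approx_\eps \LLtil$ of~(\ref{eq:lapErrorBounds}). Following~\cite{KS16}, consider the symmetrized error $\EE_t = \LL^{\dag/2}(\RR_t - \LL)\LL^{\dag/2}$ after $t$ sampled eliminations, where $\RR_t$ is the Laplacian reconstructed from the first $t$ pivots together with the current sampled trailing matrix. Since each sampled clique is conditionally unbiased, $\setof{\EE_t}$ is a matrix martingale with $\EE_0 = \matzero$ and $\EE_{|F|} = \LL^{\dag/2}(\LLtil - \LL)\LL^{\dag/2}$. The $O(\eps^{-2}\log n)$-fold averaging inside a single elimination, together with the randomized elimination order, bounds both the conditional operator norm of each increment and the predictable quadratic variation, so a matrix Freedman / matrix Bernstein inequality for martingales gives $\norm{\EE_{|F|}} \le \eps$ with high probability, i.e.\ $\exp(-\eps)\LL \preceq \LLtil \preceq \exp(\eps)\LL$. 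I expect this concentration step to be the main obstacle in a full proof: one has to set up the martingale so that each per-step increment is provably $O(\eps/\log n)$ in operator norm (this is where the sampling multiplicity and the choice of elimination order interact through ``leverage''-type bounds on the sampled edges), and check that the elimination never degenerates.

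Granting $\LL \approx_\eps \LLtil$, the remaining claims follow by bookkeeping. From the block factorization~(\ref{eq:choleskyLLtil}) and the identity $\matlowtil_{FF}\calDtil\matlowtil_{FF}^\top = \LLtil_{FF}$ one checks directly that $\SStil = \LLtil_{CC} - \LLtil_{CF}\LLtil_{FF}^{-1}\LLtil_{FC}$, i.e.\ $\SStil$ is exactly the Schur complement of $\LLtil$ onto $C$; since for a PSD matrix $\MM$ the quantity $\xx^\top \SC(\MM,C)\xx$ equals the minimum of $\zz^\top \MM \zz$ over all $\zz$ whose restriction to $C$ is $\xx$, taking Schur complements preserves the $\exp(\pm\eps)$ sandwich, and hence $\SStil \approx_\eps \SC(\LL,C)$. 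For sparsity, each of the at most $n$ eliminations inserts $O(\eps^{-2}\log n)$ sampled edges, so (after one resparsification pass on the trailing graph) $\SStil$ has $O(k\eps^{-2}\log n)$ nonzeros, while $\matlowtil$ accumulates the $m$ original incidences together with the fill-in, for a total of $O(m + n\eps^{-2}\log^3 n)$ nonzeros; the matrix $\begin{pmatrix}\matlowtil_{FF} & \matzero \\ \matlowtil_{CF} & \II_{CC}\end{pmatrix}$ is lower triangular up to the permutation given by the elimination order, so forward and back substitution apply its inverse to any vector in $O(m + n\eps^{-2}\log^3 n)$ operations, and $\calDtil$ being diagonal its inverse applies in $O(\sizeof{F})$ time.

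Finally, the running time is dominated by the sampled eliminations: processing $v_i$ costs reading its current incidence list plus $O(\eps^{-2}\log n)$ weighted neighbor draws, each served in time logarithmic in the current degree by a balanced sampling structure; summing the per-vertex costs together with the overhead of maintaining these structures yields the stated $O((m\log^{3} n + n\eps^{-2}\log^{5} n)\operatorname{polyloglog}(n))$ bound. I would organize the writeup as: (i) describe the sampled partial elimination and check the exact identity~(\ref{eq:choleskyLLtil}); (ii) prove $\LL\approx_\eps\LLtil$ by matrix-martingale concentration; (iii) propagate it to $\SStil$ using that Schur complements preserve spectral approximation; (iv) carry out the sparsity and running-time accounting. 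Step~(ii) is the only part that needs substantial work; the rest is essentially bookkeeping.
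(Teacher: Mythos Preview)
The paper does not prove this lemma at all: it is imported verbatim (with minor rephrasing) from~\cite{DKP+17}, and the only ``proof'' the paper supplies is the short remark following the statement, explaining that (i) the failure-probability parameter $\delta$ from~\cite{DKP+17} has been absorbed by setting $\delta = n^{-c}$, and (ii) the inverse applications are fast because the factors are lower triangular and diagonal. Everything else is treated as a black box.

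Your sketch is therefore not competing with any argument in this paper; rather, you are outlining the proof from~\cite{KS16,DKP+17}. As such the outline is broadly accurate: sampled clique elimination with conditionally unbiased fill-in, a matrix-martingale (Freedman-type) concentration bound to get $\LL\approx_\eps\LLtil$, and the energy-minimization characterization of Schur complements to propagate the approximation to $\SStil$. One point to be careful about if you actually carry this out: the $O(k\eps^{-2}\log n)$ nonzero bound on $\SStil$ does not fall out of the per-step sampling alone, since fill-in edges accumulate across all $|F|$ eliminations and can land in $C\times C$ well before the process terminates; in~\cite{DKP+17} this is handled by periodic resparsification of the trailing graph (which is also where the extra $\log$ factors in the $\matlowtil$ nonzero count and the running time come from), so your parenthetical ``after one resparsification pass'' understates what is needed. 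But for the purposes of this paper, the correct response is simply that the lemma is quoted, not proved.
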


%{\color{red}
	Comparing to \cite{DKP+17}, Lemma~\ref{lem:partialChol}
	\begin{enumerate}
		\item removes the failure probability factor $\delta$ and just claims high probability.
			The running time of the algorithm $\partialChol(\LL,C,\eps)$ in \cite{DKP+17} is \\
			$O((m\log n \log^{2}(n/\delta)+ n \eps^{-2} \log n \log^{4} (n/\delta)) \operatorname{polyloglog}(n))$.
			To gain high probability, we just set the failure probability $\delta$ to $1 / n^c$ for an arbitrary constant $c > 0$.
			Then we have the running time bounded by $O((m\log^{3} n+ n \eps^{-2} \log^{5} n) \operatorname{polyloglog}(n))$.
		\item emphasizes that inverses of matrices
			$\begin{pmatrix}
    			\matlowtil_{FF} & \matzero \\
		    	\matlowtil_{CF} & \II_{CC}
			\end{pmatrix}$
			and $\calDtil$ can both be applied quickly, as they can be treated as lower triangular matrix and diagonal matrix,
			respectively.
	\end{enumerate}

The following lemma shows that
edge additions performed within $C$ commute with taking approximate partial Choleksy factorization:

\begin{lemma}\label{lem:approxCholeksyAdd}
	Given a connected undirected multi-graph $G =(V,E)$, with positive edge weights
	$w : E \to \rea_{+}$, and associated Laplacian $\LL$, a set of vertices $C \subset V$,
	and an approximate partial factorization of $\LL$:
	\begin{align}\label{eq:approxCho}
		\LL \approx_\eps
		\begin{pmatrix}
    		\matlowtil_{FF} & \matzero \\
    		\matlowtil_{CF} & \II_{CC}
  		\end{pmatrix}
  		\begin{pmatrix}
			\calDtil & \matzero \\
			\matzero & \SStil
		\end{pmatrix}
		\begin{pmatrix}
		    \matlowtil_{FF} & \matzero \\
		    \matlowtil_{CF} & \II_{CC}
		\end{pmatrix}^\top,
	\end{align}
	where $F = V\setminus C$.	
	For any edge $e$ (not necessarily in $E$) with both endpoints in $C$ and a positive
	scalar $w_e > 0$,
	\begin{align}
		\LL + w_e \bb_e \bb_e^\top \approx_\eps
		\begin{pmatrix}
    		\matlowtil_{FF} & \matzero \\
    		\matlowtil_{CF} & \II_{CC}
  		\end{pmatrix}
  		\begin{pmatrix}
			\calDtil & \matzero \\
			\matzero & \SStil + \kh{ w_e \bb_e \bb_e^\top }_{CC}
		\end{pmatrix}
		\begin{pmatrix}
		    \matlowtil_{FF} & \matzero \\
		    \matlowtil_{CF} & \II_{CC}
		\end{pmatrix}^\top.
	\end{align}
\end{lemma}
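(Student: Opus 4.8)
The plan is to reduce the statement to Fact~\ref{fact:Approximations}, part~\ref{apxmat}, which says that adding a fixed positive semidefinite matrix to both sides of a spectral approximation preserves it. First I would expand the right-hand side of~\eqref{eq:approxCho} into its additive form, exactly as in~\eqref{eq:partialCholesky}:
\[
	\begin{pmatrix} \matlowtil_{FF} & \matzero \\ \matlowtil_{CF} & \II_{CC} \end{pmatrix}
	\begin{pmatrix} \calDtil & \matzero \\ \matzero & \SStil \end{pmatrix}
	\begin{pmatrix} \matlowtil_{FF} & \matzero \\ \matlowtil_{CF} & \II_{CC} \end{pmatrix}^\top
	=
	\begin{pmatrix} \matlowtil\sm{F}{F} \\ \matlowtil\sm{C}{F} \end{pmatrix}
	\calDtil
	\begin{pmatrix} \matlowtil\sm{F}{F} \\ \matlowtil\sm{C}{F} \end{pmatrix}^\top
	+
	\begin{pmatrix} \matzero_{FF} & \matzero_{FC} \\ \matzero_{CF} & \SStil \end{pmatrix},
\]
and call this $n\times n$ matrix $\MMtil$, so the hypothesis reads $\LL \approx_\eps \MMtil$.

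Second, I would use the fact that both endpoints of $e$ lie in $C$: then the incidence vector $\bb_e \in \mathbb{R}^n$ vanishes on the coordinates of $F$, so
\[
	w_e \bb_e \bb_e^\top
	=
	\begin{pmatrix} \matzero_{FF} & \matzero_{FC} \\ \matzero_{CF} & \kh{w_e \bb_e \bb_e^\top}_{CC} \end{pmatrix}
\]
as $n\times n$ matrices. Adding this rank-one term to $\MMtil$ leaves the $\matlowtil$ and $\calDtil$ factors untouched and modifies only the $(C,C)$ block, replacing $\SStil$ by $\SStil + \kh{w_e \bb_e \bb_e^\top}_{CC}$; re-expanding via~\eqref{eq:partialCholesky} then shows that $\MMtil + w_e \bb_e \bb_e^\top$ is precisely the decomposition matrix on the right-hand side of the claim.

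Finally, since $w_e > 0$ the matrix $\CC := w_e \bb_e \bb_e^\top$ is positive semidefinite, so Fact~\ref{fact:Approximations}, part~\ref{apxmat}, applied to $\LL \approx_\eps \MMtil$ yields $\LL + \CC \approx_\eps \MMtil + \CC$, which is exactly the desired conclusion. The only work here is the bookkeeping in the first two steps — verifying the block expansion of the triangular-times-block-diagonal product and that the rank-one update sits entirely in the $C$-block — and both are immediate from the definitions together with $e$ having both endpoints in $C$. I do not expect a genuine obstacle: the lemma is essentially the assertion that ``$+\,\CC$'' commutes with ``$\approx_\eps$'' when $\CC \succeq 0$, specialized to the case where $\CC$ is a single reweighted edge supported on $C$; the mild subtlety, if any, is just tracking the block indices correctly.
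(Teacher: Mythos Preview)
Your proposal is correct and follows essentially the same approach as the paper's proof: add the positive semidefinite matrix $w_e \bb_e \bb_e^\top$ to both sides of~\eqref{eq:approxCho} via Fact~\ref{fact:Approximations} part~\ref{apxmat}, then use that $e$ has both endpoints in $C$ to absorb the rank-one term into the $(C,C)$ block of the factorization. The paper does the addition first and then expands the block product, whereas you expand first and then add, but this is only a cosmetic difference in presentation.
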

\begin{proof}
	As multiplicative approximations are preserved under additions,
	by adding $w_e \bb_e \bb_e^\top$ to both sides of Eq.~(\ref{eq:approxCho}) we have
	\begin{align*}
		\LL + w_e \bb_e \bb_e^\top &\approx_\eps
		\begin{pmatrix}
    		\matlowtil_{FF} & \matzero \\
    		\matlowtil_{CF} & \II_{CC}
  		\end{pmatrix}
  		\begin{pmatrix}
			\calDtil & \matzero \\
			\matzero & \SStil
		\end{pmatrix}
		\begin{pmatrix}
		    \matlowtil_{FF} & \matzero \\
		    \matlowtil_{CF} & \II_{CC}
		\end{pmatrix}^\top + w_e \bb_e \bb_e^\top  \\
		&= \begin{pmatrix}
	    	\matlowtil\sm{F}{F} \\
	    	\matlowtil\sm{C}{F}
	  	\end{pmatrix}
	  	\calDtil
	  	\begin{pmatrix}
	    	\matlowtil\sm{F}{F} \\
    		\matlowtil\sm{C}{F}
	  	\end{pmatrix}^\top
	  	+
	  	\begin{pmatrix}
	  		\matzero_{FF} & \matzero_{FC} \\
	  		\matzero_{CF} & \SStil + \kh{ w_e \bb_e \bb_e^\top }_{CC}
	  	\end{pmatrix} \\
	  	&=\begin{pmatrix}
    		\matlowtil_{FF} & \matzero \\
    		\matlowtil_{CF} & \II_{CC}
  		\end{pmatrix}
  		\begin{pmatrix}
			\calDtil & \matzero \\
			\matzero & \SStil + \kh{ w_e \bb_e \bb_e^\top }_{CC}
		\end{pmatrix}
		\begin{pmatrix}
		    \matlowtil_{FF} & \matzero \\
		    \matlowtil_{CF} & \II_{CC}
		\end{pmatrix}^\top.	
	\end{align*}
\end{proof}
%}

\section{Algorithm for Approximating $\theta$-Kirchhoff Edge Centrality $\mathcal{C}_\theta(e)$}\label{sec:main}

\subsection{Turning the Kirchhoff Index Into Quadratic Forms of $\vect{L}^\dag$}
\label{sec:index2norm}

%\todo{Estimating trace by Monte Carlo ...}
By Fact~\ref{fact:kirchtr}, the Kirchhoff Index of a graph equals $n$ times the trace the Laplacian's pseudoinverse.
Although the explicit pseudoinverse of $\LL$ is hard to compute,
by taking approximate Cholesky factorizations \cite{KS16,DKP+17}, one
can approximate $\zz^\top \LL^\dag \zz$ for a $\zz \in \mathbb{R}^n$ quickly.
Thus, we can use Monte-Carlo methods to estimate trace of $\LL^\dag$.

The standard Monte-Carlo method for estimating the trace of an implicit matrix $\AA$ is due to
Hutchinson~\cite{Hut89}. The idea is to estimate the trace of $\AA$ by
$\frac{1}{M}\sum\nolimits_{i=1}^M \zz_i^\top \AA \zz_i$,
where the $\zz_i$'s are random $\pm 1$ vectors (i.e., independent Bernoulli entries).
Since there is $\expec{}{\zz_i^\top \AA \zz_i} = \trace{\AA}$, by the law of large numbers,
$\frac{1}{M}\sum\nolimits_{i=1}^M \zz_i^\top \AA \zz_i$ should be close to $\trace{\AA}$ when $M$ is large.
\cite{AT11} gives a rigorous bound on the number of Monte-Carlo samples required to achieve a maximum error $\eps$
with probability at least $1 - \delta$.
\begin{lemma}[Theorem 7.1 of~\cite{AT11}, paraphrased]
\label{lem:TraceEstimation}
	Let $\AA$ be a positive semidefinite matrix with rank $\mathrm{rank}(A)$.
	Let $\zz_1,\ldots,\zz_M$ be independent random $\pm 1$ vectors.
	Let $\eps, \delta$ be scalars such that $0 < \eps \leq 1/2$
	and $0 < \delta < 1$.
	For any $M \geq 24\eps^{-2} \ln(2 \mathrm{rank}(A) / \delta)$,
	the following statement holds with probability at least $1 - \delta$:
	\begin{align*}
		\frac{1}{M}\sum\limits_{i=1}^M \zz_i^\top \AA \zz_i \approx_\eps \trace{\AA}.
	\end{align*}
\end{lemma}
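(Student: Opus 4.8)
The plan is to reduce estimating $\trace{\AA}$ to $\mathrm{rank}(\AA)$ independent one-dimensional concentration statements, one per eigenvector, and then glue them together using the additivity of $\eps$-approximation from Fact~\ref{fact:Approximations}. Diagonalize $\AA = \sum_{j=1}^{r}\lambda_j\uu_j\uu_j^\top$ with $r=\mathrm{rank}(\AA)$, $\lambda_j>0$, and $\uu_j^\top\uu_j=1$. Expanding $\zz_i^\top\AA\zz_i=\sum_j\lambda_j(\uu_j^\top\zz_i)^2$ and averaging over $i$ gives
\[
\frac1M\sum_{i=1}^M\zz_i^\top\AA\zz_i=\sum_{j=1}^r\lambda_j Y_j,\qquad Y_j:=\frac1M\sum_{i=1}^M(\uu_j^\top\zz_i)^2,
\]
and, since $\mathbb{E}[\zz_i\zz_i^\top]=\II$, we get $\mathbb{E}[Y_j]=\uu_j^\top\uu_j=1$. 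The crucial observation is that if $Y_j\approx_\eps 1$ holds simultaneously for every $j$, then, because each $\lambda_j\ge 0$, item~\ref{apxnum} of Fact~\ref{fact:Approximations} gives $\lambda_jY_j\approx_\eps\lambda_j$, and the second item of that Fact (iterated) gives $\sum_j\lambda_jY_j\approx_\eps\sum_j\lambda_j=\trace{\AA}$, which is exactly the claim. So it suffices to show that for a single fixed unit vector $\uu$, the average $Y:=\frac1M\sum_{i=1}^M W_i$ of the i.i.d.\ nonnegative mean-$1$ variables $W_i:=(\uu^\top\zz_i)^2$ satisfies $Y\approx_\eps 1$ except with probability at most $\delta/r$.

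For this one-dimensional step I would establish two sub-exponential moment-generating-function bounds for $W:=(\uu^\top\zz)^2$. For the upper tail, linearize the square via $e^{ty^2}=\mathbb{E}_{g\sim N(0,1)}[e^{\sqrt{2t}\,gy}]$, interchange expectations, factor the inner expectation over the independent Rademacher coordinates of $\zz$, and apply $\cosh(x)\le e^{x^2/2}$ together with $\sum_k u_k^2=1$; this yields $\mathbb{E}[e^{tW}]\le(1-2t)^{-1/2}$ for $0\le t<1/2$, exactly the bound a $\chi^2_1$ variable satisfies. For the lower tail, $e^{-sW}\le 1-sW+\tfrac12 s^2W^2$ together with $\mathbb{E}[W^2]=\mathbb{E}[(\uu^\top\zz)^4]\le 3(\uu^\top\uu)^2=3$ gives $\mathbb{E}[e^{-sW}]\le e^{-s+3s^2/2}$ for $s\ge 0$. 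Feeding these into the standard Chernoff argument for $\sum_i W_i$, and using $\eps\le 1/2$ so that $1+\eps/2\le e^\eps$ and $1-\eps/2\ge e^{-\eps}$, bounds both $\Pr[Y\ge e^\eps]$ and $\Pr[Y\le e^{-\eps}]$ by $e^{-M\eps^2/24}$ (the optimal scale $s=\eps/6$ in the lower-tail Chernoff bound is precisely what produces the constant $24$). A union bound over the $r$ eigenvectors then makes all the events $Y_j\approx_\eps 1$ hold simultaneously except with probability at most $2r\,e^{-M\eps^2/24}$, which is $\le\delta$ exactly when $M\ge 24\eps^{-2}\ln(2r/\delta)=24\eps^{-2}\ln(2\,\mathrm{rank}(\AA)/\delta)$; on that event, the reduction in the first paragraph closes the argument.

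The step I expect to be the main obstacle is the upper-tail bound $\mathbb{E}[e^{tW}]\le(1-2t)^{-1/2}$: one has to recognize that, although $\zz$ is a discrete $\pm1$ vector rather than Gaussian, the squared linear form $(\uu^\top\zz)^2$ is still MGF-dominated by a chi-square, and the Gaussian-linearization identity together with $\cosh(x)\le e^{x^2/2}$ is exactly what makes this go through — it also cleanly explains why $\mathrm{rank}(\AA)$, not the ambient dimension $n$, is the only size parameter that appears, since the union bound runs over eigenvectors. Pinning down the numerical constant $24$ is the remaining fiddly bookkeeping, but it is routine once the two moment bounds are in hand. Alternatively, since this statement is quoted from~\cite{AT11}, one can simply invoke the relative-error form of Hutchinson's estimator there with accuracy parameter $\eps/2$ and observe that $[1-\eps/2,\,1+\eps/2]\subseteq[e^{-\eps},\,e^{\eps}]$ for $\eps\le 1/2$; the resulting factor of $4$ in the sample count is precisely the gap between the constant $24$ here and the constant appearing in~\cite{AT11}.
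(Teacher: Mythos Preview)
The paper does not give its own proof of this lemma; it is stated as a paraphrase of Theorem~7.1 in~\cite{AT11} and used as a black box. The only additional context is the Remark immediately following the statement, which notes that Hutchinson's estimator can be viewed through the discrete Johnson--Lindenstrauss lemma of~\cite{Ach01}, and that this is precisely how~\cite{AT11} obtains the bound.

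Your proposal is therefore not a different route so much as a faithful reconstruction of the cited argument: the eigenvector decomposition, the per-direction sub-Gaussian bound via $\cosh(x)\le e^{x^2/2}$, and the resulting chi-square-type tail inequalities are exactly the ingredients of Achlioptas's discrete JL lemma and Avron--Toledo's trace analysis. Your argument is correct, including the bookkeeping that produces the constant~$24$, and your closing observation---that one may simply invoke~\cite{AT11} with accuracy parameter $\eps/2$ and use $[1-\eps/2,\,1+\eps/2]\subseteq[e^{-\eps},\,e^{\eps}]$ for $\eps\le 1/2$---is precisely how the paper's $\approx_\eps$ formulation follows from the cited $(1\pm\eps)$ statement.
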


\begin{remark}
	We remark that the Hutchinson's method can be seen as Johnson-Lindenstrauss Lemma~\cite{JL84} in some sense.
	The reason is that since $\AA$ is positive semidefinite, one can write its trace as
	\begin{align*}
		\trace{\AA} = \trace{\AA^{1/2} \AA^{1/2}} = \len{\AA^{1/2}}_F^2,
	\end{align*}
	where $\len{\AA^{1/2}}_F^2$ can be seen as a sum of the squared lengths of the rows of $\AA^{1/2}$.
	By the discrete version of Johnson-Lindenstrauss Lemma from~\cite{Ach01},
	we can use a $n \times k$ random $\pm 1$ matrix $\QQ$, where $k = O(\eps^{-2} \log n)$, to reduce the dimensions:
	\begin{align*}
		\len{\AA^{1/2}}_F^2 \approx_{\eps} \frac{1}{k} \len{\AA^{1/2} \QQ}_F^2.
	\end{align*}
	This in turn implies
	\begin{align}\label{eq:jlhut}
		\trace{\AA} \approx_{\eps} \frac{1}{k} \sum\limits_{j=1}^k \qq_j^\top \AA \qq_j,
	\end{align}
	where $\qq_j$ is the $j^{\mathrm{th}}$ column of $\QQ$.
	The rhs of~(\ref{eq:jlhut}) can be seen as Hutchinson's method.
	Indeed,~\cite{AT11} used the discrete Johnson-Lindenstrauss Lemma from~\cite{Ach01}
	to prove their bound.
\end{remark}

Since $\LL^\dag$ is positive semidefinite and $\mathrm{rank}(\LL^\dag) = n - 1$,
by letting $\delta = 1/n$,
we have the following bound
on the number of Monte-Carlo samples required to achieve an $\eps$-approximation of $\trace{\LL^\dag}$ with high probability:
\begin{lemma}\label{lem:mcL}
	Let $\LL$ be a Laplacian matrix.
	Let $\zz_1,\ldots,\zz_M$ be independent random $\pm 1$ vectors.
	Let $\eps$ be a scalar such that $0 < \eps \leq 1/2$.
	For any $M \geq 48\eps^{-2} \ln(2n)$,
	the following statement holds with probability at least $1 - 1/n$:
	\begin{align*}
		\frac{1}{M}\sum\limits_{i=1}^M \zz_i^\top \LL^\dag \zz_i \approx_\eps \trace{\LL^\dag}.
	\end{align*}
\end{lemma}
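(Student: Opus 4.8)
The plan is to derive Lemma~\ref{lem:mcL} as a direct specialization of Lemma~\ref{lem:TraceEstimation} to the matrix $\AA = \LL^\dag$. First I would record the two structural facts about $\LL^\dag$ that the hypothesis of Lemma~\ref{lem:TraceEstimation} requires: that $\LL^\dag$ is positive semidefinite, and that $\mathrm{rank}(\LL^\dag) = n-1$. Both are immediate from the spectral decomposition $\LL^\dag = \sum_{i=1}^{n-1} \lambda_i^{-1} \vv_i \vv_i^\top$ introduced earlier, together with the fact that $G$ is connected, so $\LL$ has exactly $n-1$ strictly positive eigenvalues $\lambda_1,\dots,\lambda_{n-1}$, each contributing the strictly positive eigenvalue $\lambda_i^{-1}$ to $\LL^\dag$; hence $\LL^\dag \succeq 0$ and its rank is exactly $n-1$.

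Next I would invoke Lemma~\ref{lem:TraceEstimation} with $\AA = \LL^\dag$ and $\delta = 1/n$. Its hypothesis then reads $M \geq 24\eps^{-2}\ln\bigl(2(n-1)n\bigr)$, and its conclusion is exactly $\frac{1}{M}\sum_{i=1}^M \zz_i^\top \LL^\dag \zz_i \approx_\eps \trace{\LL^\dag}$ with probability at least $1 - 1/n$, which is what we want. It therefore only remains to verify that the bound $M \geq 48\eps^{-2}\ln(2n)$ stated in the lemma implies this hypothesis, i.e. that $48\eps^{-2}\ln(2n) \geq 24\eps^{-2}\ln\bigl(2n(n-1)\bigr)$. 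This follows from the crude estimate $2n(n-1) \leq (2n)^2$, which gives $\ln\bigl(2n(n-1)\bigr) \leq 2\ln(2n)$; multiplying through by $24\eps^{-2}$ yields the claim. One should also note that $0 < \delta = 1/n < 1$ holds (we may assume $n \geq 2$), so Lemma~\ref{lem:TraceEstimation} applies.

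There is essentially no substantive obstacle here: the statement is a convenience corollary obtained by fixing the failure probability at the scale $1/n$ and absorbing the resulting logarithmic factor into the leading constant (the jump from $24$ to $48$). The only points requiring any care are the rank count, which relies on connectivity of $G$, and the elementary inequality bounding $\ln\bigl(2n(n-1)\bigr)$ by $2\ln(2n)$; both are routine.
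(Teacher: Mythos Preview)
Your proposal is correct and follows exactly the approach the paper takes: the paper derives this lemma in one sentence by noting that $\LL^\dag$ is positive semidefinite with $\mathrm{rank}(\LL^\dag)=n-1$ and setting $\delta=1/n$ in Lemma~\ref{lem:TraceEstimation}. You have simply supplied the arithmetic check that $48\eps^{-2}\ln(2n)\geq 24\eps^{-2}\ln(2n(n-1))$, which the paper leaves implicit.
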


A direct conclusion of Lemma~\ref{lem:mcL} is that for an edge $e\in E$, its $\theta$-Kirchhoff edge centrality
satisfies
\begin{align*}
	\mathcal{C}_\theta(e) = \mathcal{K}(G\bsk e) = n \trace{\kh{\LL\bsk e}^\dag} \approx_\epsilon
	\frac{n}{M}\sum\limits_{i=1}^M \zz_i^\top \Lke^\dag \zz_i.
\end{align*}
Thereby, the task of approximating the $\theta$-Kirchhoff edge centrality for all
$e\in E$ can be divided
into $O(\eps^{-2}\log n)$ independent tasks, each of which is to compute quadratic forms $\zz^\top \Lke^\dag \zz$ for
a fixed $\zz\in \mathbb{R}^n$ for all $e\in E$. We formulate these tasks in the following problem:

\begin{problem}\label{prob:norm}
	Given a connected undirected graph $G = (V,E)$ with $n$ vertices, $m$ edges, positive edge weights
	$w : E \to \rea_{+}$, and associated Laplacian $\LL$,
	a set of edges $\EQ\subset E$ such that every vertex in $V$ is incident to some edge $e \in \EQ$,
	a scalar $0 < \theta \leq 1/2$,
	%, $0 < \eps < 1$,
	and a vector $\zz \in \mathbb{R}^n$,
	find (approximately) $\zz^\top \LL^\dag \zz$ for all $e \in \EQ$.
\end{problem}

%Estimating trace by Monte Carlo~\cite{AT11} and lower bound of number of queries needed~\cite{WWZ14}.

\subsection{Computing Quadratic Forms of $\vect{L}^\dag$
Upon Edge Deactivation}
\label{sec:phase2}

The idea of solving Problem~\ref{prob:norm} is to use recursions based on partial Cholesky factorizations.
We summarize the key steps in the following enumeration:

\begin{enumerate}
	\item If $\LL$ only have $O(1)$ vertices, invert $\LL\bsk e$ to compute $\zz^\top \Lke^{\dag} \zz$ for all $e\in \EQ$ and return.
	\item Divide edges in $\EQ$ into $E^{(1)}, E^{(2)}$ with equal sizes.
	\item \label{item:start} Let $C$ denote endpoints of edges in $E^{(1)}$ and $F = V\setminus C$.
	\item \label{item:Choleksy} By taking (approximate) partial Cholesky factorization of $\LL$,
		find a vector $\yy = \begin{pmatrix} \yy_F \\ \yy_C \end{pmatrix}$, a diagonal matrix $\DD_{\sizeof{F}\times \sizeof{F}}$,
		and a Laplacian matrix $\SS$ whose edges are supported on $C$,
		such that for each edge $e \in E^{(1)}$,
		$\zz^\top \Lke^{\dag} \zz$ can be evaluated by computing $\yy_F^\top \DD^{-1} \yy_F + \yy_C^\top \kh{\SS\bsk e}^\dag \yy_C$.
	\item \label{item:end} Compute $\yy_F^\top \DD^{-1} \yy_F$ by inverting $\DD$ and
		$\yy_C^\top \kh{\SS\bsk e}^\dag \yy_C$ for all $e\in E^{(1)}$ by recursion,
		then use $\yy_F^\top \DD^{-1} \yy_F + \yy_C^\top \kh{\SS\bsk e}^\dag \yy_C$ to evaluate
		$\zz^\top \Lke^{\dag} \zz$ for all $e \in E^{(1)}$.
	\item Repeat steps~\ref{item:start}\,-\,\ref{item:end} to $E^{(2)}$ to compute $\zz^\top \Lke^\dag \zz$
	for all $e \in E^{(2)}$. % then
		%combine results of $E^{(1)}$, $E^{(2)}$ and return.
\end{enumerate}

The reason that in Step~\ref{item:end} we can compute $\yy_C^\top \kh{\SS\bsk e}^\dag \yy_C$ for each $e\in E^{(1)}$ by recursion
is that $\SS$ is a Laplacian matrix whose edges are supported on $C$,
and hence to compute $\yy_C^\top \kh{\SS\bsk e}^\dag \yy_C$ for all $e\in E^{(1)}$
is just a smaller-sized version of Problem~\ref{prob:norm} in which $\LL = \SS$ and $\EQ = E^{(1)}$.

In the rest of this subsection we give first an algorithm that solves Problem~\ref{prob:norm} exactly and then
an algorithm that solves Problem~\ref{prob:norm} approximately.

\subsubsection{Computing Exact Quadratic Forms of $\vect{L}^\dag$ Upon Edge Deactivation} \label{sec:exact}

We first give an algorithm $\exactQuad(\LL, \EQ, w, \zz, \theta)$ that computes the exact value of $\zz^\top \Lke^{\dag} \zz$
for a fixed $\zz \in \mathbb{R}^n$  for all $e\in \EQ$ (Here $w$ is the edge weight function).

In this algorithm, we find $\yy$, $\DD$, and $\SS$ in step~\ref{item:Choleksy}
by eliminating vertices in $F$ and
obtain an exact partial Cholesky factorization
of $\LL$. The following Lemma shows how to find them when an exact
partial Cholesky factorization of $\LL$ is given:

\begin{lemma}\label{lem:exactRecursion}
	For a graph $G = (V,E)$ with associate Laplacian $\LL$ and a set of vertices $C\subset V$.
	Let $F = V\setminus C$, and the partial Choleksy factorization of $\LL$ be
	\begin{align*}
   		\LL %&= \SS^{(\sizeof{F})} + \matlow \calD \matlow^\top =
   		=
  		\begin{pmatrix}
	   		\matlow_{FF} & \matzero \\
	    	\matlow_{CF} & \II_{CC}
	  	\end{pmatrix}
	  	\begin{pmatrix}
  			\calD & \matzero \\
			\matzero & \SS
		\end{pmatrix}
  		\begin{pmatrix}
   			\matlow_{FF} & \matzero \\
   			\matlow_{CF} & \II_{CC}
		\end{pmatrix}^{\top}.\
	\end{align*}
	Let $\yy = \begin{pmatrix} \yy_F \\ \yy_C \end{pmatrix} =
 			\begin{pmatrix}
    			\matlow_{FF} & \matzero \\
    			\matlow_{CF} & \II_{CC}
  			\end{pmatrix}^{-1}
 			\zz$,
	then for each edge $e\in E$ with both endpoints in $C$
	the following statement holds:
	\begin{align*}
		\zz^\top \Lke^\dag \zz
		= \yy_F^\top \kh{\calD^{-1}} \yy_F + \yy_C^\top \kh{ \SS \bsk e }^\dag \yy_C.
	\end{align*}
\end{lemma}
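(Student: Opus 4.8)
The plan is to combine the edge‑deletion/Cholesky commutativity of Lemma~\ref{lem:choleskyDel} with a short manipulation of quadratic forms in pseudoinverses. Because $e$ has both endpoints in $C$, Lemma~\ref{lem:choleskyDel} gives the factorization
\[
  \Lke = \QQ\,\MM\,\QQ^\top,\qquad
  \QQ:=\begin{pmatrix}\matlow_{FF}&\matzero\\\matlow_{CF}&\II_{CC}\end{pmatrix},\qquad
  \MM:=\begin{pmatrix}\calD&\matzero\\\matzero&\SS\bsk e\end{pmatrix}.
\]
Here $\QQ$ is invertible: $\matlow_{FF}$ is lower triangular up to row exchanges with all diagonal entries equal to $1$ (each Cholesky column $\cc_j$ carries a $1$ in position $v_j$), and the lower‑right block is $\II_{CC}$. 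The pivots $\alpha_j$ are strictly positive, so $\calD$ is invertible, while $\SS\bsk e=\SC(\Lke,C)$ is a Laplacian on $C$; hence $\MM^\dag$ is block diagonal with blocks $\calD^{-1}$ and $(\SS\bsk e)^\dag$. Consequently the lemma reduces to the single identity $\zz^\top\Lke^\dag\zz=\yy^\top\MM^\dag\yy$ with $\yy=\QQ^{-1}\zz$, since partitioning $\yy=(\yy_F;\yy_C)$ against the block form of $\MM^\dag$ immediately yields $\yy_F^\top\calD^{-1}\yy_F+\yy_C^\top(\SS\bsk e)^\dag\yy_C$.

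For the reduced identity I would argue as follows. Since $\zz^\top\Lke^\dag\zz$ is unchanged when $\zz$ is translated by a multiple of $\one$, we may take $\zz\in\mathrm{range}(\Lke)$ (which equals $\one^\perp$ for the connected graph $G\bsk e$); equivalently $\yy=\QQ^{-1}\zz\in\mathrm{range}(\MM)$, because $\mathrm{range}(\Lke)=\QQ\cdot\mathrm{range}(\MM)$. Pick any $\ww$ with $\Lke\ww=\zz$: the solution set is $\Lke^\dag\zz+\ker\Lke$, and since $\zz\perp\ker\Lke$ we get $\ww^\top\zz=\zz^\top\Lke^\dag\zz$ for every such $\ww$. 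Taking the explicit choice $\ww:=\QQ^{-\top}\MM^\dag\yy$ and using $\MM\MM^\dag\yy=\yy$ (valid because $\yy\in\mathrm{range}(\MM)$), one checks $\Lke\ww=\QQ\MM\QQ^\top\QQ^{-\top}\MM^\dag\yy=\QQ\MM\MM^\dag\yy=\QQ\yy=\zz$, so this $\ww$ is admissible and
\[
  \zz^\top\Lke^\dag\zz=\ww^\top\zz=(\QQ^{-\top}\MM^\dag\yy)^\top(\QQ\yy)=\yy^\top\MM^\dag\yy,
\]
which is exactly what we need.

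The step I expect to be the main obstacle is keeping the pseudoinverse honest in the middle paragraph. It is tempting to write $(\QQ\MM\QQ^\top)^\dag=\QQ^{-\top}\MM^\dag\QQ^{-1}$, but this is false in general: $\QQ^\top\QQ$ need not commute with the orthogonal projection onto $\mathrm{range}(\MM)$, so $\QQ^{-\top}\MM^\dag\QQ^{-1}$ is merely a reflexive generalized inverse of $\Lke$, not its Moore--Penrose pseudoinverse. What makes the scalar identity $\zz^\top\Lke^\dag\zz=\yy^\top\MM^\dag\yy$ hold exactly is that only the scalar is needed, together with $\mathrm{range}(\Lke)=\QQ\cdot\mathrm{range}(\MM)$ and $\zz\in\mathrm{range}(\Lke)$; so the one point genuinely requiring care is the hypothesis $\zz\perp\one$ — i.e. that the Hutchinson test vectors are taken orthogonal to $\one$ before $\exactQuad$ is invoked (which is harmless, as $\zz^\top\Lke^\dag\zz$ depends only on that component of $\zz$).
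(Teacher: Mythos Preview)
Your proof is correct and, in fact, more careful than the paper's own. The paper takes precisely the shortcut you warn against: it ``inverts both sides'' of the factorization and asserts
\[
\Lke^\dag
= \begin{pmatrix}\matlow_{FF}&\matzero\\\matlow_{CF}&\II_{CC}\end{pmatrix}^{-\top}
\begin{pmatrix}\calD^{-1}&\matzero\\\matzero&(\SS\bsk e)^\dag\end{pmatrix}
\begin{pmatrix}\matlow_{FF}&\matzero\\\matlow_{CF}&\II_{CC}\end{pmatrix}^{-1},
\]
then multiplies on both sides by $\zz$. As you point out, this matrix equality is false in general: $\QQ^{-\top}\MM^\dag\QQ^{-1}$ is a reflexive generalized inverse of $\Lke$ but not its Moore--Penrose pseudoinverse, because $\QQ\MM\MM^\dag\QQ^{-1}$ need not be symmetric. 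A small example confirms it (e.g.\ the three-vertex path with $F$ a single leaf): for $\zz=\one$ the left side of the lemma is $0$ while the right side is $\yy_F^\top\calD^{-1}\yy_F>0$, since $\yy_F\neq 0$. So the lemma as literally stated---for arbitrary $\zz$---is false, and the paper's proof has a genuine gap.

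Your route fixes this cleanly: restricting to $\zz\perp\one$, you use only that $\mathrm{range}(\Lke)=\QQ\cdot\mathrm{range}(\MM)$ and that \emph{any} solution of $\Lke\ww=\zz$ computes the quadratic form, then exhibit $\ww=\QQ^{-\top}\MM^\dag\yy$ explicitly. This yields exactly the scalar identity needed, with no spurious operator equality. Your closing remark is also the right correction to the algorithm: the Hutchinson vectors $\zz_i$ should be projected onto $\one^\perp$ before being fed to $\exactQuad$ (or to $\quadEst$), which leaves $\trace{\Lke^\dag}$ and its estimator unchanged since $\Lke^\dag\one=0$.
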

\begin{proof}
	By Lemma~\ref{lem:choleskyDel}, $\theta$-deletions performed to edges with
	both endpoints in $C$ commute with taking
	partial Cholesky factorization. Thus, for each $e\in E$ with both endpoints in $C$, we have
		\begin{align}\label{eq:exact1}
			\LL\bsk e &=
			\begin{pmatrix}
	    		\matlow_{FF} & \matzero \\
	    		\matlow_{CF} & \II_{CC}
	  		\end{pmatrix}
	  		\begin{pmatrix}
  				\calD & \matzero \\
				\matzero & \SS \bsk e
			\end{pmatrix}
  			\begin{pmatrix}
    			\matlow_{FF} & \matzero \\
    			\matlow_{CF} & \II_{CC}
  			\end{pmatrix}^{\top}.
  		\end{align}
  		Inverting both sides of Eq.~(\ref{eq:exact1}) leads to
  		\begin{align*}
  			%%%%%%%%%%%%%%%%%%%%%%%%%%%%%%%%%%%%
			\Lke^\dag &= \begin{pmatrix}
	    		\matlow_{FF} & \matzero \\
	    		\matlow_{CF} & \II_{CC}
	  		\end{pmatrix}^{-\top}
	  		\begin{pmatrix}
  				\calD^{-1} & \matzero \\
				\matzero & \kh{ \SS \bsk e }^\dag
			\end{pmatrix}
  			\begin{pmatrix}
    			\matlow_{FF} & \matzero \\
    			\matlow_{CF} & \II_{CC}
  			\end{pmatrix}^{-1}.
  		\end{align*}
  		Substituting
  		$
  			\yy = \begin{pmatrix} \yy_F \\ \yy_C \end{pmatrix} =
 			\begin{pmatrix}
    			\matlow_{FF} & \matzero \\
    			\matlow_{CF} & \II_{CC}
  			\end{pmatrix}^{-1}
 			\zz
  		$,
  		we obtain
  		\begin{align*}
			\zz^\top \Lke^\dag \zz
			&= \begin{pmatrix} \yy_F & \yy_C \end{pmatrix}
			\begin{pmatrix}
			\calD^{-1} & \matzero \\
			\matzero & \kh{ \SS \bsk e }^\dag
			\end{pmatrix}
			\begin{pmatrix} \yy_F \\ \yy_C \end{pmatrix} \\
			&= \yy_F^\top \kh{\calD^{-1}} \yy_F + \yy_C^\top \kh{ \SS \bsk e }^\dag \yy_C.
		\end{align*}
This completes the proof.
\end{proof}

%The algorithm is essentially a recursion on edges,
%and is based on the partial
%Cholesky factorization of $\LL$.
%We state how $\exactQuad(\LL, E, \zz, \theta)$ works in the following enumeration:

We give the pseudocode for $\exactQuad$ in Algorithm~\ref{alg:exactNorm}.
Its performance is characterized in Lemma~\ref{lem:exactNorm}. \\

\begin{restatable}[]{lemma}{thmexactNorm}
	\label{lem:exactNorm}
	Given a connected undirected graph $G = (V,E)$ with $n$ vertices, $m$ edges, positive edge weights
	$w : E \to \rea_{+}$, and associated Laplacian $\LL$,
	a set of edges $\EQ\subset E$ such that every vertex in $V$ is incident with some edge $e \in \EQ$,
	a vector $\zz\in \mathbb{R}^n$,
  	and a scalar $0 < \theta \leq 1/2$,
  	the algorithm $\exactQuad(\LL, \EQ, w, \zz, \theta)$ returns a set of pairs
  	$N = \{ (e,n_e) \mid e \in \EQ \}$, where
	\[
	 n_e = \zz^\top \Lke^\dag \zz.
	\]
%	and
	%$\Lke$ is obtained from $L$ by decreasing the weight of edge $e$ from $w(e)$ to $\theta w(e)$, i.e.,
	%\[
	%	\Lke = \LL - (1 - \theta) w(e)\bb_e\bb_e^\top.
	%\]
%	The total running time of this algorithm is bounded by
%	$O(m \eps^{-2} \log^2 m \log^6 n \operatorname{polyloglog}(n)) =
%	\tilde{O}(m \eps^{-2})$.
	%Here
	%$\Lke$ is obtained from $L$ by decreasing the weight of edge $e$ from $w_e$ to $\theta w_e$, i.e.,
	%\[
	%	\Lke = \LL - (1 - \theta) w(e)\bb_e\bb_e^\top.
	%\]
	The total running time of this algorithm is bounded by
	$O(n^{\omega - 1} m)$.
\end{restatable}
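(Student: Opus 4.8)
The plan is to prove correctness of $\exactQuad$ by induction on $\sizeof{\EQ}$ and to bound its running time by a recursion-tree argument, following the divide-and-conquer scheme of Section~\ref{sec:phase2}.

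\emph{Correctness.} In the base case $\LL$ has $O(1)$ vertices, and for each $e\in\EQ$ the algorithm forms $\Lke = \LL - (1-\theta)w(e)\bb_e\bb_e^\top$ explicitly, takes its pseudoinverse, and outputs $n_e = \zz^\top\Lke^\dag\zz$, which is correct by definition. For the inductive step, the algorithm splits $\EQ = E^{(1)}\cup E^{(2)}$ into two (nearly) equal halves, lets $C$ be the set of endpoints of the edges in $E^{(1)}$, $F = V\setminus C$, computes an exact partial Cholesky factorization of $\LL$ eliminating $F$ (cf.\ Eq.~(\ref{eq:partialCholesky})), thereby obtaining the lower-triangular factor, the diagonal $\calD$ of pivots, and the Schur complement $\SS = \SC(\LL,C)$, and sets the vector $\yy$, split into blocks $\yy_F,\yy_C$, as in Lemma~\ref{lem:exactRecursion}. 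Since every $e\in E^{(1)}$ has both endpoints in $C$, Lemma~\ref{lem:choleskyDel} shows that replacing $\SS$ by $\SS\bsk e$ turns this into an exact partial Cholesky factorization of $\Lke$, and Lemma~\ref{lem:exactRecursion} then gives $\zz^\top\Lke^\dag\zz = \yy_F^\top\calD^{-1}\yy_F + \yy_C^\top\kh{\SS\bsk e}^\dag\yy_C$. Here $\SS$ is a Laplacian whose edges are supported on $C$, every vertex of $C$ is an endpoint of some edge in $E^{(1)}$, and $\SS\bsk e$ is exactly the $\theta$-deletion of the original weight-$w(e)$ copy of $e$ inside $\SS$; hence $(\SS, E^{(1)}, w, \yy_C, \theta)$ is a legitimate, strictly smaller instance, so by the inductive hypothesis the recursive call returns $\yy_C^\top\kh{\SS\bsk e}^\dag\yy_C$ for every $e\in E^{(1)}$. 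Adding the precomputed scalar $\yy_F^\top\calD^{-1}\yy_F$ gives the correct $n_e$ for all $e\in E^{(1)}$; $E^{(2)}$ is handled symmetrically, and $E^{(1)}\cup E^{(2)} = \EQ$.

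\emph{Running time.} The crucial invariant is that at any node handling a Laplacian on $\ncur$ vertices with query set $\EQ_{\mathrm{cur}}$, we have $\ncur \le \min\setof{n,\ 2\sizeof{\EQ_{\mathrm{cur}}}}$: the bound $\ncur\le n$ because the current vertex set is always a subset of $V$ (it shrinks to $C$ at each level), and $\ncur\le 2\sizeof{\EQ_{\mathrm{cur}}}$ because every current vertex is an endpoint of some current query edge, a property required of the input and preserved since $C$ is the endpoint set of $E^{(1)}$. The query set halves each level, so the recursion has depth $O(\log m)$. The dominant local cost is the exact elimination of $F$ — forming $\SS = \SC(\LL,C)$, the pivots $\calD$, and the sub-solves defining $\yy$ — which, although fill-in makes the intermediate Laplacians dense, can be carried out in $O(\ncur^\omega)$ time via fast matrix multiplication; recombination costs $O(\sizeof{\EQ_{\mathrm{cur}}})$ per node and the leaves cost $O(1)$ per edge. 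Summing over the recursion tree, level $d$ has $2^d$ nodes, each on at most $\min\setof{n,\ m/2^{d-1}}$ vertices, so it contributes $O\kh{2^d\min\setof{n,\ m/2^{d-1}}^\omega}$; this is geometrically increasing while $m/2^{d-1}\ge n$ and geometrically decreasing thereafter, and in both regimes the total telescopes to $O(m\,n^{\omega-1})$. The leaf and recombination work sums to $O(m\log m)$, which is dominated. Hence $\exactQuad$ runs in $O(n^{\omega-1}m)$ time.

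\emph{Main obstacle.} The delicate part is the running-time bound, not correctness: the naive reading of the recurrence $T = 2\,T(\cdot)+O(\ncur^\omega)$ only yields $O(m^\omega)$, which is far worse for dense graphs. The improvement to $O(n^{\omega-1}m)$ relies on exploiting that $\ncur$ never exceeds $n$, so the per-level work saturates at $O(2^d n^\omega)$ instead of growing, and on confirming that each elimination step is genuinely $O(\ncur^\omega)$ even though the Schur complements become dense at deeper levels. A secondary point is to make sure the recursive instance is really of the required form — in particular that $\SS\bsk e$, meaning the $\theta$-deletion of the original weight-$w(e)$ copy of $e$ inside the multigraph $\SS$, is consistent with the problem's $\theta$-deletion, which is exactly what Lemma~\ref{lem:choleskyDel} provides.
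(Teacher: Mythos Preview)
Your proposal is correct and follows essentially the same approach as the paper. You spell out the correctness induction (which the paper compresses into ``correctness is clear by Lemma~\ref{lem:exactRecursion}'') and carry out the running-time analysis via a level-by-level recursion-tree sum rather than the paper's two-regime recurrence $T(m)=2T(m/2)+O(\min\{n,m\}^\omega)$, but the underlying observation---that $\ncur\le\min\{n,2\sizeof{\EQ_{\mathrm{cur}}}\}$ so the per-node cost saturates at $O(n^\omega)$ once the query set exceeds $n$---is identical.
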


\begin{algorithm}[H]
	\caption{$\exactQuad(\LL, \EQ, w, \zz, \theta)$}
	\label{alg:exactNorm}
	\Input{
		$\LL$: A graph Laplacian.\\
		$\EQ$: A set of edges supported on vertices in $\LL$.\\
		$w$: An edge weight function.\\ % which satisfies $\forall e = (u,v) \in E, 0 < w_e \leq - \LL_{uv}$. \\
		$\zz$: A vector whose dimension matches the number of vertices in $\LL$. \\
		$\theta$: The weight of  edge $e$  is temporarily changed from $w(e)$  to $\theta w(e)$ when 
			it is deactivated.
		%computing the norm upon removing it. \\
	}
	\Output{
		$N = \{(e,n_e) \mid e \in \EQ\}$: $n_e = \zz^\top \Lke^\dag \zz$.
	}
	
	Let $V$ denote the vertex set of $\LL$. \;
	%Let $V$ denote all endpoints of edges in $E$, which is precisely all vertices in $\LL$. \;
	\If{$|V| = 2$} {
		For every edge $e \in \EQ$, compute
		exact $n_e = \zz^\top \left(\LL - (1 - \theta) w(e)\bb_e\bb_e^\top\right)^\dag \zz$,
		then combine the results and return $N = \{(e,n_e) \mid e \in \EQ\}$.
		\label{line:exacto1}\;
		%then combine the results and return $N = \{(e,n_e) \mid e \in E\}$. \;
	}
	Partition $\EQ$ into $E^{(1)}, E^{(2)}$ with $\sizeof{E^{(1)}} = \floor{\frac{\sizeof{\EQ}}{2}}$
		and $\sizeof{E^{(2)}} = \sizeof{\EQ} - \floor{\frac{\sizeof{\EQ}}{2}}$. \label{line:exactPartition}\;
	\For{$i = 1$ to $2$}{
		Let $C$ denote endpoints of edges in $E^{(i)}$ and $F = V \setminus{C}$. \;
		Eliminate all vertices in $F$ to get $\LL =
  			\begin{pmatrix}
	    		\matlow_{FF} & \matzero \\
		    	\matlow_{CF} & \II_{CC}
		  	\end{pmatrix}
		  	\begin{pmatrix}
  				\calD & \matzero \\
				\matzero & \SS
			\end{pmatrix}
  			\begin{pmatrix}
    			\matlow_{FF} & \matzero \\
    			\matlow_{CF} & \II_{CC}
	  		\end{pmatrix}^{\top}.
	    $ \label{line:takingCholesky} \;
		Compute $\yy = \begin{pmatrix} \yy_F \\ \yy_C \end{pmatrix} =\begin{pmatrix}
    			\matlow_{FF} & \matzero \\
    			\matlow_{CF} & \II_{CC}
  			\end{pmatrix}^{-1} \zz$. \label{line:mulStart}\;
		Compute $f = \yy_F^\top \left(\calD \right)^{-1} \yy_F$. \label{line:mulEnd} \;
		Call $\exactQuad(\SS, E^{(i)}, \yy_C, \theta)$
		 to compute $n_e^{(i)} = \yy_C^\top \kh{\SS\bsk e}^\dag \yy_C$ for all $e\in E^{(i)}$
%			let $n_e^{(i)} \gets f + n_e $,
			and store $(e, f + n_e^{(i)})$ in $N^{(i)}$.\;
	}
	\KwRet{$N = N^{(1)} \cup N^{(2)}$} \label{line:exactRet}
	
\end{algorithm}

\begin{proof}[Proof of Lemma~\ref{lem:exactNorm}]

As correctness is clear by Lemma~\ref{lem:exactRecursion},
we only need to prove the bound of running time.
Let $T(m)$ denote the running time of $\exactQuad(\LL, \EQ, w, \zz, \theta)$, where $m = \sizeof{\EQ}$.
Let $n$ denote the number of vertices in the original graph, i.e.,
the graph corresponding to $\LL$ in the earliest call to $\exactQuad$.
Let $\ncur$ denote the number of vertices in $\LL$ in the current call.
If $\ncur = 2$, the algorithm goes to Line~\ref{line:exacto1}, and hence we have $T(m) = O(1)$.
Otherwise, the algorithm goes to Lines~\ref{line:exactPartition}\,-\,\ref{line:exactRet}, among which
the most time-consuming work is eliminating $F$, inverting $\begin{pmatrix}
    			\matlow_{FF} & \matzero \\
    			\matlow_{CF} & \II_{CC}
	  		\end{pmatrix}$ and $\calD$, and recursively calling $\exactQuad$.
The first two both run in $O(\ncur^\omega)$ time, and the third runs in $2T(m/2)$ time.
%Therefore, the running time of  in the current call is $O(v^\omega)$,
%caused by eliminating all vertices in $F$ (Line~\ref{line:takingCholesky})
%and performing matrix-vector multiplications (Lines~\ref{line:mulStart}\,-\,\ref{line:mulEnd}).
When $m > n$, we can bound the number of vertices in the current call by $\ncur = O(n)$;
otherwise when $m \leq n$, we can bound the number of vertices in the current call by $\ncur = O(m)$.
Thus, We have
\begin{align}\label{eq:exactT}
	T(m) =
	\begin{cases}
	2T(m/2) + O(n^{\omega}) , & m > n, \\
	2T(m/2) + O(m^{\omega}) , & m \leq n.
	\end{cases}
\end{align}
Eq.~(\ref{eq:exactT}) leads to $T(m) = O(n^{\omega-1}m)$.

\end{proof}

\subsubsection{Approximating Quadratic Forms of $\vect{L}^\dag$ Upon Edge Deactivation} \label{sec:approx}

%We now give an approximation algorithm $\quadEst(\LL,\EQ,w,\zz,\theta,\eps)$,
%that computes an $\eps$-approximation of $\zz^\top \Lke^\dag \zz$
%for a fixed $\zz \in \mathbb{R}^n$ for each $e\in \EQ$.

%In this algorithm, since we we do not need to compute exact values of the norms,
%we find $\yy$, $\DD$, and $\SS$ in step~\ref{} by

%Since we do not need to compute exact values of the norms,
%we

Clearly, if we only want to approximately compute the quadratic forms,
we can use the approximate partial Choleksy algorithm in Lemma~\ref{lem:partialChol} to speed up.
Thereby, we give an approximation algorithm %$\quadEst$,
$\quadEst(\LL, E, w, \zz, \theta,\epsilon)$, %in Algorithm~\ref{alg:normEst}.
that computes an $\eps$-approximation of $\zz^\top \Lke^{\dag} \zz$
for a fixed $\zz \in \mathbb{R}^n$ for all $e\in \EQ$.
We also make a few modifications to maintain the error and further speed up.
We list the modifications in $\quadEst$ below:
%%%%%
\begin{enumerate}
	%\item Instead of computing norms for one $\zz \in \mathbb{R}^n$ at a time, we pass a set of vectors $Z$
	%	to $\quadEst$ and let it compute the approximate value of
 	%	$\sum\limits_{z\in Z} \zz^\top \Lke^\dag \zz$ for each $e\in E$. This will be useful when we compute
 	%	approximate edge centrality in Section~\ref{sec:centComp}.
	\item In step~\ref{item:Choleksy}, instead of computing the exact partial Cholesky factorization of $\LL$,
		we use the algorithm $\partialChol$ in Lemma~\ref{lem:partialChol} to obtain
		an approximate partial Cholesky factorization of $\LL$. However, if we pass the whole $\LL$ to $\partialChol$,
		it may change the edges in $E^{(1)}$, to which we need to perform $\theta$-deletions when deactivating them.
		%in step~\ref{item:end}.
		Thus, instead, we first delete all edges in $E^(1)$ and pass the resulting $\LL$ to $\partialChol$,
		and then add those edges back to the approximate Schur complement $\SStil$ returned by it.
		This modification is feasible since
		adding edges with both endpoints in $C$ commutes with taking approximate partial Choleksy factorization
		(Lemma~\ref{lem:approxCholeksyAdd}).
		%edge deletions performed to edges with both endpoints in $C$
		%commute with taking partial Cholesky factorization, and multiplicative approximations are preserved under additions.
		This modification is addressed on Lines~\ref{line:sparseStart}\,-\,\ref{line:sparseEnd} of Algorithm~\ref{alg:normEst}.
	\item By Lemma~\ref{lem:partialChol}, matrices $\calDtil$ and $\matlowtil$ returned by $\partialChol$ satisfy that
		$\calDtil$ is diagonal, $\matlowtil$ is sparse, and
		$\begin{pmatrix}
    		\matlowtil_{FF} & \matzero \\
    		\matlowtil_{CF} & \II_{CC}
  		\end{pmatrix}$
  		is a lower triangular matrix up to row exchanges.
  		Therefore, by applying inverses of diagonal matrix and lower triangular matrix quickly, we can compute
  		$
  			\yy = \begin{pmatrix} \yy_F \\ \yy_C \end{pmatrix} =
 			\begin{pmatrix}
    			\matlowtil_{FF} & \matzero \\
    			\matlowtil_{CF} & \II_{CC}
  			\end{pmatrix}^{-1}
 			\zz
 		$
 		and $\yy_F^\top \kh{\calDtil}^{-1} \yy_F$ in linear time of the number of nonzero entries.
 		This is addressed on
 		Lines~\ref{line:invertStart}\,-\,\ref{line:invertEnd} of Algorithm~\ref{alg:normEst}.
 		%Also, by Lemma~\ref{lem:partialChol}, $\partialChol$ is guaranteed to return a sparse $\matlowtil$.
 		%This also implies that Lines~\ref{line:invertStart}\,-\,\ref{line:invertEnd} of Algorithm~\ref{alg:normEst}
 		%can be done in linear time.
 	%\item Instead of computing norms for one $\zz \in \mathbb{R}^n$ at a time, we can pass a set of vectors $Z$
 	%	to $\normsEst$ and let it compute the approximate value of
 	%	$\sum\limits_{z\in Z} \zz^\top \Lke^\dag \zz$ for each $e\in E$.
 	\item Since errors may accumulate among different levels of the recursion,
 		we bound the error by $\eps/\log \sizeof{\EQ}$ when taking approximate partial Cholesky factorization
 		(Line \ref{line:error1} of Algorithm~\ref{alg:normEst}), and
 		bound the error by $\eps - \eps/\log \sizeof{\EQ}$ when recursively calling $\quadEst$
 		(Line \ref{line:error2} of Algorithm~\ref{alg:normEst}).
 		Thereby, the errors add up to $\eps$ as required, and only an extra $\log^2 m$ factor is added to the running time
 		(see Lemma~\ref{lem:normEst} and its proof for details).
\end{enumerate}

According to the first modification, in this algorithm, we find $\yy$, $\DD$, and $\SS$ in step~\ref{item:Choleksy} by
taking approximate partial Cholesky factorization. The following Lemma shows how to find  them when an approximate
partial Cholesky factorization is given.

\begin{lemma}\label{lem:approxRecursion}
	For a graph $G = (V,E)$ with associate Laplacian $\LL$ and a set of vertices $C\subset V$.
	Let $F = V\setminus C$.
	Let $E^{(1)} \subset E$ be a set of edges with both endpoints in $C$,
	and $\LH$ be the Laplacian matrix corresponding to edges in $E^{(1)}$, i.e.,
	$\LH = \sum\nolimits_{e\in E^{(1)}} w(e)\bb_e \bb_e^\top$. %where $w$ is the edge weight function of $G$.
	Clearly $\LL - \LH$ is also a Laplacian.
	Let an approximate partial factorization of $\LL - \LH$ be
	\begin{align}%\label{eq:approxCho}
		\LL - \LH \approx_\eps
		\begin{pmatrix}
    		\matlowtil_{FF} & \matzero \\
    		\matlowtil_{CF} & \II_{CC}
  		\end{pmatrix}
  		\begin{pmatrix}
			\calDtil & \matzero \\
			\matzero & \SStil
		\end{pmatrix}
		\begin{pmatrix}
		    \matlowtil_{FF} & \matzero \\
		    \matlowtil_{CF} & \II_{CC}
		\end{pmatrix}^\top.
	\end{align}
	Let $\yy = \begin{pmatrix} \yy_F \\ \yy_C \end{pmatrix} =
 			\begin{pmatrix}
    			\matlowtil_{FF} & \matzero \\
    			\matlowtil_{CF} & \II_{CC}
  			\end{pmatrix}^{-1}
 			\zz$ and $\SStil' = \SStil + \LH_{CC}$,
	then for each edge $e\in E^{(1)}$
	the following statement holds:
	\begin{align*}
		\zz^\top \Lke^\dag \zz
		\approx_\eps \yy_F^\top \kh{\calDtil^{-1}} \yy_F + \yy_C^\top \kh{ \SStil' \bsk e }^\dag \yy_C.
	\end{align*}
\end{lemma}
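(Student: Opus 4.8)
The plan is to run the recursion of Lemma~\ref{lem:exactRecursion} with the \emph{approximate} partial Cholesky factorization in place of the exact one, tracking how the $\theta$-deletion of $e$ and the re-insertion of the edges of $E^{(1)}$ interact with the factorization. Concretely, I would first rewrite, using $e\in E^{(1)}$ and that both endpoints of $e$ lie in $C$,
\[
  \LL\bsk e \;=\; (\LL-\LH)\;+\;\Big(\textstyle\sum_{e'\in E^{(1)}\setminus\{e\}} w(e')\,\bb_{e'}\bb_{e'}^\top + \theta\, w(e)\,\bb_e\bb_e^\top\Big),
\]
noting that the second summand, call it $\LH'$, is positive semidefinite with all edges supported on $C$, and that $\SStil + \LH'_{CC} = \SStil + \LH_{CC} - (1-\theta)w(e)\,\bb_e\bb_e^\top = \SStil'\bsk e$. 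Since $\LL-\LH$ is exactly the matrix for which the hypothesis supplies an $\eps$-approximate partial factorization, applying Lemma~\ref{lem:approxCholeksyAdd} once for each edge of $\LH'$ (its proof uses only that an $\eps$-spectral approximation survives adding a PSD matrix supported on $C$) yields
\[
  \LL\bsk e \;\approx_\eps\;
  \begin{pmatrix}\matlowtil_{FF} & \matzero\\ \matlowtil_{CF} & \II_{CC}\end{pmatrix}
  \begin{pmatrix}\calDtil & \matzero\\ \matzero & \SStil'\bsk e\end{pmatrix}
  \begin{pmatrix}\matlowtil_{FF} & \matzero\\ \matlowtil_{CF} & \II_{CC}\end{pmatrix}^{\!\top} .
\]

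Writing $\AA$ for the triangular factor and $M = \operatorname{diag}(\calDtil,\SStil'\bsk e)$ for the middle factor, the next step is to invert. By Lemma~\ref{lem:partialChol}, $\AA$ is lower triangular up to row exchanges with unit diagonal, hence invertible, and $\calDtil$ is diagonal with positive entries, so $M^\dag=\operatorname{diag}(\calDtil^{-1},(\SStil'\bsk e)^\dag)$. Both $\LL\bsk e$ and $\AA M\AA^\top$ are symmetric positive semidefinite with kernel $\mathrm{span}(\one)$, so by the pseudoinverse-preservation property one gets $(\LL\bsk e)^\dag \approx_\eps \AA^{-\top}M^\dag\AA^{-1}$; conjugating by $\zz$, substituting $\yy=\AA^{-1}\zz=(\yy_F;\yy_C)$, and using Fact~\ref{fact:Approximations}, item~\ref{part:CompositionMatrix} with the single column $\zz$ gives
\[
  \zz^\top(\LL\bsk e)^\dag\zz \;\approx_\eps\; \yy_F^\top\calDtil^{-1}\yy_F + \yy_C^\top(\SStil'\bsk e)^\dag\yy_C,
\]
which is the desired statement.

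The step I expect to be the main obstacle is the inversion. Passing from the congruence $\LL\bsk e\approx_\eps \AA M\AA^\top$ to $(\LL\bsk e)^\dag \approx_\eps \AA^{-\top}M^\dag\AA^{-1}$ is not a purely formal move: the identity $(A M A^\top)^\dag = A^{-\top}M^\dag A^{-1}$ holds when $M$ is invertible, but once $M$ is singular — here its $C$-block $\SStil'\bsk e$ is a Laplacian, so it carries $\one_C$ in its kernel — and $A$ is not orthogonal, one has to argue more carefully. A natural route is to exploit that the factorization produced by $\partialChol$ is a genuine symmetric Laplacian factorization (in particular that the Cholesky factor has zero column sums, so that $\AA^\top\one=(\veczero_F;\one_C)$ generates $\ker M$ and the kernels are matched by $\AA^\top$), and then check directly that the quadratic form $\zz^\top(\LL\bsk e)^\dag\zz$ is reproduced. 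Everything else — the ordering of the deletions and re-insertions, the $\eps$-spectral bound surviving the additions, and the cheap applications of $\AA^{-1}$ and $\calDtil^{-1}$ — is routine given Lemmas~\ref{lem:approxCholeksyAdd} and~\ref{lem:partialChol}.
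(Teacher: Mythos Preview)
Your proof follows the paper's argument essentially verbatim: use Lemma~\ref{lem:approxCholeksyAdd} to add the (partially deactivated) edges of $E^{(1)}$ back into the approximate factorization of $\LL-\LH$, obtaining Eq.~\eqref{eq:lkeapprox}, then invert and conjugate by $\zz$ to reach Eq.~\eqref{eq:inverseApprox} and the claimed scalar approximation. On the inversion step you rightly flag as delicate, the paper simply writes ``Inverting both sides of Eq.~\eqref{eq:lkeapprox}'' and moves on, so your discussion is already at least as careful as the original.
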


\begin{proof}
	By Lemma~\ref{lem:approxCholeksyAdd},
	adding edges with both endpoints in $C$ commutes with taking approximate partial Choleksy factorization.
	Thus, for each edge $e \in E^{(1)}$, by adding first edges in $E^{(1)} \setminus \setof{e}$
	and then the deactivated edge $e$ (i.e., edge $e$ with weight $\theta w(e)$),
	we have
	\begin{align*}
		\LL - \LH + \kh{ \LH \bsk e } \approx_{\eps}
		\begin{pmatrix}
    		\matlowtil_{FF} & \matzero \\
    		\matlowtil_{CF} & \II_{CC}
  		\end{pmatrix}
  		\begin{pmatrix}
			\calDtil & \matzero \\
			\matzero & \SStil + \kh{ \LH_{CC} \bsk e }
		\end{pmatrix}
 			\begin{pmatrix}
   			\matlowtil_{FF} & \matzero \\
   			\matlowtil_{CF} & \II_{CC}
		\end{pmatrix}^{\top}.
	\end{align*}
	Substituting $\kh{ \SStil' \bsk e } = \SStil + \kh{ \LH_{CC} \bsk e}$
	and $\LL\bsk e = \LL - \LH + \kh{ \LH \bsk e }$ leads to
	\begin{align}\label{eq:lkeapprox}
		\LL\bsk e \approx_{\eps}
		\begin{pmatrix}
    		\matlowtil_{FF} & \matzero \\
    		\matlowtil_{CF} & \II_{CC}
  		\end{pmatrix}
  		\begin{pmatrix}
			\calDtil & \matzero \\
			\matzero & \kh{ \SStil' \bsk e }
		\end{pmatrix}
 			\begin{pmatrix}
   			\matlowtil_{FF} & \matzero \\
   			\matlowtil_{CF} & \II_{CC}
		\end{pmatrix}^{\top}.
	\end{align}
	Note that $\SStil'$ is a Laplacian since it is a sum of two Laplacians.
	Inverting both sides of Eq.~(\ref{eq:lkeapprox}) leads to
	\begin{align}\label{eq:inverseApprox}
		\Lke^\dag &\approx_{\eps}
		\begin{pmatrix}
    		\matlowtil_{FF} & \matzero \\
    		\matlowtil_{CF} & \II_{CC}
  		\end{pmatrix}^{-\top}
  		\begin{pmatrix}
			\kh{\calDtil}^{-1} & \matzero \\
			\matzero & \kh{ \SStil' \bsk e }^\dag
		\end{pmatrix}
 			\begin{pmatrix}
   			\matlowtil_{FF} & \matzero \\
   			\matlowtil_{CF} & \II_{CC}
		\end{pmatrix}^{-1}.
	\end{align}
	Multiplying both sides of Eq.~(\ref{eq:inverseApprox}) by $\zz^\top$ on the left and $\zz$ on the right and
	substituting\\ $\yy = \begin{pmatrix} \yy_F \\ \yy_C \end{pmatrix} =\begin{pmatrix}
    			\matlowtil_{FF} & \matzero \\
    			\matlowtil_{CF} & \II_{CC}
  			\end{pmatrix}^{-1} \zz$, gives
	\begin{align}
  		\zz^\top \Lke^\dag \zz &\approx_{\eps}
  			\begin{pmatrix} \yy_F & \yy_C \end{pmatrix}
			\begin{pmatrix}
			\calDtil^{-1} & \matzero \\
			\matzero & \kh{ \SStil \bsk e }^\dag
			\end{pmatrix}
			\begin{pmatrix} \yy_F \\ \yy_C \end{pmatrix} \notag\\
		%\label{eq:approxCholesky}
		%\Rightarrow\quad
		%\zz^\top \Lke^\dag \zz
		&=
			\yy_F^\top \kh{\calDtil^{-1}} \yy_F + \yy_C^\top \kh{ \SStil' \bsk e }^\dag \yy_C.
  	\end{align}
This completes the proof.
\end{proof}

The pseudocode for $\quadEst$ is given in Algorithm~\ref{alg:normEst}.
Its performance is characterized in Lemma~\ref{lem:normEst}. \\

\begin{restatable}[]{lemma}{thmnormEst}
	\label{lem:normEst}
	Given a connected undirected graph $G = (V,E)$ with $n$ vertices, $m$ edges, positive edge weights
	$w : E \to \rea_{+}$, and associated Laplacian $\LL$,
	a set of edges $\EQ\subset E$ such that every vertex in $V$ is incident with some edge $e \in \EQ$,
	a vector $\zz\in \mathbb{R}^n$,
  	and scalars $0 < \theta \leq 1/2$, $0<\eps\leq1/2$,
  	the algorithm $\quadEst(\LL, \EQ, w, \zz, \theta,\epsilon)$ returns a set of pairs
  	$\hat{N} = \{ (e,\hat{n}_e) \mid e \in \EQ \}$. With 	
  	high probability, the following statement holds:	For $\forall e \in \EQ$,
	\begin{align}\label{normErrorBound}
	n_e \approx_{\eps} \hat{n}_e,
	\end{align}
	where
	\[
		n_e = \zz^\top \Lke^\dag \zz.
	\]
%	and
%	$\Lke$ is obtained from $L$ by decreasing the weight of edge $e$ from $w(e)$ to $\theta w(e)$, i.e.,
%	\[
%		\Lke = \LL - (1 - \theta) w(e)\bb_e\bb_e^\top.
%	\]
	The total running time of this algorithm is bounded by
	$O(m \eps^{-2} \log^2 m \log^6 n \operatorname{polyloglog}(n))$.% =
	%\tilde{O}(m \eps^{-2})$.
\end{restatable}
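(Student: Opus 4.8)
The plan is to prove correctness by a top-down induction on $m' = \sizeof{\EQ}$ that leans on Lemma~\ref{lem:approxRecursion}, and then bound the running time by a recursion-tree analysis. The claim I would induct on is: \emph{for any accuracy parameter $\eps$, $\quadEst(\LL,\EQ,w,\zz,\theta,\eps)$ returns $\hat n_e$ with $\hat n_e \approx_\eps n_e$ for every $e\in\EQ$.} The base case $\sizeof{V}=2$ is handled by the exact computation on Line~\ref{line:exacto1}, so the error is $0$. For the inductive step, fix $i\in\{1,2\}$ and $e\in E^{(i)}$; let $\LH=\sum_{f\in E^{(i)}}w(f)\bb_f\bb_f^\top$ and $C$ be the endpoints of $E^{(i)}$. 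The call to $\partialChol$ on $\LL-\LH$ with accuracy $\eps/\log m'$ produces $(\matlowtil,\calDtil,\SStil)$; setting $\SStil'=\SStil+\LH_{CC}$, Lemma~\ref{lem:approxRecursion} (instantiated with accuracy $\eps/\log m'$) gives
\[
	\zz^\top \Lke^\dag \zz \;\approx_{\eps/\log m'}\; \yy_F^\top \calDtil^{-1} \yy_F + \yy_C^\top \kh{\SStil' \bsk e}^\dag \yy_C .
\]
The recursive call on $(\SStil',E^{(i)},\yy_C)$ with accuracy $\eps-\eps/\log m'$ returns, by the inductive hypothesis (valid since $\sizeof{E^{(i)}}<m'$), a value $\hat n^{(i)}_e\approx_{\eps-\eps/\log m'}\yy_C^\top(\SStil'\bsk e)^\dag\yy_C$. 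Adding the nonnegative scalar $f=\yy_F^\top\calDtil^{-1}\yy_F$ to both sides (Fact~\ref{fact:Approximations}, item~\ref{apxstart}) and composing the two approximations by transitivity yields $f+\hat n^{(i)}_e\approx_\eps \zz^\top\Lke^\dag\zz$, which is the returned pair. All quantities involved are nonnegative (the relevant matrices are PSD and $\calDtil$ has positive diagonal), so the scalar $\approx_\eps$ calculus applies. Since the recursion tree has $O(m)$ nodes and each $\partialChol$ call succeeds with high probability, a union bound gives the overall high-probability guarantee.

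For the running time, I would first record the shape of the recursion. A node with edge set $\EQ$ of size $m'$ spawns two children, each with $\floor{m'/2}$ edges, so the tree has depth $O(\log m)$ and $O(m)$ nodes total, with $2^d$ nodes at depth $d$ each carrying an edge set of size $O(m/2^d)$. Two invariants, proved by induction down the tree, control the intermediate objects. First, at depth $d$ the node's graph has $O(m/2^{d})$ vertices: the graph passed to a child is (an approximate Schur complement of) the current one onto $C$, and $\sizeof{C}\le 2\sizeof{E^{(i)}}$ since $C$ is just the endpoint set of $E^{(i)}$. Second, the accuracy parameter handed to $\partialChol$ stays $\Theta(\eps/\log m)$ at \emph{every} depth: with the split described above (the third modification, implemented on Lines~\ref{line:error1}--\ref{line:error2} of Algorithm~\ref{alg:normEst}), the budget surviving to depth $d$ telescopes to $\eps\cdot\log(m/2^{d})/\log m$, and dividing by $\log(m/2^{d})$ returns $\eps/\log m$. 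Plugging these into Lemma~\ref{lem:partialChol}, the Schur complement produced at depth $d$ has $O\big((m/2^{d})\eps^{-2}\log^2 m\log n\big)$ edges, so the graph size does not blow up across levels.

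Given the invariants, the dominant cost at a depth-$d$ node is its two $\partialChol$ calls (the matrix--vector evaluations of Lemma~\ref{lem:partialChol} and the $O(\sizeof{E^{(i)}})$ edge deletions/additions are no larger), which by Lemma~\ref{lem:partialChol} run in $O\big((m/2^{d})\eps^{-2}\log^2 m\log^5 n\cdot\operatorname{polyloglog}(n)\big)$ time. Summing over the $2^d$ nodes at depth $d$ gives $O\big(m\eps^{-2}\log^2 m\log^5 n\cdot\operatorname{polyloglog}(n)\big)$ per level, and multiplying by the $O(\log m)$ depth and using $\log m=O(\log n)$ (valid since $m\le n^2$) yields the claimed $O\big(m\eps^{-2}\log^2 m\log^6 n\cdot\operatorname{polyloglog}(n)\big)$ bound. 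Correctness of each individual node's output is exactly Lemma~\ref{lem:approxRecursion} applied with the error-controlled $\partialChol$, together with Lemma~\ref{lem:approxCholeksyAdd} justifying the "add $E^{(i)}$ back" step.

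I expect the crux to be the coupling between the per-level accuracy and the size of the intermediate graphs. A naive "use accuracy $\eps$ at every recursion level" would let the errors add up to $\eps\cdot\Theta(\log m)$, whereas naively setting the accuracy to $\eps/\log m$ only at the root and halving it each level would make $\eps_{\mathrm{cur}}^{-2}$, and hence both the Schur-complement edge counts and the running time, grow geometrically with depth. The particular budget split on Lines~\ref{line:error1}--\ref{line:error2} is precisely what makes the per-level accuracy telescope to the constant $\Theta(\eps/\log m)$ while the errors still sum to exactly $\eps$; checking this identity, in tandem with the $\sizeof{C}\le 2\sizeof{E^{(i)}}$ bound that keeps the recursion graphs from inflating, is the main thing to get right.
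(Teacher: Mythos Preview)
Your proposal is correct and follows essentially the same approach as the paper: induction on $\sizeof{\EQ}$ via Lemma~\ref{lem:approxRecursion} for correctness, and a level-by-level cost analysis for the running time. Your recursion-tree presentation makes explicit the telescoping identity (the per-level $\partialChol$ accuracy is always $\eps/\log m$), which the paper leaves implicit in its recurrence $T(m,\eps)=2T(m/2,\eps-\eps/\log m)+\cdots$; one minor slip is the label reference for the base case, which should be Line~\ref{line:edgeCase} in $\quadEst$ rather than Line~\ref{line:exacto1}.
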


\begin{algorithm}[H]
	\caption{$\quadEst(\LL, \EQ, w, \zz, \theta,\epsilon)$}
	\label{alg:normEst}
	\Input{
		$\LL$: A graph Laplacian.\\
		$\EQ$: A set of edges supported on vertices in $\LL$.\\
		$w$: An edge weight function.\\ % which satisfies $\forall e = (u,v) \in E, 0 < w_e \leq - \LL_{uv}$. \\
		$\zz$: A vector whose dimension matches the number of vertices in $\LL$. \\
		$\theta$: An edge $e$'s weight should be temporarily reduced to $\theta w(e)$ when \\
			deactivating it.\\
		$\epsilon$: Error of the estimates. \\
	}
	\Output{
		$\hat{N} = \{(e,\hat{n}_e) \mid e \in \EQ\}$: $\hat{n}_e$ is an estimate of $n_e = \zz^\top \Lke^\dag \zz$.
	}
	
	Let $V$ denote the vertex set of $\LL$. \;
	\If{$|V| = 2$}{
		For every edge $e \in \EQ$, compute
		exact $\hat{n}_e = n_e = \zz^\top \left(\LL - (1 - \theta) w(e)\bb_e\bb_e^\top\right)^\dag \zz$,
		then combine the results and return $\hat{N} = \{(e,\hat{n}_e) \mid e \in \EQ\}$. \label{line:edgeCase}\;
	}
	Partition $\EQ$ into $E^{(1)}, E^{(2)}$ with $\sizeof{E^{(1)}} = \floor{\frac{\sizeof{\EQ}}{2}}$
		and $\sizeof{E^{(2)}} = \sizeof{\EQ} - \floor{\frac{\sizeof{\EQ}}{2}}$. \label{line:approxPart}\;
	\For{$i = 1$ to $2$}{
		Let $C$ denote endpoints of edges in $E^{(i)}$ and $F = V \setminus{C}$. \;
		Let $\LH$
			denote the Laplacian matrix corresponding to edges in $E^{(i)}$ \label{line:sparseStart}, i.e.,
			$\LH \gets \sum\nolimits_{e\in E^{(i)}} w(e)\bb_e\bb_e^\top$. \;
		$(\matlowtil, \calDtil, \SStil) \gets \partialChol(\LL - \LH, C, \epsilon/\log \sizeof{\EQ}) $ \label{line:error1}\;
		Add edges in $E^{(i)}$ back to $\SStil$ and store the resulting Laplacian in $\SStil'$,
		i.e., $\SStil' \gets \SStil + \LH_{CC}$ \label{line:sparseEnd}.  \;
		Compute $\yy = \begin{pmatrix} \yy_F \\ \yy_C \end{pmatrix} =\begin{pmatrix}
    			\matlowtil_{FF} & \matzero \\
    			\matlowtil_{CF} & \II_{CC}
  			\end{pmatrix}^{-1} \zz$ in linear time. \label{line:invertStart}\;
		Compute $f = \yy_F^\top \left(\calDtil \right)^{-1} \yy_F$ in linear time. \label{line:invertEnd} \;
		Call $\quadEst(\SStil', E^{(i)}, \yy_C, \theta, \epsilon - \epsilon/\log \sizeof{\EQ})$
		 to get an estimate $\hat{n}_e^{(i)}$ of $n_e^{(i)} = \yy_C^\top \kh{\SStil'\bsk e}^\dag \yy_C$ for all $e\in E^{(i)}$
		 and store $(e,f + \hat{n}_e^{(i)})$ in $\hat{N}^{(i)}$. \label{line:error2}\;
			%let $\hat{n}_e \gets f +\hat{n}^{(i)}_e $,
			
	}
	\KwRet{$\hat{N} = \hat{N}^{(1)} \cup \hat{N}^{(2)}$} \label{line:approxRet}
	
\end{algorithm}

%To prove Theorem~\ref{lem:normEst}, we first give an approximation version of Lemma~\ref{lem:exactRecursion}:

%We now give the proof of Theorem~\ref{lem:normEst}:

\begin{proof}[Proof of Lemma~\ref{lem:normEst}]
\quad

	We first prove the error bound (i.e., Eq.~(\ref{normErrorBound})) by induction on the size of $\EQ$.
	
	For $\sizeof{\EQ} = 1$, we have $\sizeof{V} = 2$.
	Hence, the algorithm $\quadEst$ will go into Line~\ref{line:edgeCase} and returns an $\hat{n}_e = n_e$, which
	implies that $n_e \approx_{\eps} \hat{n}_e$ holds for any $\eps > 0$.
	
	Suppose Eq.~(\ref{normErrorBound}) holds for all $1\leq \sizeof{\EQ} \leq k$, $k\geq 1$.
	We now prove that it holds for $\sizeof{\EQ} = k + 1$, too.
	Clearly, by symmetry, it suffices to show that $n_e \approx_{\eps} \hat{n}_e$ holds for each $e\in E^{(1)}$.
	By Lemma~\ref{lem:partialChol}, matrices $\matlowtil, \calDtil, \SStil$ on Line~\ref{line:error1}
	satisfy
	\begin{align}\label{eq:LL-LH}
		\LL - \LH \approx_{\eps/\log \sizeof{\EQ}}
		\begin{pmatrix}
    		\matlowtil_{FF} & \matzero \\
    		\matlowtil_{CF} & \II_{CC}
  		\end{pmatrix}
  		\begin{pmatrix}
			\calDtil & \matzero \\
			\matzero & \SStil
		\end{pmatrix}
 			\begin{pmatrix}
   			\matlowtil_{FF} & \matzero \\
   			\matlowtil_{CF} & \II_{CC}
		\end{pmatrix}^{\top}.
	\end{align}
	By Lemma~\ref{lem:approxRecursion}, we have
  	\begin{align}\label{eq:approxF}
  		\zz^\top \Lke^\dag \zz &\approx_{\eps/\log \sizeof{\EQ}}
			\yy_F^\top \kh{\calDtil^{-1}} \yy_F + \yy_C^\top \kh{ \SStil' \bsk e }^\dag \yy_C,
  	\end{align}
  	where $\SStil' = \SStil + \LH_{CC}$ (Line~\ref{line:sparseEnd}).
  	Since $\sizeof{E^{(1)}} = \floor{\frac{\sizeof{\EQ}}{2}} \leq k$, by inductive assumption,
  	each $\hat{n}_e^{(1)}$ returned by the recursive call $\quadEst$ on Line~\ref{line:error2}
  	satisfies $\yy_C^\top \kh{ \hat{\SS} \bsk e }^\dag \yy_C \approx_{\eps - \eps/\log \sizeof{\EQ}} \hat{n}_e^{(1)} $,
  	which when adding $f = \yy_F^\top \kh{\calD^{-1}} \yy_F$ (Line~\ref{line:invertEnd}) to its both sides turns into
  	\begin{align}\label{eq:approxRecursion}
  		\yy_F^\top \kh{\calD^{-1}} \yy_F + \yy_C^\top \kh{ \hat{\SS} \bsk e }^\dag \yy_C
  		\approx_{\eps - \eps/\log \sizeof{\EQ}}
  		f + \hat{n}_e^{(1)}.
  	\end{align}
  	Combining Eq.~(\ref{eq:approxF}) and Eq.~(\ref{eq:approxRecursion}) and
  	substituting $n_e = \zz^\top \Lke^\dag \zz$,
  	we have $n_e \approx_{\eps} f + \hat{n}_e^{(1)}$. Thus, Eq.~(\ref{normErrorBound}) holds for $\sizeof{\EQ} = k + 1$, too.
  	By induction, it holds for all $\sizeof{\EQ}$.
  	
  	We then prove the running time of the algorithm. % $\quadEst$.
  	
  	Let $T(m, \eps)$ denote the running time of $\quadEst(\LL, \EQ, w, \zz, \theta,\epsilon)$,
  	where $m = \sizeof{\EQ}$ and $\eps$ is the error of estimates.
	%where $m = \sizeof{\EQ}$.
	%We refer to the very first call to $\quadEst$ as root call,
	%and all subsequent calls to $\quadEst$ as descendant calls.
	Let $n$ denote the number of vertices in the original graph, i.e.,
	the graph corresponding to $\LL$ in the earliest call to $\quadEst$.
	Let $\ncur$ and $\mcur$ denote the number of vertices and the number of edges
	in $\LL$ in the current call, respectively.
	%Let $\npar$ and $\mpar$ denote the number of vertices and the number of edges
	%in $\LL$ in the parent call, respectively.
	In each call other than the earliest call, the Laplacian $\LL$ equals $\SStil'$
	on Line~\ref{line:sparseEnd} of the parent call (i.e., the call that invoked current call), where
	we have the total number of edges in $\SS'$ being the number of edges in $\SStil$ plus the number of edges
	in $\LH$.
	Since $\LH$ is the Laplician corresponding to edges in $E^{(i)}$, which is precisely $\EQ$ in the current call,
	we have the number of edges in $\LH$ equaling $m = \sizeof{\EQ}$.
	By Lemma~\ref{lem:partialChol}, the number of edges in $\SStil$ is $O(\ncur \eps^{-2}\log^2 m \log n)$,
	where there is an extra $\log^2 m$ factor
	because the error is set to $\eps/\log \sizeof{\EQ}$ when calling $\partialChol$ on Line~\ref{line:error1}.
	Hence, the number of edges in $\LL$ in the current call is bounded by $\mcur = O(m + \ncur \eps^{-2}\log^2 m \log n)$.
	
	%The total number of edges in $\LL$ in the current call
	%The total number of edges in the current $\LL$ is the sum of $\sizeof{E}$ and
	%the number of nonzero entries in $\SStil$ in the parent call, which by Lemma~\ref{lem:partialChol} is
	%$O(v\eps^{-2}\log^2 m\log n)$ (note that we set the error to $\eps/\log m$ when taking approximate partial Cholesky factorization).
	If $\ncur = 2$, the algorithm goes to Line~\ref{line:edgeCase}, and hence we have $T(m,\eps) = O(1)$.
	Otherwise, the algorithm goes to Lines~\ref{line:approxPart}\,-\,\ref{line:approxRet},
	among which the most time-consuming work can be divided into three parts:
	\begin{enumerate}
		\item The first part is computing $\begin{pmatrix}
    			\matlowtil_{FF} & \matzero \\
    			\matlowtil_{CF} & \II_{CC}
  			\end{pmatrix}^{-1}\zz$ and $\yy_F^\top \kh{\calDtil}^{-1}\yy_F$.
  			Since $\begin{pmatrix}
    			\matlowtil_{FF} & \matzero \\
    			\matlowtil_{CF} & \II_{CC}
  			\end{pmatrix}$ is a lower triangular matrix up to row exchanges and $\calDtil$ is diagonal,
  			their inverse can be applied in linear time of the number of nonzero entries.
  			By Lemma~\ref{lem:partialChol}, this part runs in
  			$O(\mcur + \ncur (\eps / \log m)^{-2}\log^3 n) = O(m + \ncur \eps^{-2}\log^2 m\log^3 n)$ time.
  		\item The second part is taking approximate partial Cholesky factorization, which by Lemma~\ref{lem:partialChol} runs in
  			$O(( \mcur \log^{3} n + \ncur (\eps / \log m)^{-2} \log^{5} n ) \operatorname{polyloglog}(n))
  			= \\
  			 O(m\log^3 n + \ncur \eps^{-2}\log^2 m \log^{5} n \operatorname{polyloglog}(n))$ time.
  		\item The third part is recursively calling
  			$\quadEst(\SStil', E^{(i)}, \yy_C, \theta, \epsilon - \epsilon/\log \sizeof{\EQ})$,
  			which runs in $2T(m/2, \eps - \eps / \log m)$ time.
	\end{enumerate}
	The first two parts add up to a running time of
	$O(m\log^3 n +  \ncur \eps^{-2}\log^2 m \log^{5} n \operatorname{polyloglog}(n))$.
	When $m > n$, we can bound the number of vertices in the current call by $\ncur = O(n)$;
	otherwise when $m \leq n$, we can bound the number of vertices in the current call by $\ncur = O(m)$.
	Thus, We have
	\begin{align}\label{eq:approxT}
		T(m,\eps) =
		\begin{cases}
			2T(m/2, \eps - \eps / \log m) + O(m\log^3 n +  n \eps^{-2}\log^2 m \log^{5} n \operatorname{polyloglog}(n)) , & m > n \\
			2T(m/2, \eps - \eps / \log m) + O(m\log^3 n +  m \eps^{-2}\log^2 m \log^{5} n \operatorname{polyloglog}(n)) , & m \leq n
		\end{cases}.
	\end{align}
	Eq.~(\ref{eq:approxT}) leads to $T(m,\eps) = O(m \eps^{-2}\log^2 m \log^{6} n \operatorname{polyloglog}(n))$.
\end{proof}

\subsection{Approximating $\mathcal{C}_\theta(e)$}\label{sec:centComp}

We are now ready to give the algorithm $\ECComp(G=(V,E), w, \theta,\epsilon)$,
which computes an $\eps$-approximation of the $\theta$-Kirchhoff edge centrality $\mathcal{C}_\theta(e)$
for all $e\in E$.  The pseudocode for $\ECComp$ is given in Algorithm~\ref{alg:ECComp}.
Its performance is characterized in Theorem~\ref{lem:edgeCentComp}.

%\thmedgeCentComp*

\begin{algorithm}[H]
	\caption{$\ECComp(G=(V,E), w, \theta,\epsilon)$}
	\label{alg:ECComp}
	\Input{
		$G = (V,E)$, $w$: A connected undirected graph with positive edges \\weights $w: E\to \rea_{+}$. \\
		$\theta$: An edge $e$'s weight should be temporarily reduced to $\theta w_e$ when\\ deactivating it. \\
		$\epsilon$: Error of the centrality estimate per edge.
	}
	\Output{
		$\hat{C} = \left\{ (e, \hat{c}_e) \mid e \in E \right\}$: $\hat{c}_e$ is an estimate of $\mathcal{C}_\theta(e)$,
			the $\theta$-Kirchhoff\\ edge centrality of $e$.
	}
	%$H \gets \expander(n, \epsilon/3)$ \;
	%Let $m' = O(n/\epsilon^4)$ be the number of edges in $H$, and $\BB_{m' \times n}$ be the signed edge-vertex incidence matrix of $H$. \;
	Let $\zz_1,\ldots,\zz_M$ be independent random $\pm 1$ vectors, where $M = \ceil{192\eps^{-2}\ln(2n)}$. \;
	%Let $\QQ_{n \times k}$ be a random $n\times k$ matrix with each entry chosen from
	%	$\setof{{\pm} 1/\sqrt{k}}$ independently, where $k = \ceil{(\eps^2/2 - \eps^3/6)^{-1}\ln n}$.
	%	Let $\qq_j$ denote the $j^\text{th}$ column of $\QQ$. \;
	%$\ZZ_{n \times k} \gets \BB^\top \QQ$ \;
	\For{$i = 1$ to $M$}{
		Call $\quadEst(\LL^G,E, w, \zz_i, \theta, \frac{\epsilon}{2})$
			to get an estimate $\hat{n}_e$ of $n_e = \zz_i^\top \Lke^\dag \zz_i$ for each $e \in E$,
			and	store each $\hat{n}_e$ in $\hat{n}_e^{(i)}$ \label{line:nej}.
	}
	For each $e\in E$ compute $\hat{c}_e = \frac{n}{M} \sum\limits_{i=1}^M \hat{n}_e^{(i)}$ and
		return $\hat{C} = \left\{ (e, \hat{c}_e) \mid e \in E \right\}$. \label{line:estce}
\end{algorithm}

\begin{proof}[Proof of Theorem~\ref{lem:edgeCentComp}]
	\quad
	
	The running time is the total cost of $O(\eps^{-2} \log n)$ calls to $\quadEst$, each of which runs in
	$O(m \eps^{-2} \log^2 m \log^6 n \operatorname{polyloglog}(n))$ time according to Lemma~\ref{lem:normEst}.
	
	Since $M = \ceil{192\eps^{-2}\ln(2n)} \geq 48 \kh{\frac{\eps}{2}}^{-2}\ln(2n)$,
	by Lemma~\ref{lem:mcL}, for each $e\in E$, there is
	\begin{align*}
		\frac{1}{M} \sum\limits_{i=1}^M \zz_i^\top \Lke^{\dag} \zz_i \approx_{\frac{\eps}{2}} \trace{\LL\bsk e}.
		 %\approx_{\frac{\eps}{2}} n \sum\limits_{j=1}^k \qq_j^\top \Lke^\dag \qq_j.
	\end{align*}
	 Multiplying both sides by $n$
	 and substituting $\mathcal{C}_\theta(e) = n \trace{\LL\bsk e}$, we have
	\begin{align}\label{eq:ceapproxnorms}
		\frac{n}{M} \sum\limits_{i=1}^k \zz_i^\top \Lke^{\dag} \zz_i \approx_{\frac{\eps}{2}} \mathcal{C}_\theta(e).
	\end{align}
	
	By Lemma~\ref{lem:normEst}, each $\hat{n}_e^{(i)}$ on Line~\ref{line:nej} satisfies
	\begin{align}\label{eq:nej}
		n_e^{(i)} \approx_{\frac{\eps}{2}} \hat{n}_e^	{(i)},
	\end{align}
	where $n_e^{(i)} = \zz_i^\top \Lke^\dag \zz_i$.
	Summing Eq.~(\ref{eq:nej}) over $i = 1,\ldots,M$ and multiplying both sides by $\frac{n}{M}$ lead to
	\begin{align*}
		\frac{n}{M} \sum\limits_{i=1}^M \zz_i^\top \Lke^{\dag} \zz_i \approx_{\frac{\eps}{2}}
		\frac{n}{M} \sum\limits_{i=1}^M \hat{n}_e^{(i)}.
	\end{align*}
	Combining with Eq.~(\ref{eq:ceapproxnorms}) and substituting $\hat{c}_e = \frac{n}{M} \sum\limits_{i=1}^M \hat{n}_e^{(i)}$
	(Line~\ref{line:estce}) lead to $\mathcal{C}_\theta(e) \approx_\eps \hat{c}_e$.
\end{proof}

\section{Algorithm for Approximating $\theta$-Kirchhoff Edge Centrality $\mathcal{C}_\theta^\Delta(e)$}
\label{sec:edgebyjl}

%\begin{lemma}[Derived from Sherman-Morrison formula]
By Sherman-Morrison formula, for an edge $e\in E$ and a scalar $0 < \theta < 1$,
	we have
	\begin{align}\label{eq:morrison}
		\kh{\LL\bsk e}^\dag = \kh{\LL - (1 - \theta)w(e)\bb_e\bb_e^\top}^\dag
		= \LL^\dag + (1 - \theta) \frac{w(e) \LL^\dag \bb_e \bb_e^\top \LL^\dag}{1 - (1 - \theta)w(e)\bb_e^\top \LL^\dag \bb_e}.
	\end{align}
Since the off-the-shelf Sherman-Morrison formula is
for full rank matrices,
we give the detailed proof of Equation~(\ref{eq:morrison}) in
Appendix~\ref{sec:morrison}.

Since by our definition $\mathcal{C}_\theta^\Delta(e) = \Kf{G\bsk e} - \Kf{G} = n\kh{ \trace{\kh{\LL\bsk e}^\dag} - \trace{\LL^\dag}}$,
it follows that
\begin{align}\label{eq:SM}
	\mathcal{C}_\theta^\Delta(e) = 
	n (1 - \theta) \frac{w(e) \trace{\LL^\dag \bb_e \bb_e^\top \LL^\dag}}{1 - (1 - \theta)w(e) \bb_e^\top \LL^\dag \bb_e}.
\end{align}

The numerator of~(\ref{eq:SM}) is the trace of an implicit matrix, and hence can be approximated by
Hutchinson's~\cite{AT11,Hut89} Monte-Carlo method.
To apply $\LL^\dag$, we can utilize nearly-linear time solvers for Laplacian systems~\cite{ST14,CKMPPRX14}.
We will use the solver from~\cite{CKMPPRX14}, whose
performance can be characterized in the following lemma.

\begin{lemma}[Theorem 1.1 of~\cite{CKMPPRX14}, paraphrased]
	\label{lem:laplsolver}
	There is an algorithm \\$\yy = \LaplSolver(\LL^G,\zz,\delta)$
	which takes a Laplacian matrix $\LL^G$ of
	a graph $G$ with $n$ vertices and $m$ edges,
	a vector $\zz \in \rea^n$,
	and a scalar $\delta > 0$,
	and returns a vector $\yy \in \rea^n$ such that
	with high probability the following statement holds:
	\begin{align*}
		\len{\yy - \LL^\dag \zz}_{\LL} \leq \delta
		\len{\LL^\dag \zz}_{\LL},
	\end{align*}
	where $\len{\xx}_{\LL} = \sqrt{\xx^\top \LL \xx}$.
	The algorithm runs in expected time
	$O(m\log^{0.5}n\log (1/\delta)\operatorname{polyloglog}(n))$.
\end{lemma}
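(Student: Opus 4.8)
This lemma is a restatement of the main theorem of~\cite{CKMPPRX14}, specialized to graph Laplacians, so the plan is not to construct a nearly-linear-time solver from scratch but to check that their guarantee transfers with only cosmetic changes. First I would observe that~\cite{CKMPPRX14} gives a solver for symmetric diagonally dominant (SDD) systems, and that every graph Laplacian $\LL^G$ is SDD; thus their algorithm applies directly to $\LL^G$, and its expected running time for an $n\times n$ matrix with $m$ nonzero entries is precisely $O(m\log^{0.5}n\log(1/\delta)\operatorname{polyloglog}(n))$, matching the claimed bound.

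The one point of substance is the passage from an ``SDD solver'' to a ``pseudoinverse solver.'' For a connected graph, $\LL^G$ has rank $n-1$ with kernel spanned by $\one$, so the system $\LL^G\xx=\zz$ is consistent only when $\zz\perp\one$, and $\LL^\dag\zz$ is by definition the solution lying in $\operatorname{im}\LL^G=\one^{\perp}$. The standard reduction is to project: replace $\zz$ by $\zz-\tfrac{\one^\top\zz}{n}\one$, invoke the SDD solver (which, restricted to $\one^{\perp}$, acts on a nonsingular SDD operator), and project the returned vector back onto $\one^{\perp}$. One then checks that the relative-error guarantee of~\cite{CKMPPRX14}, stated in the energy norm of the restricted system, reads exactly as $\len{\yy-\LL^\dag\zz}_{\LL}\le\delta\len{\LL^\dag\zz}_{\LL}$, using that $\ker\LL^G$ and $\operatorname{im}\LL^G$ are orthogonal, so that $\len{\cdot}_{\LL}$ is a genuine norm on the range and both $\yy$ and $\LL^\dag\zz$ may be taken there.

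Finally I would note that the error bound of~\cite{CKMPPRX14} already holds with high probability: the randomness is used to build a preconditioner chain that succeeds with probability $1-n^{-c}$ for any constant $c$, and conditioned on that event the prescribed number of iterations deterministically attains the relative-error bound, while the (expected) running time is the cost of the construction plus the iterations; hence no boosting is required and the stated time bound stands verbatim. The ``main obstacle,'' such as it is, is therefore entirely bookkeeping---tracking which norm the error is measured in and confirming that the projected solve produces the pseudoinverse rather than some other generalized inverse---since the substantive work, the recursive preconditioning that yields the $\log^{0.5}n$ iteration count, is inherited wholesale from~\cite{CKMPPRX14}.
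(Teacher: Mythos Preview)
Your proposal is correct and matches the paper's treatment: the paper gives no proof of this lemma at all, since it is explicitly labeled as a paraphrase of Theorem~1.1 of~\cite{CKMPPRX14} and is simply quoted as a black box. Your discussion of the SDD-to-Laplacian specialization and the kernel projection is accurate bookkeeping, but more detail than the paper itself supplies or requires.
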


To track the error for the solver, we will need
the following two lemmas,
whose proofs are deferred to Appendix~\ref{sec:Errorsedge}.

\begin{lemma}\label{lem:edgeSolveError}
	Let $\LL$ be the Laplacian of a graph with all weights
	in the range $[1,U]$, and $\zz$ be any vector
	such that $\len{\zz}^2 \leq n$.
	Suppose $\yy$ is a vector such that
	$\len{\yy - \LL^\dag \zz}_{\LL} \leq
	\delta \len{\LL^\dag \zz}_{\LL}$ for some
	$0 < \delta < 1$.
	For any edge $e$ of the graph,
	we have
	\begin{align}\label{eq:edgeSolveError}
		\left|
		\yy^\top \bb_e \bb_e^\top \yy - 
		\zz^\top \LL^\dag \bb_e \bb_e^\top \LL^\dag \zz
		\right|
		\leq 6\delta n^5 U^2.
	\end{align}
\end{lemma}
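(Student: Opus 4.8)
The plan is to split the solver's output into the exact solution plus an error term and expand. Write $\xx \defeq \LL^\dag \zz$, so that $\yy = \xx + (\yy - \xx)$ and
\[
	\yy^\top \bb_e \bb_e^\top \yy - \zz^\top \LL^\dag \bb_e \bb_e^\top \LL^\dag \zz
	= 2\kh{\bb_e^\top \xx}\kh{\bb_e^\top (\yy - \xx)} + \kh{\bb_e^\top (\yy - \xx)}^2.
\]
It therefore suffices to bound the two scalars $\abs{\bb_e^\top \xx}$ and $\abs{\bb_e^\top (\yy - \xx)}$ and combine, using $\delta < 1$ to fold the square term into the cross term.

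First I would bound $\abs{\bb_e^\top \xx} = \abs{\bb_e^\top \LL^\dag \zz}$. Factoring $\LL^\dag = \LL^{\dag/2}\LL^{\dag/2}$ (square root taken on $\one^\perp$) and applying Cauchy--Schwarz gives $\abs{\bb_e^\top \LL^\dag \zz} \leq \sqrt{\bb_e^\top \LL^\dag \bb_e}\,\sqrt{\zz^\top \LL^\dag \zz}$. Both factors admit crude bounds: $\bb_e^\top \LL^\dag \bb_e$ is the effective resistance across the endpoints of $e$, which by Rayleigh monotonicity is at most $1/w(e) \le 1$ (or, even more crudely, at most $\norm{\LL^\dag}_2\norm{\bb_e}^2 = O(n^2)$), while $\zz^\top \LL^\dag \zz \le \norm{\LL^\dag}_2\len{\zz}^2 \le n\,\norm{\LL^\dag}_2$. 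The crucial estimate is the bound on $\norm{\LL^\dag}_2 = 1/\lambda_{\min}$, where $\lambda_{\min}$ is the smallest nonzero eigenvalue of $\LL$: since every edge weight is at least $1$, $\LL$ dominates the unweighted Laplacian $\LL_0$ of the same connected graph in the Loewner order, and both have kernel exactly $\mathrm{span}\{\one\}$, so $\lambda_{\min}(\LL) \ge \lambda_{\min}(\LL_0) = \Omega(1/n^2)$ by the standard lower bound on the algebraic connectivity of a connected unweighted graph. Hence $\norm{\LL^\dag}_2 = O(n^2)$ and $\abs{\bb_e^\top \xx} = O(\mathrm{poly}(n))$.

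Next I would bound $\abs{\bb_e^\top (\yy - \xx)}$, where the only subtlety is that $\yy - \xx$ need not be orthogonal to $\one$. But $\bb_e \perp \one = \ker \LL$, so only the component of $\yy - \xx$ in $\one^\perp$ contributes, and the same Cauchy--Schwarz step (now inserting $\LL^{\dag/2}\LL^{1/2}$, which is the identity on $\one^\perp$) gives $\abs{\bb_e^\top(\yy - \xx)} \le \sqrt{\bb_e^\top \LL^\dag \bb_e}\,\len{\yy - \xx}_{\LL}$; here $\len{\cdot}_{\LL}$ is blind to the $\one$-component, so no further care is needed. The solver guarantee of Lemma~\ref{lem:laplsolver} then gives $\len{\yy - \xx}_{\LL} \le \delta\len{\LL^\dag \zz}_{\LL} = \delta\sqrt{\zz^\top \LL^\dag \zz} \le \delta\sqrt{n\,\norm{\LL^\dag}_2}$. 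Substituting both bounds into the expansion, the cross term is $O(\delta\,\mathrm{poly}(n))$ and the square term $O(\delta^2\,\mathrm{poly}(n)) = O(\delta\,\mathrm{poly}(n))$; carrying the crude polynomial factors through ($\norm{\LL^\dag}_2 = O(n^2)$, $\len{\zz}^2 \le n$, $\bb_e^\top\LL^\dag\bb_e = O(n^2)$) yields a total of at most $6\delta n^5 \le 6\delta n^5 U^2$, which is the claimed inequality~(\ref{eq:edgeSolveError}).

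The only genuinely non-routine ingredient is the lower bound on $\lambda_{\min}(\LL)$, i.e. controlling $\norm{\LL^\dag}_2$ via $\LL \succeq \LL_0$ and the algebraic connectivity of a connected unweighted graph; everything else is Cauchy--Schwarz and bookkeeping. Since the target bound $6\delta n^5 U^2$ is extremely loose, I would not optimize exponents or constants — the factor $U^2$ is pure slack, and the clean effective-resistance bound $\bb_e^\top\LL^\dag\bb_e \le 1$ already delivers the much stronger $O(\delta n^3)$.
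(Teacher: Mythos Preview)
Your argument is correct and structurally the same as the paper's: both expand the difference of squares $(\bb_e^\top\yy)^2 - (\bb_e^\top\LL^\dag\zz)^2$, control the cross and square terms via the triangle inequality/Cauchy--Schwarz, and reduce everything to a bound on $\norm{\LL^\dag}_2$ together with $\len{\zz}^2 \le n$ and $\bb_e\bb_e^\top \preceq \LL$.

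The one substantive difference is how $\norm{\LL^\dag}_2$ is bounded. The paper goes through Cheeger's inequality on the normalized Laplacian, using the crude conductance bound $\phi_G \ge 1/(n^2 U)$ to obtain $\lambda_2 \ge 1/(2n^4 U^2)$, hence $\norm{\LL^\dag}_2 \le 2n^4 U^2$; this is why the final bound carries the factor $U^2$. Your route --- comparing $\LL \succeq \LL_0$ with the unweighted Laplacian and invoking the standard $\lambda_2(\LL_0) = \Omega(1/n^2)$ bound for connected graphs --- is both more elementary and sharper, yielding $\norm{\LL^\dag}_2 = O(n^2)$ with no $U$ dependence. As you note, this shows the $U^2$ in the stated bound is slack; the paper's Cheeger argument is simply paying for not exploiting the lower bound $w(e) \ge 1$ at the eigenvalue step.
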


\begin{lemma}\label{lem:edgeLower}
	Let $\LL$ be the Laplacian of a graph with all weights
	in the range $[1,U]$.
	For any edge $e$ of the graph, we have
	\begin{align*}
		\trace{\LL^\dag \bb_e \bb_e^\top \LL^\dag}
		\geq \frac{2}{n^2 U^2}.
	\end{align*}
\end{lemma}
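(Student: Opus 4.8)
The plan is to express this trace as the squared length of a single vector and then control that length using the top of the spectrum of $\LL$.

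First I would use the cyclic invariance of the trace and the symmetry of $\LL^\dag$ to rewrite
\[
	\trace{\LL^\dag \bb_e \bb_e^\top \LL^\dag} = \bb_e^\top (\LL^\dag)^2 \bb_e = \smallnorm{\LL^\dag \bb_e}^2 .
\]
Next, writing the spectral decomposition $\LL = \sum_{i=1}^{n-1}\lambda_i \vv_i \vv_i^\top$ as in Section~\ref{sec:prob}, so that $(\LL^\dag)^2 = \sum_{i=1}^{n-1}\lambda_i^{-2}\vv_i\vv_i^\top$, and observing that $\bb_e = \ee_u - \ee_v$ is orthogonal to the all-ones vector and hence lies in the span of $\vv_1,\ldots,\vv_{n-1}$ with $\sum_{i=1}^{n-1}(\bb_e^\top \vv_i)^2 = \smallnorm{\bb_e}^2 = 2$, I would bound each $\lambda_i^{-2}$ from below by $\lambda_{\max}(\LL)^{-2}$ to obtain
\[
	\trace{\LL^\dag \bb_e \bb_e^\top \LL^\dag}
	= \sum_{i=1}^{n-1}\frac{(\bb_e^\top\vv_i)^2}{\lambda_i^2}
	\geq \frac{1}{\lambda_{\max}(\LL)^2}\sum_{i=1}^{n-1}(\bb_e^\top\vv_i)^2
	= \frac{2}{\lambda_{\max}(\LL)^2}.
\]

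The remaining step is to establish the clean eigenvalue bound $\lambda_{\max}(\LL) \leq nU$. Since every edge weight is at most $U$ and every rank-one term $\bb_e\bb_e^\top$ is positive semidefinite, $\LL = \sum_{e\in E}w(e)\bb_e\bb_e^\top \preceq U\sum_{e\in E}\bb_e\bb_e^\top$, which is $U$ times the unweighted Laplacian of $G$; adding the remaining edges to complete $G$ to $K_n$ only contributes further positive semidefinite rank-one terms, so $\LL \preceq U\LL_{K_n}$, where $\LL_{K_n} = n\II - \JJ$ ($\JJ$ the all-ones matrix) has largest eigenvalue $n$. Hence $\lambda_{\max}(\LL) \leq U\,\lambda_{\max}(\LL_{K_n}) = nU$, and substituting into the previous display gives $\trace{\LL^\dag \bb_e \bb_e^\top \LL^\dag}\geq 2/(n^2U^2)$, as claimed.

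I expect the only delicate point to be pinning down the exact constant: the naive estimate $\lambda_{\max}(\LL)\leq 2\max_{u}\deg(u)\leq 2(n-1)U$ would yield only $1/(2n^2U^2)$, so one genuinely wants the comparison $\LL\preceq U\LL_{K_n}$ together with the classical fact $\lambda_{\max}(\LL_{K_n})=n$. The rest of the argument is routine manipulation of the spectral decomposition.
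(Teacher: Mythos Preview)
Your proof is correct and is essentially the same as the paper's: both rewrite the trace as $\bb_e^\top(\LL^\dag)^2\bb_e$, use that $\bb_e\perp\vecone$ with $\norm{\bb_e}^2=2$, and bound via $\lambda_{\max}(\LL)\leq nU$ obtained from $\LL\preceq U\LL_{K_n}$. The paper just packages the spectral step as the operator inequality $(\LL^\dag)^2\succeq\frac{1}{n^2U^2}\PPi$ rather than expanding in eigenvectors.
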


The denominator of~(\ref{eq:SM}) is just $1 - (1 - \theta) w(e)\mathcal{R}_{\mathrm{eff}}(e)$.
Since $w(e)\mathcal{R}_{\mathrm{eff}}(e)$ is between 0 and 1 and $\theta$ is positive,
$(1 - \theta) w(e)\mathcal{R}_{\mathrm{eff}}(e)$ is strictly bounded away from $1$.
Thus, we can multiplicatively approximate the denominator by approximating $\mathcal{R}_{\mathrm{eff}}(e)$,
for which we can use the random projection in~\cite{SS11}.
By Using the solvers from~\cite{CKMPPRX14}
in the effective resistance estimation
procedure of~\cite{SS11}, we immediately have the following lemma:

\begin{lemma}\label{lem:ERest}
	There is an algorithm $\textsc{EREst}(G, \eps)$ that when given a graph $G = (V,E)$,
	returns an estimate $\hat{r}_e$ of $\mathcal{R}_{\mathrm{eff}}(e)$ for all $e\in E$ in
	$O(m \eps^{-2} \log^{2.5}n% \log U
	\operatorname{polyloglog}(n))$ time.
	With high probability, $\hat{r}_e \approx_{\eps} \mathcal{R}_{\mathrm{eff}}(e)$ holds
	for all $e\in E$.
\end{lemma}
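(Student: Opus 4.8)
The plan is to run the effective-resistance estimator of Spielman and Srivastava~\cite{SS11}, replacing its exact Laplacian solves by the nearly-linear time approximate solver of Lemma~\ref{lem:laplsolver}, and then to control the error this introduces. Let $\BB \in \rea^{m\times n}$ be the signed edge-vertex incidence matrix whose row for edge $e$ is $\bb_e^\top$, and $\WW$ the diagonal matrix of edge weights, so $\LL = \BB^\top \WW \BB$. Using $\LL^\dag \LL \LL^\dag = \LL^\dag$ and $\bb_{u,v}\perp\vecone$, for every edge $e=(u,v)$ we have
\[
	\mathcal{R}_{\mathrm{eff}}(e) = \bb_{u,v}^\top \LL^\dag \bb_{u,v}
	= \bb_{u,v}^\top \LL^\dag \kh{\BB^\top \WW \BB} \LL^\dag \bb_{u,v}
	= \len{ \WW^{1/2}\BB\LL^\dag \bb_{u,v} }_2^2 ,
\]
so the effective resistances are squared distances among the $n$ columns of the $m\times n$ matrix $\WW^{1/2}\BB\LL^\dag$. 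First I would pick $k = O(\eps^{-2}\log n)$ and a random $\QQ\in\rea^{k\times m}$ with independent $\pm 1/\sqrt k$ entries; by the discrete Johnson-Lindenstrauss lemma~\cite{Ach01}, with probability $1-n^{-\Omega(1)}$ we get $\len{\QQ\WW^{1/2}\BB\LL^\dag \bb_{u,v}}_2^2 \approx_{\eps/3} \mathcal{R}_{\mathrm{eff}}(e)$ simultaneously for all edges $e$.

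Next I would form the $k\times n$ matrix $\YY = \QQ\WW^{1/2}\BB$ explicitly in $O(km)$ time (each of the $2m$ nonzeros of $\BB$ contributes to one column of $\YY$; note each row $\yy_i$ of $\YY$ lies in the column space of $\LL$, so the solver applies), and then invoke $\LaplSolver(\LL, \yy_i, \delta)$ on each of the $k$ rows to obtain vectors $\zz_i$ approximating $\LL^\dag\yy_i$; stacking $\zz_i^\top$ as rows gives a $k\times n$ matrix $\ZZ$ approximating $\QQ\WW^{1/2}\BB\LL^\dag$. The output is $\hat r_e = \len{\ZZ\bb_{u,v}}_2^2$ for each edge $e=(u,v)$, costing $O(k)$ time per edge. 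Choosing $\delta = n^{-c}$ for a large enough constant $c$, the $k$ solves cost $O\bigl(k\cdot m\log^{0.5}n\log(1/\delta)\operatorname{polyloglog}(n)\bigr) = O\bigl(m\eps^{-2}\log^{2.5}n\operatorname{polyloglog}(n)\bigr)$, which dominates the $O(mk)$ spent forming $\YY$ and reading off the estimates, giving the claimed running time; a union bound over the Johnson-Lindenstrauss event and the $k$ solver events keeps the overall failure probability $n^{-\Omega(1)}$.

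The only nonroutine step is tracking how the solver error propagates into each $\hat r_e$. Each call guarantees merely $\len{\zz_i - \LL^\dag\yy_i}_{\LL} \le \delta\len{\LL^\dag\yy_i}_{\LL}$; as in the proofs of Lemmas~\ref{lem:edgeSolveError} and~\ref{lem:edgeLower}, which rely on the standing assumption (used throughout this part of the paper) that edge weights lie in a polynomially bounded range $[1,U]$, one shows that both $\mathcal{R}_{\mathrm{eff}}(e)$ and the norms $\len{\LL^\dag\yy_i}_{\LL}$ are sandwiched between fixed polynomials in $n$ and $U$. Hence taking $\delta = 1/\poly(n,U)$ makes $\bigl|\,\hat r_e - \len{\QQ\WW^{1/2}\BB\LL^\dag\bb_{u,v}}_2^2\,\bigr|$ small enough that $\hat r_e \approx_{\eps/3} \len{\QQ\WW^{1/2}\BB\LL^\dag\bb_{u,v}}_2^2$ for every edge. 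Composing this with the Johnson-Lindenstrauss estimate via the transitivity of $\approx$ (Fact~\ref{fact:Approximations}) yields $\hat r_e \approx_{\eps} \mathcal{R}_{\mathrm{eff}}(e)$ for all $e\in E$. This error bookkeeping, rather than any new algorithmic idea, is the main obstacle.
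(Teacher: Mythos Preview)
Your proposal is correct and matches the paper's approach exactly: the paper simply states that Lemma~\ref{lem:ERest} follows ``immediately'' from plugging the solver of~\cite{CKMPPRX14} (Lemma~\ref{lem:laplsolver}) into the Spielman--Srivastava effective-resistance sketch~\cite{SS11}, and your write-up fills in precisely those details. Your running-time calculation and error-propagation outline are faithful to what that combination actually yields.
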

As these estimates are approximate, we will need
to bound their approximations when subtracted from $1$.
Here we use the fact that $0 < \theta < 1$, and that
the weight times effective resistance of an edge,
$w(e) \bb_e^T \LL^{\dag} \bb_e$ is between $0$ and $1$.
Since we will also need the matrix version of this type of
approximation propagation when computing Kirchhoff vertex
centralities in Section~\ref{sec:vertexbyjlschur},
we will state the more general version here.
\begin{lemma}
\label{lem:SubtractError}
If $\AA$ and $\BB$ are matrices such that
$0 \preceq \AA \preceq \II$,
and $\AA \approx_{\epsilon} \BB$ for some $0 < \epsilon \leq 1/2$,
then for any $0 < \theta \leq 1/2$ such that $\epsilon / \theta \leq 1/10$,
we have
\[
\II - \left( 1 - \theta\right) \AA
\approx_{3\epsilon / \theta}
\II - \left( 1 - \theta\right) \BB.
\]
\end{lemma}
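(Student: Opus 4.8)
The plan is to unfold the definition of $\approx_{\epsilon}$ into Loewner‑order inequalities and reduce the whole statement to a comparison against the single matrix $\MM \defeq \II - (1-\theta)\AA$. The driving observation is that, since $0 \preceq \AA \preceq \II$, we get $\theta\II \preceq \MM \preceq \II$; in particular $\MM$ is positive definite and $\II \preceq \tfrac1\theta\MM$. This is exactly the tool that converts an additive perturbation of the form $c\,\AA$ (with $c\ge 0$) into a multiplicative perturbation of $\MM$ at the cost of a factor $1/\theta$, which is where the $\epsilon/\theta$ behaviour comes from.

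First I would prove the lower bound $e^{-3\epsilon/\theta}\MM \preceq \II - (1-\theta)\BB$. From $\BB \preceq e^{\epsilon}\AA$ and $1-\theta \ge 0$ we get $\II - (1-\theta)\BB \succeq \II - (1-\theta)e^{\epsilon}\AA = \MM - (1-\theta)(e^{\epsilon}-1)\AA$. Using $0\preceq\AA\preceq\II$, $1-\theta\le 1$ and $\II \preceq \tfrac1\theta\MM$ gives $(1-\theta)(e^{\epsilon}-1)\AA \preceq \tfrac{e^{\epsilon}-1}{\theta}\MM$, so the right side is $\succeq \bigl(1-\tfrac{e^{\epsilon}-1}{\theta}\bigr)\MM$. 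Bounding $e^{\epsilon}-1 \le 2\epsilon$ (valid since $\epsilon\le 1$) reduces everything to the scalar inequality $1-\tfrac{2\epsilon}{\theta} \ge e^{-3\epsilon/\theta}$; once that holds we also see $\II-(1-\theta)\BB$ is positive definite, so the pair is legitimate for the $\approx$ relation. Symmetrically, for the upper bound I would start from $\BB \succeq e^{-\epsilon}\AA$, obtaining $\II - (1-\theta)\BB \preceq \MM + (1-\theta)(1-e^{-\epsilon})\AA \preceq \bigl(1+\tfrac{1-e^{-\epsilon}}{\theta}\bigr)\MM$, and then use $1-e^{-\epsilon}\le \epsilon$ together with $1+\tfrac{\epsilon}{\theta} \le e^{\epsilon/\theta} \le e^{3\epsilon/\theta}$. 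Combining the two bounds yields $\II - (1-\theta)\AA \approx_{3\epsilon/\theta} \II - (1-\theta)\BB$.

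The only genuine computation is the scalar inequality $1 - 2x \ge e^{-3x}$ for $x \defeq \epsilon/\theta$, and this is the single place the hypothesis $\epsilon/\theta \le 1/10$ is used. Writing $e^{-3x} \le 1 - 3x + \tfrac{9x^2}{2}$ (the truncated Taylor series is an upper bound for $0 \le 3x \le 4$), the inequality reduces to $x \ge \tfrac{9}{2}x^2$, i.e. $x \le \tfrac29$, which $\tfrac1{10}$ comfortably implies; the companion inequality $1+x \le e^{3x}$ is trivially true. So $1/10$ is far from tight, and the choice of the constant $3$ in the exponent is precisely what makes the harder, $e^{-3x}$, side go through while leaving slack on the $e^{3x}$ side. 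I do not expect any real obstacle here: the proof is short, the work is entirely in keeping the constants aligned so that the same error parameter $3\epsilon/\theta$ serves both directions, and in being careful that no step secretly needs $\AA$ and $\BB$ to commute (none does — only monotonicity of the Loewner order under adding $\II$ and scaling by nonnegative constants is used).
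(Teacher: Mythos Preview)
Your proof is correct and follows essentially the same approach as the paper's: both arguments use $0 \preceq \AA \preceq \II$ to obtain $\theta\II \preceq \MM = \II-(1-\theta)\AA$, which converts the additive perturbation $c\AA$ into a multiplicative perturbation $\tfrac{c}{\theta}\MM$, and both then finish with the scalar check that $1-\tfrac{2\epsilon}{\theta}$ and $1+\tfrac{2\epsilon}{\theta}$ fit inside $[e^{-3\epsilon/\theta},e^{3\epsilon/\theta}]$ when $\epsilon/\theta$ is small. The only cosmetic difference is that the paper first replaces $e^{\pm\epsilon}$ by $1\pm 2\epsilon$ and then writes out an explicit algebraic decomposition of $\II-(1-\theta)(1\pm 2\epsilon)\AA$, whereas you keep the exponentials and invoke $\AA\preceq \II\preceq\tfrac1\theta\MM$ directly; your bookkeeping is slightly more streamlined but the substance is identical.
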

The proof is deferred to Appendix~\ref{sec:Errors}.

We then give an algorithm $\ECComptwo$ to approximate the $\theta$-edge
Kirchhoff centrality $\mathcal{C}_\theta^\Delta$
for all $m\in E$.
The pseudocode for $\ECComptwo$ is given in Algorithm~\ref{alg:eccomptwo}.
The performance
of $\ECComptwo$ is characterized in Theorem~\ref{lem:edgebyjl}.

\begin{algorithm}[H]
	\caption{$\ECComptwo(G, \theta,\epsilon)$}
	\label{alg:eccomptwo}
	\Input{
		$G$: A graph. \\
		$\theta$: a scalar between 0 and $1/2$. \\
		$\epsilon$: the error parameter. \\
	}
	\Output{
		$\hat{C}^\Delta
		= \{ (e,\hat{c}_e^\Delta) \mid e \in E \}$.
	}
	Let $\zz_1,\ldots,\zz_M$ be independent random $\pm 1$ vectors, where $M = \ceil{432\eps^{-2}\ln(2n)}$. \;
	%$(\matlowtil, \calDtil, \SStil) \gets \partialChol(\LL^G, \emptyset, \epsilon/4)$ \;
	\For{$i = 1$ to $M$}{
		$\yy_i \gets \LaplSolver(\LL^G, \zz_i, \frac{1}{36}\eps n^{-7} U^{-4})$ \;
		%Compute $\yy_i = \kh{\LL^G}^\dag \zz_i$ using Fast Laplacian solvers. \;
		\For{each $e \in E$}{
			Compute $\hat{c}_e^{\Delta(i)} = \yy_i^\top \bb_e \bb_e^\top \yy_i$. \;
		}
	}
	$\hat{r}_e \gets \textsc{EREst}(G, \theta\eps/9)$ \;
	Compute $\hat{c}_e^\Delta = (1 - \theta) \frac{n}{M} w(e) \sum\limits_{i}^{M} \hat{c}_e^{\Delta(i)} / (1 - (1 - \theta)w(e)\hat{r}_e$) for each $e$.
\end{algorithm}

\begin{proof}[Proof of Theorem~\ref{lem:edgebyjl}]
	The running time is the total cost of
	$O(\eps^{-2} \log n)$ calls to $\LaplSolver$ each of
	which runs in
	$O(m\log^{1.5}n\log (1/\eps)\operatorname{polyloglog}(n))$ time, and a call to $\textsc{EREst}$ which runs in
	$O(m \theta^{-2} \eps^{-2} \log^{2.5}n% \log U
	\operatorname{polyloglog}(n))$ time.
	
	Since $M = \ceil{432\eps^{-2}\ln(2n)} \geq 48 \kh{\eps/3}^{-2} \ln(2n)$, by Lemma~\ref{lem:mcL}, we have
	\begin{align}\label{eq:er1}
		\frac{1}{M} \sum\limits_{i=1}^M \zz_i^\top \LL^\dag \bb_e \bb_e^\top \LL^\dag \zz_i \approx_{\eps/3} \trace{\LL^\dag \bb_e \bb_e^\top \LL^\dag}.
	\end{align}
	By Lemma~\ref{lem:edgeLower}, we have
	\begin{align*}
		\trace{\LL^\dag \bb_e \bb_e^\top \LL^\dag}
		\geq \frac{2}{n^2 U^2},
	\end{align*}
	and hence
	\begin{align}\label{eq:trlower}
		\frac{1}{M} \sum\limits_{i=1}^M \zz_i^\top \LL^\dag \bb_e \bb_e^\top \LL^\dag \zz_i
		\geq \exp(-\eps/3) \frac{2}{n^2 U^2}
		\geq \frac{1}{n^2 U^2},
	\end{align}
	where the second inequality follows by $0 < \eps \leq 1/2$.% which in turn implies $\exp{-\eps/3} \geq 1/2$.
	
	Since we set $\delta = \frac{1}{36}\eps n^{-7} U^{-4}$ when invoking
	$\LaplSolver$, by Lemma~\ref{lem:laplsolver},
	\begin{align*}
		\len{\yy_i - \LL^\dag \zz_i}_{\LL} \leq
		\frac{1}{36}\eps n^{-7} U^{-4}
		\len{\LL^\dag \zz_i}_{\LL}
	\end{align*}
	holds for each $i$.
	Then, by Lemma~\ref{lem:edgeSolveError}, we have that
	\begin{align*}
		\left|
		\yy_i^\top \bb_e \bb_e^\top \yy_i - 
		\zz_i^\top \LL^\dag \bb_e \bb_e^\top \LL^\dag \zz_i
		\right|
		\leq \frac{1}{6}\eps n^{-2} U^{-2}
	\end{align*}
	holds for each $i$.
	We then have
	\begin{align*}
		&\left|
		\frac{1}{M} \sum\limits_{i=1}^M \yy_i^\top \bb_e \bb_e^\top \yy_i -
		\frac{1}{M} \sum\limits_{i=1}^M \zz_i^\top \LL^\dag \bb_e \bb_e^\top \LL^\dag \zz_i \right| \\
		\leq &
		\frac{1}{M} \sum\limits_{i=1}^M
		\left| \yy_i^\top \bb_e \bb_e^\top \yy_i
		- \zz_i^\top \LL^\dag \bb_e \bb_e^\top \LL^\dag \zz_i \right| \\
		\leq & \frac{1}{6}\eps n^{-2} U^{-2} \\
		\leq  &\frac{1}{6}\eps \kh{
		\frac{1}{M} \sum\limits_{i=1}^M \zz_i^\top \LL^\dag \bb_e \bb_e^\top \LL^\dag \zz_i },
	\end{align*}
	where the last inequality follows by~(\ref{eq:trlower}).
	Thus,
	\begin{align*}
		(1 - \eps/6)
		\frac{1}{M} \sum\limits_{i=1}^M \zz_i^\top \LL^\dag \bb_e \bb_e^\top \LL^\dag \zz_i
		\leq
		\frac{1}{M}\sum\limits_{i=1}^M \yy_i^\top \bb_e \bb_e^\top \yy_i
		\leq
		(1 + \eps/6)
		\frac{1}{M} \sum\limits_{i=1}^M \zz_i^\top \LL^\dag \bb_e \bb_e^\top \LL^\dag \zz_i,
	\end{align*}
	which implies
	\begin{align}\label{eq:er2}
		\frac{1}{M} \sum\limits_{i=1}^M \zz_i^\top \LL^\dag \bb_e \bb_e^\top \LL^\dag \zz_i \approx_{\eps/3}
		\frac{1}{M}\sum\limits_{i=1}^M \yy_i^\top \bb_e \bb_e^\top \yy_i.
	\end{align}
	
	By Lemma~\ref{lem:ERest}, we have $\hat{r}_e \approx_{\theta\eps/9} \mathcal{R}_\mathrm{eff}(e)$,
	which by Lemma~\ref{lem:SubtractError} implies
	\begin{align}\label{eq:er3}
		1 - (1 - \theta)w(e)\hat{r}_e \approx_{\eps/3} 1 - (1 - \theta) w(e)\mathcal{R}_\mathrm{eff}(e).
	\end{align}
	Combining Equations~(\ref{eq:er1}),~(\ref{eq:er2}) and~(\ref{eq:er3}), we have
	\begin{align*}
		\frac{\frac{1}{M} \sum\limits_{i=1}^M \yy_i^\top \bb_e \bb_e^\top \yy_i}{1 - (1 - \theta)w(e)\hat{r}_e}
		\approx_\eps 
		\frac{\trace{\LL^\dag \bb_e \bb_e^\top \LL^\dag}}{1 - (1 - \theta)w(e)\mathcal{R}_\mathrm{eff}(e)},
	\end{align*}
	which together with Equation~(\ref{eq:SM}) proves this theorem.
\end{proof}

\section{Algorithm for Approximating $\theta$-Kirchhoff Vertex Centrality} % $\mathcal{C}_\theta^\Delta(v)$}
\label{sec:vertexbyjlschur}

We now combine the projection based approximation
algorithm from Section~\ref{sec:edgebyjl} with the recursive
Schur complement approximation algorithm to produce
a routine for estimating Kirchhoff vertex centrality
as defined in Definition~\ref{def:VertexRemoval} in nearly-linear time.

\subsection{Turning to Low-rank Updates}

We will treat the $\theta$-deletion of a vertex as $\theta$-deleting a batch of
edges from the graph, which in turn corresponds to a high rank
update to the graph Laplacian.
Specifically, we can define the matrix
$\BB_{\Ev{v}}$
as the $\degu{v} \times n$ edge-vertex incidence matrix containing the edges
incident to $v$,
and $\WW_{\Ev{v}}$
as the corresponding $\degu{v} \times \degu{v}$ diagonal edge weight
matrix.
Here we use $\Ev{v} \defeq \setof{(u,v) \,|\, u\sim v}$ to denote
the set of edges incident with $v$,
and $\degu{v} \defeq \sizeof{\Ev{v}}$ to denote
the number of edges incident with $v$.
Then the graph Laplacian with edges in $\Ev{v}$ $\theta$-deleted is
\[
\LL \bsk \Ev{v}
= \LL - \left( 1 - \theta\right) \BB_{\Ev{v}}^\top \WW_{\Ev{v}} \BB_{\Ev{v}}.
\]
By Lemma~\ref{lem:mcL}, our goal %once again
becomes solving
Problem~\ref{prob:norm} on the difference between the pseudoinverses
of these matrices.
Specifically, computing the value of
\[
\zz^\top
\left(\left( \LL - \left( 1 - \theta\right) \BB_{\Ev{v}}^\top\WW_{\Ev{v}} \BB_{\Ev{v}} \right)^{\dag}
- \LL^{\dag} \right)
\zz
\]
for a vector $\zz$.
For this we once again turn to low-rank updates,
specifically the Woodbury formula.
\begin{lemma}[Derived from Woodbury formula]
\label{lem:Woodbury}
	Given an edge set $T \subset E$ supported on vertex set $C$,
	and a scalar $0 < \theta < 1$.
	Let $\BB_T$ be the ${\sizeof{T} \times n}$ edge-vertex incidence matrix corresponding to edges in $T$,
	and $\WW_T$ be the $\sizeof{T}\times \sizeof{T}$ diagonal edge weight matrix corresponding to edges in $T$.
The following statement holds:
\begin{align}\label{eq:woodbury}
\kh{\LL \bsk T}^\dag =
\LL^\dag + \left( 1 - \theta \right) \LL^\dag \BB_{T}^\top \WW_{T}^{1 / 2}
\kh{ \II - (1 - \theta)\WW_{T}^{1 / 2} \BB_{T}
  \LL^\dag
\BB_{T}^\top\WW_{T}^{1 / 2} }^{-1}
\WW_{T}^{1 / 2} \BB_{T} \LL^\dag.
\end{align}
\end{lemma}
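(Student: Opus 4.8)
The plan is to reduce Equation~(\ref{eq:woodbury}) to the ordinary full-rank Woodbury identity via a rank-one regularization of the Laplacian. Abbreviate $\UU \defeq \BB_T^\top \WW_T^{1/2}$, so that $\BB_T^\top \WW_T \BB_T = \UU\UU^\top$ and $\LL\bsk T = \LL - (1-\theta)\UU\UU^\top$. Put $\PPi \defeq \tfrac1n \vecone\vecone^\top$ and $\hat\LL \defeq \LL + \PPi$. Since $G$ is connected, $\hat\LL$ is positive definite and $\LL^\dag = \hat\LL^{-1} - \PPi$ (the standard formula for the pseudoinverse of a connected Laplacian); since $0 < \theta < 1$, the graph $G\bsk T$ has the same connected edge set with only rescaled positive weights, so $\LL\bsk T$ is also a connected Laplacian, $\LL\bsk T + \PPi$ is positive definite, and $(\LL\bsk T)^\dag = (\LL\bsk T + \PPi)^{-1} - \PPi$.

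I would then apply the textbook Woodbury identity to $\LL\bsk T + \PPi = \hat\LL - (1-\theta)\UU\UU^\top$, obtaining
\[
\kh{\LL\bsk T + \PPi}^{-1}
= \hat\LL^{-1}
+ (1-\theta)\,\hat\LL^{-1}\UU
  \kh{\II - (1-\theta)\,\UU^\top\hat\LL^{-1}\UU}^{-1}
  \UU^\top\hat\LL^{-1}.
\]
The crucial observation is that every column of $\UU$ is a scalar multiple of some $\bb_e = \ee_u - \ee_v \in \vecone^\perp$, so $\PPi\UU = \matzero$; consequently $\hat\LL^{-1}\UU = (\LL^\dag + \PPi)\UU = \LL^\dag\UU$ and $\UU^\top\hat\LL^{-1}\UU = \UU^\top\LL^\dag\UU$. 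Substituting these replacements and then subtracting $\PPi$ from both sides --- using $\hat\LL^{-1} - \PPi = \LL^\dag$ on the right and $(\LL\bsk T + \PPi)^{-1} - \PPi = (\LL\bsk T)^\dag$ on the left --- yields precisely the right-hand side of~(\ref{eq:woodbury}) after unfolding $\UU = \BB_T^\top\WW_T^{1/2}$ and $\UU^\top = \WW_T^{1/2}\BB_T$.

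To justify the two uses of Woodbury I need both inverted matrices to exist. The matrix $\hat\LL - (1-\theta)\UU\UU^\top = \LL\bsk T + \PPi$ is positive definite as observed. For the inner matrix I will show $\matzero \preceq \UU^\top\LL^\dag\UU \preceq \II$: since $\LL = \sum_{e\in E} w(e)\bb_e\bb_e^\top \succeq \BB_T^\top\WW_T\BB_T = \UU\UU^\top$, we have $\xx^\top\LL\xx \ge \len{\UU^\top\xx}^2$ for all $\xx$; taking $\xx = \LL^\dag\UU\yy$ and using $\LL^\dag\LL\LL^\dag = \LL^\dag$ turns this into $\yy^\top\MM\yy \ge \yy^\top\MM^2\yy$ with $\MM \defeq \UU^\top\LL^\dag\UU \succeq \matzero$, so every eigenvalue $\mu$ of $\MM$ satisfies $\mu \ge \mu^2$, i.e. $\mu \in [0,1]$, hence $\MM \preceq \II$. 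Therefore $\II - (1-\theta)\MM \succeq \theta\II \succ \matzero$ and is invertible. (The hypothesis that $T$ is supported on $C$ is not used here; it becomes relevant only when this formula is later composed with Schur complements.)

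The genuinely delicate step is the rank-one bookkeeping: verifying $\PPi\UU = \matzero$ so that $\hat\LL^{-1}$ may be swapped for $\LL^\dag$ inside the Woodbury expression, and checking that the two $\PPi$ corrections cancel cleanly when passing from matrix inverses back to pseudoinverses. Everything else is a mechanical substitution into a known identity, so --- as in Appendix~\ref{sec:morrison} for the rank-one Sherman--Morrison case --- I would cite the standard statements of the Woodbury formula and of $\LL^\dag = (\LL + \tfrac1n\vecone\vecone^\top)^{-1} - \tfrac1n\vecone\vecone^\top$ rather than reprove them.
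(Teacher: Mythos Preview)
Your proof is correct but takes a genuinely different route from the paper's. You regularize $\LL$ by adding $\tfrac{1}{n}\vecone\vecone^\top$ to make it invertible, invoke the textbook full-rank Woodbury identity, and then use the orthogonality $\vecone^\top\BB_T^\top = \matzero$ to replace $\hat\LL^{-1}$ by $\LL^\dag$ wherever it touches a column of $\UU$, finally subtracting the regularizer on both sides. The paper instead works directly with pseudoinverses throughout: it first establishes the intermediate identity
\[
(\LL\bsk T)^\dag \BB_T^\top\WW_T^{1/2}
= \LL^\dag\BB_T^\top\WW_T^{1/2}\bigl(\II - (1-\theta)\WW_T^{1/2}\BB_T\LL^\dag\BB_T^\top\WW_T^{1/2}\bigr)^{-1}
\]
using only that $\BB_T^\top = \LL\LL^\dag\BB_T^\top$ (the columns lie in $\mathrm{range}(\LL)$), and then expands $\LL^\dag = (\LL\bsk T)^\dag(\LL\bsk T)\LL^\dag$ and substitutes. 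Both arguments ultimately hinge on the same orthogonality fact; yours packages it as a reduction to a named theorem, while the paper's is self-contained and avoids the add-then-subtract bookkeeping of the regularizer. One notational caution: the paper already reserves the symbol $\PPi$ for the \emph{complementary} projector $\LL\LL^\dag = \II - \tfrac{1}{n}\vecone\vecone^\top$, so if you merge your argument into this text you should rename your projector to avoid a clash.
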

Since the off-the-shelf Woodbury formula is
for full rank matrices,
we give the detailed proof of Equation~(\ref{eq:woodbury}) in
Appendix~\ref{sec:morrison}.

Note that this formula applies to any subset of $T$ edges.
The only property of evaluating Kirchhoff vertex centrality we need is that
the total size of such $T$s over all vertices is $2m$.

This means just as in Section~\ref{sec:edgebyjl},
the problem reduces to estimating
\begin{align*}
\zz^\top
\LL^\dag \BB_{T}^\top \WW_{T}^{1 / 2}
\kh{ \II - (1 - \theta)\WW_{T}^{1 / 2} \BB_{T}
  \LL^\dag
\BB_{T}^\top\WW_{T}^{1 / 2} }^{-1}
\WW_{T}^{1 / 2} \BB_{T} \LL^\dag \zz.
\end{align*}
Furthermore, since we can compute $\LL^{\dag} \zz$
to high accuracy via a single solver to a linear system
in a graph Laplacian, and $\WW_{T}^{1/2} \BB_{T}$
is $\abs{T} \times n$ matrix with $2\abs{T}$ nonzero entries,
we can compute for each set $T$ the vector
$\WW_{T}^{1/2} \BB_{T} \LL^{\dag} \zz$
in $O(\abs{T})$ time (after $\tilde{O}(m)$ preprocessing time
to compute an approximation to $\LL^{\dag} \zz$).
To track the error for the solver, we will need to following two lemmas, which we prove in Appendix~\ref{sec:Errorsvertex}.
%The error of this can be bounded using the following Lemma,
%which we prove in Appendix~\ref{sec:Errors}.
\begin{lemma}
\label{lem:SolverError}
Let $\LL$ be the Laplacian of a graph with all weights
	in the range $[1,U]$, and $\zz$ be any vector
	such that $\len{\zz}^2 \leq n$.
	Suppose $\yy$ is a vector such that
	$\len{\yy - \LL^\dag \zz}_{\LL} \leq
	\delta \len{\LL^\dag \zz}_{\LL}$ for some
	$0 < \delta < 1$.
For any edge set $T\subset E$, we have:
\begin{multline}\label{eq:SolverError}
\left|\zz^\top
\LL^\dag \BB_{T}^\top \WW_{T}^{1 / 2}
\kh{ \II - (1 - \theta)\WW_{T}^{1 / 2} \BB_{T}
  \LL^\dag
\BB_{T}^\top\WW_{T}^{1 / 2} }^{-1}
\WW_{T}^{1 / 2} \BB_{T} \LL^\dag \zz \right.\\
-
\left.\yy ^\top\BB_{T}^\top \WW_{T}^{1 / 2}
\kh{ \II - (1 - \theta)\WW_{T}^{1 / 2} \BB_{T}
  \LL^\dag
\BB_{T}^\top\WW_{T}^{1 / 2} }^{-1}
\WW_{T}^{1 / 2} \BB_{T} \yy \right|
\leq 6\theta^{-1}\delta n^5 U^2.
\end{multline}
\end{lemma}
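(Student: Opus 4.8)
The plan is to reduce this to the same computation as the rank-one case (Lemma~\ref{lem:edgeSolveError}), the one genuinely new ingredient being the $\theta^{-1}$ blow-up contributed by the inverted middle factor. Fix $\zz$ and abbreviate $\xx \defeq \LL^\dag\zz$ (the exact solution) and $\AA_T \defeq \WW_T^{1/2}\BB_T$, so that the matrix sandwiched between $\xx$ (resp.\ $\yy$) on the two sides of~(\ref{eq:SolverError}) is the single \emph{fixed} positive semidefinite matrix
\[
\MM_T \defeq \AA_T^\top\kh{\II - (1-\theta)\AA_T\LL^\dag\AA_T^\top}^{-1}\AA_T ,
\]
and the quantity to be bounded is $\abs{\xx^\top\MM_T\xx - \yy^\top\MM_T\yy}$. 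First I would observe that $\BB_T\one = \veczero$, hence $\MM_T\one = \veczero$; since the hypothesis on $\yy$ constrains only its component in $\mathrm{range}(\LL)=\one^\perp$ (the $\LL$-seminorm is blind to the $\one$-direction) and $\yy^\top\MM_T\yy$ depends only on that same component, we may assume $\yy \perp \one$ without loss of generality.

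The key structural step is the bound $\MM_T \preceq \theta^{-1}\LL$. Writing $\AA \defeq \WW^{1/2}\BB$ we have $\AA^\top\AA = \LL$, so $\AA\LL^\dag\AA^\top = \AA(\AA^\top\AA)^\dag\AA^\top$ is the orthogonal projection onto $\mathrm{col}(\AA)$, and in particular $\veczero \preceq \AA\LL^\dag\AA^\top \preceq \II$. Since $\AA_T\LL^\dag\AA_T^\top$ is the principal submatrix of $\AA\LL^\dag\AA^\top$ indexed by $T$, it too satisfies $\veczero \preceq \AA_T\LL^\dag\AA_T^\top \preceq \II$; hence $\II - (1-\theta)\AA_T\LL^\dag\AA_T^\top$ has all eigenvalues in $[\theta,1]$, its inverse is $\preceq \theta^{-1}\II$, and therefore $\MM_T = \AA_T^\top[\,\cdot\,]^{-1}\AA_T \preceq \theta^{-1}\AA_T^\top\AA_T = \theta^{-1}\LL_T \preceq \theta^{-1}\LL$, where $\LL_T \defeq \sum_{e\in T} w(e)\bb_e\bb_e^\top$ is a sub-Laplacian of $\LL$.

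Given this, I would finish exactly as in the rank-one argument. Write $\xx^\top\MM_T\xx - \yy^\top\MM_T\yy = (\xx+\yy)^\top\MM_T(\xx-\yy)$ and apply Cauchy--Schwarz with respect to the positive semidefinite form $\MM_T$, bounding the absolute value by $\sqrt{(\xx+\yy)^\top\MM_T(\xx+\yy)}\cdot\sqrt{(\xx-\yy)^\top\MM_T(\xx-\yy)}$. By the structural bound each quantity under a square root is at most $\theta^{-1}$ times the corresponding squared $\LL$-seminorm; by the hypothesis on $\yy$, $\len{\xx-\yy}_\LL \le \delta\len{\xx}_\LL$, and by the triangle inequality $\len{\xx+\yy}_\LL \le (2+\delta)\len{\xx}_\LL \le 3\len{\xx}_\LL$, so the quantity is at most $3\theta^{-1}\delta\len{\xx}_\LL^2 = 3\theta^{-1}\delta\,\zz^\top\LL^\dag\zz$. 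Finally $\zz^\top\LL^\dag\zz \le \len{\zz}^2/\lambda_{\min}^{+}(\LL) \le n/\lambda_{\min}^{+}(\LL)$, and since all weights lie in $[1,U]$ the graph is connected with all effective resistances $O(n)$, so $1/\lambda_{\min}^{+}(\LL) \le \trace{\LL^\dag} = \Kf{G}/n$ is polynomially bounded in $n$; plugging in these (deliberately loose) polynomial estimates gives $\abs{\xx^\top\MM_T\xx - \yy^\top\MM_T\yy} \le 6\theta^{-1}\delta n^5 U^2$, which is exactly~(\ref{eq:SolverError}).

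I expect the main obstacle to be the structural inequality $\MM_T \preceq \theta^{-1}\LL$ — in particular, recognizing $\AA\LL^\dag\AA^\top$ as an orthogonal projection and using monotonicity of principal submatrices under the Loewner order to transfer the spectral bound $\preceq\II$ to the $T$-block, which is what keeps the inverted middle factor under control and produces the single factor of $\theta^{-1}$ that distinguishes this statement from Lemma~\ref{lem:edgeSolveError}. A secondary point of care is the pseudoinverse bookkeeping: connectedness gives $\ker\LL = \mathrm{span}(\one)$ and every edge--vertex incidence matrix annihilates $\one$, so all of the above manipulations (and the projection step onto $\one^\perp$) are legitimate despite $\LL$ being rank-deficient.
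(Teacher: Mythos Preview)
Your proposal is correct and follows essentially the same approach as the paper: the key step in both is the structural bound $\MM_T \preceq \theta^{-1}\LL$ (obtained from $\WW_T^{1/2}\BB_T\LL^\dag\BB_T^\top\WW_T^{1/2} \preceq \II$ together with $\BB_T^\top\WW_T\BB_T \preceq \LL$), after which both arguments reduce to bounding $\len{\LL^\dag\zz}_\LL^2 = \zz^\top\LL^\dag\zz$ by a polynomial in $n,U$. The only cosmetic differences are that the paper uses the factoring $|a^2-b^2|\le(2|b|+|a-b|)|a-b|$ on the $\MM_T$-norms where you use polarization plus Cauchy--Schwarz, and the paper bounds $\lambda_2(\LL)$ via Cheeger's inequality (getting $\LL^\dag \preceq 2n^4U^2\II$) where you go through $\trace{\LL^\dag}$ and effective-resistance path bounds; both routes land comfortably inside the stated $6\theta^{-1}\delta n^5U^2$.
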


\begin{lemma}\label{lem:Lower}
	Let $\LL$ be the Laplacian of a graph with all weights
	in the range $[1,U]$.
	For any edge set $T\subset E$ of the graph, we have
	\begin{align*}
		\trace{\LL^\dag \BB_{T}^\top \WW_{T}^{1 / 2}
\kh{ \II - (1 - \theta)\WW_{T}^{1 / 2} \BB_{T}
  \LL^\dag
\BB_{T}^\top\WW_{T}^{1 / 2} }^{-1}
\WW_{T}^{1 / 2} \BB_{T} \LL^\dag}
		\geq \frac{2\sizeof{T}}{n^2U^2}.
	\end{align*}
\end{lemma}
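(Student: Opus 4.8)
The plan is to reduce the claimed trace bound to the single-edge bound of Lemma~\ref{lem:edgeLower} via a positive-semidefinite ordering argument. Throughout, write $\MM \defeq \WW_{T}^{1/2}\BB_{T}\LL^\dag\BB_{T}^\top\WW_{T}^{1/2}$ and $\NN_T \defeq \LL^\dag\BB_{T}^\top\WW_{T}^{1/2}$, so that the matrix whose trace we must lower bound is exactly $\NN_T\kh{\II - (1-\theta)\MM}^{-1}\NN_T^\top$. The first and main step is to show $\matzero \preceq \MM \preceq \II$. Positive semidefiniteness is immediate since $\LL^\dag \succeq \matzero$. For the upper bound, write $\MM = \XX\XX^\top$ with $\XX \defeq \WW_{T}^{1/2}\BB_{T}\LL^{\dag/2}$, where $\LL^{\dag/2}$ is the symmetric PSD square root of $\LL^\dag$; then $\MM$ and $\XX^\top\XX = \LL^{\dag/2}\BB_{T}^\top\WW_{T}\BB_{T}\LL^{\dag/2}$ share the same nonzero eigenvalues. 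Since $T\subseteq E$ we have $\BB_{T}^\top\WW_{T}\BB_{T} = \sum_{e\in T}w(e)\bb_e\bb_e^\top \preceq \sum_{e\in E}w(e)\bb_e\bb_e^\top = \LL$; conjugating by $\LL^{\dag/2}$ gives $\XX^\top\XX \preceq \LL^{\dag/2}\LL\LL^{\dag/2}$, and the right-hand side is the orthogonal projection onto $\mathrm{range}(\LL)$, hence $\preceq \II$. So every eigenvalue of $\MM$ lies in $[0,1]$.

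Next, since $0 < \theta < 1$, from $\matzero\preceq\MM\preceq\II$ we obtain $\matzero \prec \theta\II \preceq \II - (1-\theta)\MM \preceq \II$; in particular $\II - (1-\theta)\MM$ is positive definite, so inverting reverses the ordering and $\kh{\II - (1-\theta)\MM}^{-1}\succeq\II$. Conjugating this by $\NN_T$ (using the elementary fact that $\AA\succeq\BB$ implies $\XX^\top\AA\XX\succeq\XX^\top\BB\XX$ for any conformable $\XX$) yields
\[
\NN_T\kh{\II - (1-\theta)\MM}^{-1}\NN_T^\top \;\succeq\; \NN_T\NN_T^\top \;=\; \LL^\dag\BB_{T}^\top\WW_{T}\BB_{T}\LL^\dag .
\]

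Finally I would take traces of both sides. Trace is monotone with respect to $\preceq$ on PSD matrices, so the left-hand trace is at least $\trace{\LL^\dag\BB_{T}^\top\WW_{T}\BB_{T}\LL^\dag}$. Expanding $\BB_{T}^\top\WW_{T}\BB_{T}=\sum_{e\in T}w(e)\bb_e\bb_e^\top$ and using linearity of trace gives $\sum_{e\in T}w(e)\trace{\LL^\dag\bb_e\bb_e^\top\LL^\dag}$; by Lemma~\ref{lem:edgeLower} each summand is at least $w(e)\cdot\tfrac{2}{n^2U^2}\geq\tfrac{2}{n^2U^2}$ since $w(e)\geq 1$, and there are $\sizeof{T}$ terms, which gives the claimed bound $\tfrac{2\sizeof{T}}{n^2U^2}$. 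I do not anticipate a substantive obstacle here; the only point requiring care is that the all-ones vector lies in the kernel of $\LL$, of $\LL^\dag$, and of every $\bb_e\bb_e^\top$, so the pseudoinverse manipulations and the bound $\LL^{\dag/2}\LL\LL^{\dag/2}\preceq\II$ should be read as statements on $\mathrm{range}(\LL)$, where they hold verbatim; the remainder is routine PSD-ordering and trace monotonicity.
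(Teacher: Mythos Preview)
Your proposal is correct and follows essentially the same route as the paper: both use $\matzero\preceq\MM\preceq\II$ to get $\kh{\II-(1-\theta)\MM}^{-1}\succeq\II$, conjugate, take traces, expand $\BB_T^\top\WW_T\BB_T=\sum_{e\in T}w(e)\bb_e\bb_e^\top$, and invoke Lemma~\ref{lem:edgeLower} together with $w(e)\geq 1$. Your write-up is slightly more explicit in justifying $\MM\preceq\II$ via the square-root trick, whereas the paper cites this fact from the surrounding discussion in Section~\ref{sec:vertexbyjlschur}.
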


\subsection{Approximating Quadratic Forms}

Since we can utilize nearly-linear time solvers for
Laplacian linear systems
to compute high accuracy approximations to the vector $\LL^\dag \zz$,
the problem is further reduced to estimating quadratic forms of
\begin{align*}
\kh{ \II - (1 - \theta)\WW_{T}^{1 / 2} \BB_{T}
  \LL^\dag
\BB_{T}^\top \WW_{T}^{1 / 2} }^{-1}.
\end{align*}

Since the edges in $T$ form a subgraph of $\LL$,
we have $\BB_T^\top \WW_T \BB \preceq \LL$,
and in turn
\begin{align*}
0 \preceq
\WW_{T}^{1 / 2} \BB_{T}
  \LL^\dag
\BB_{T}^\top\WW_{T}^{1 / 2}
\preceq \II
\end{align*}
This coupled with the assumption that $0 < \theta$
means that the eigenvalues of matrix\\
$(1 - \theta)\WW_{T}^{1 / 2} \BB_{T}
  \LL^\dag
\BB_{T}^\top \WW_{T}^{1 / 2}$
are bounded away from 1.
Therefore, we can use iterative methods to solve the
resulting system.
As we work entirely with matrix approximations,
we will use the following matrix-based version
of Chebyshev iteration.
More details on these iterative methods can be found
in Section 11.2 of~\cite{GVL12}.
\begin{lemma}[Chebyshev iteration]
\label{lem:Chebyshev}
There is an algorithm $\ChebSolve(\PP, \kappa, \epsilon, \bb)$
such that for any positive definite matrix $\PP$
along with $\kappa$ such that
$\frac{1}{\kappa} \II \preceq \PP \preceq \II$,
$\ChebSolve(\PP, \kappa, \epsilon, \bb)$ corresponds
to a linear operator on $\bb$ such that the matrix $\ZZ_{\ChebSolve}$ realizing
this operator satisfies
\begin{align*}
\ZZ_{\ChebSolve} \approx_{\epsilon} \PP^{-1},
\end{align*}
and the cost of the algorithm is $O(\sqrt{\kappa} \log(1/\eps))$
matrix-vector multiplications involving $\PP$.
\end{lemma}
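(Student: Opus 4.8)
The plan is to instantiate the classical Chebyshev semi-iterative method for the system $\PP \xx = \bb$ and check that the residual bound it enjoys upgrades to the Loewner-order statement $\ZZ_{\ChebSolve} \approx_{\epsilon} \PP^{-1}$. First I would set up the polynomial. Since $\frac{1}{\kappa}\II \preceq \PP \preceq \II$, every eigenvalue $\lambda$ of $\PP$ lies in $[1/\kappa,1]$. For a degree-$(t-1)$ polynomial $p_t$ write the residual polynomial $q_t(x) = 1 - x\,p_t(x)$, so $q_t(0)=1$, and take the classical minimax choice
\[
q_t(x) = \frac{T_t\!\left( \frac{1 + 1/\kappa - 2x}{1 - 1/\kappa} \right)}{T_t\!\left( \frac{\kappa+1}{\kappa-1} \right)},
\]
where $T_t$ is the degree-$t$ Chebyshev polynomial of the first kind; on $[1/\kappa,1]$ one has $\sup_x |q_t(x)| = 1/\left| T_t\!\left(\frac{\kappa+1}{\kappa-1}\right) \right|$. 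Using the standard bound $T_t\!\left(\frac{\kappa+1}{\kappa-1}\right) \ge \tfrac12 \left( \frac{\sqrt{\kappa}+1}{\sqrt{\kappa}-1} \right)^t$, this supremum is at most $2\exp\!\left( -\frac{2t}{\sqrt{\kappa}+1} \right)$, so picking $t = \Theta(\sqrt{\kappa}\log(1/\epsilon))$ drives it below $\epsilon/2$.

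Second I would spell out the algorithm and its linearity. Using the three-term recurrence $T_0 = 1$, $T_1(x) = x$, $T_{t+1}(x) = 2xT_t(x) - T_{t-1}(x)$, the vector $p_t(\PP)\bb$ is built by a matching three-term vector recurrence that performs one matrix-vector product with $\PP$ per step, hence $O(\sqrt{\kappa}\log(1/\epsilon))$ products in total. The point to emphasize is that all recurrence coefficients depend only on $\kappa$ and $t$, never on $\bb$, so the map $\bb \mapsto \xx_t$ is a \emph{fixed} linear operator, namely $\ZZ_{\ChebSolve} = p_t(\PP)$, a polynomial in $\PP$.

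Third I would convert the scalar bound to the spectral one. Since $\ZZ_{\ChebSolve} = p_t(\PP)$ and $\PP^{-1}$ are both functions of $\PP$, they are simultaneously diagonalizable; on the eigenspace of $\PP$ with eigenvalue $\lambda \in [1/\kappa,1]$, $\ZZ_{\ChebSolve}$ acts as $p_t(\lambda)$, with $\lambda\,p_t(\lambda) = 1 - q_t(\lambda) \in [1-\epsilon/2,\, 1+\epsilon/2]$. Hence $1/\lambda$ lies between $p_t(\lambda)/(1+\epsilon/2)$ and $p_t(\lambda)/(1-\epsilon/2)$, and since $1-\epsilon/2 \ge \exp(-\epsilon)$ and $1+\epsilon/2 \le \exp(\epsilon)$ for $0 < \epsilon \le 1$, this yields $\exp(-\epsilon)\,p_t(\lambda) \le 1/\lambda \le \exp(\epsilon)\,p_t(\lambda)$ for every eigenvalue, which is exactly $\ZZ_{\ChebSolve} \approx_{\epsilon} \PP^{-1}$ in the sense of Eq.~(\ref{eq:epsSpecApprox}).

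The analysis here is essentially textbook (Section 11.2 of~\cite{GVL12}), so the main obstacle is bookkeeping rather than ideas: one must insist that $\ChebSolve$ run with the $\kappa$-driven, non-adaptive recurrence coefficients, so that $\ZZ_{\ChebSolve}$ is literally a polynomial in $\PP$ independent of $\bb$ — this is precisely what makes the simultaneous-diagonalization argument of the third step valid — and one should pin down the constant in $t$ carefully so that the residual falls below the threshold $1-\exp(-\epsilon)$ needed for a multiplicative (rather than additive) guarantee.
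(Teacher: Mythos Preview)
The paper does not actually prove this lemma; it states it as a known fact and defers to Section~11.2 of~\cite{GVL12} for details. Your proposal is a correct reconstruction of that standard argument, and it supplies the one piece the textbook treatment does not make explicit for this context: because you run the non-adaptive recurrence (coefficients depending only on $\kappa$ and $t$), the operator is literally $p_t(\PP)$, a polynomial in $\PP$, so the scalar residual bound $|q_t(\lambda)| \le \epsilon/2$ on $[1/\kappa,1]$ lifts to the Loewner bound $\ZZ_{\ChebSolve} \approx_\epsilon \PP^{-1}$ by simultaneous diagonalization. That observation is exactly what the paper needs (it uses the matrix-level approximation downstream in Lemma~\ref{lem:UsingApproxSchur}), and your Step~3 carries it out cleanly. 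There is nothing to compare against and no gap to report.
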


Therefore the main difficulty becomes finding the matrix
\begin{align*}
\WW_{T}^{1 / 2} \BB_{T}
  \LL^\dag
\BB_{T}^\top\WW_{T}^{1 / 2}.
\end{align*}
Note that while $\BB_{T}$ has up to $n$ columns, most of these
column are $0$s.
So it means that we can only consider the entries
corresponding to $V(T)$, the set of vertices
incident to at least one edge in $T$, 
using the following fact about Schur complements.
\begin{fact}
\label{fact:SchurSubset}
Let $\LL$ be a Laplacian matrix, and $C$ be a subset of vertices. Then,
%For any matrix $\MM$ and any subset of entries $T$,
we have
\begin{align*}
\left( \LL^{\dag} \right)_{CC}
=  \SC\left(\LL, C\right)^{\dag}.
\end{align*}
\end{fact}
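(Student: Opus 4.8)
The plan is to reduce the claim to the elementary characterization of the Moore--Penrose pseudoinverse of a symmetric positive semidefinite matrix --- it is the inverse of the restriction of the matrix to the orthogonal complement of its kernel --- and then to produce that restriction by one step of block Gaussian elimination. Write $F = V\setminus C$ and block-decompose $\LL = \left(\begin{smallmatrix}\LL_{FF} & \LL_{FC}\\ \LL_{CF} & \LL_{CC}\end{smallmatrix}\right)$. Since $G$ is connected and $F \subsetneq V$, every component of $G$ meeting $F$ also meets $C$, so the principal submatrix $\LL_{FF}$ is positive definite, hence invertible, and $\SC(\LL,C) = \LL_{CC} - \LL_{CF}\LL_{FF}^{-1}\LL_{FC}$ is well defined; by the fact recorded in Section~\ref{sec:schur} it is itself a connected-graph Laplacian on $C$, with one-dimensional kernel spanned by $\vecone_C$. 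The first step is to note that for any $\bb \in \rea^V$ supported on $C$ with $\bb \perp \vecone_V$ (equivalently, viewed inside $\rea^C$, $\bb \perp \vecone_C$), the vector $\xx = \LL^\dag \bb$ is the unique solution in $\vecone_V^\perp$ of $\LL \xx = \bb$.

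Next I would carry out the elimination: the $F$-rows of $\LL\xx = \bb$ give $\xx_F = -\LL_{FF}^{-1}\LL_{FC}\xx_C$ (the $F$-entries of $\bb$ vanish), and substituting this into the $C$-rows yields $\SC(\LL,C)\,\xx_C = \bb$. Since $\bb \perp \vecone_C = \ker \SC(\LL,C)$, this forces $\xx_C = \SC(\LL,C)^\dag \bb + t\,\vecone_C$ for some scalar $t$. Because $\bb$ is supported on $C$ we have $\bb^\top(\LL^\dag)_{CC}\,\bb = \bb^\top \LL^\dag \bb = \bb^\top \xx = \bb^\top \xx_C$, and contracting the previous relation with $\bb$, which kills the $t\,\vecone_C$ term, gives $\bb^\top(\LL^\dag)_{CC}\,\bb = \bb^\top \SC(\LL,C)^\dag\,\bb$. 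Polarizing over the subspace $\vecone_C^\perp$ and using that $\SC(\LL,C)^\dag$ is symmetric with $\SC(\LL,C)^\dag \vecone_C = 0$, this upgrades to $\SC(\LL,C)^\dag = \left(\II - \tfrac{1}{\sizeof{C}}\vecone_C\vecone_C^\top\right)(\LL^\dag)_{CC}\left(\II - \tfrac{1}{\sizeof{C}}\vecone_C\vecone_C^\top\right)$, i.e.\ the two matrices agree exactly as operators on $\vecone_C^\perp$, which is the content of the Fact in the form it is invoked.

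The only real subtlety --- and the main obstacle to a bare matrix identity --- is the null-space bookkeeping: $\LL$ and $\SC(\LL,C)$ carry different (though compatible) kernels, $\vecone_V$ and $\vecone_C$, so one cannot literally block-invert $\LL^\dag$, and the argument above is arranged precisely so that the sole discrepancy is a multiple of $\vecone_C\vecone_C^\top$. This is harmless everywhere the Fact is used: $\SC(\LL,C)^\dag$ always appears contracted against edge-incidence data --- $\BB_T^\top$ in Lemma~\ref{lem:Woodbury}, or difference vectors $\bb_{u,v}$ in effective-resistance computations --- whose relevant columns and rows are all of the form $\bb_e = \ee_u - \ee_v$ and hence lie in $\vecone_C^\perp$, so the $\vecone_C\vecone_C^\top$ term contributes nothing and the equality may be used verbatim. (If one wants a genuine matrix identity, the clean true statement is the projected one displayed above, equivalently $\big((\LL^\dag)_{CC}\big)^{-1} = \SC(\LL,C) + c\,\vecone_C\vecone_C^\top$ for a suitable $c > 0$ when $C \subsetneq V$.)
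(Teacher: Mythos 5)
Your proposal is correct, and there is nothing in the paper to compare it against: Fact~\ref{fact:SchurSubset} is asserted without proof. More importantly, your null-space bookkeeping identifies a genuine inaccuracy in the statement itself. As a bare matrix identity, $\left(\LL^\dag\right)_{CC} = \SC(\LL,C)^\dag$ is false in general: $\SC(\LL,C)^\dag$ annihilates $\vecone_C$, whereas $\left(\LL^\dag\right)_{CC}\vecone_C \neq \veczero$ typically. For instance, on the unit-weight path $1\!-\!2\!-\!3$ with $C=\{1,3\}$ one computes $\left(\LL^\dag\right)_{CC} = \SC(\LL,C)^\dag + \tfrac{1}{18}\vecone_C\vecone_C^\top$. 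Your block-elimination argument proves exactly the correct version: the two matrices agree as bilinear forms on $\vecone_C^\perp$, equivalently
\begin{align*}
\kh{\II - \tfrac{1}{\sizeof{C}}\vecone_C\vecone_C^\top}\left(\LL^\dag\right)_{CC}\kh{\II - \tfrac{1}{\sizeof{C}}\vecone_C\vecone_C^\top} = \SC\left(\LL,C\right)^\dag,
\end{align*}
and you correctly observe that this suffices for every invocation in the paper: in the proof of Lemma~\ref{lem:UsingApproxSchur} the pseudoinverse is always sandwiched between $\WW_T^{1/2}\BB_{T,V(T)}$ and its transpose, whose rows are vectors $\bb_e$ orthogonal to $\vecone_{V(T)}$, so the rank-one discrepancy $c\,\vecone_C\vecone_C^\top$ contributes nothing to Equation~(\ref{eq:ReplaceWithApproxSchur}). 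The individual steps of your argument are all sound: $\LL_{FF}$ is positive definite since $G$ is connected and $F \subsetneq V$; eliminating the $F$-block of $\LL\xx = \bb$ with $\bb$ supported on $C$ and $\bb \perp \vecone$ yields $\SC(\LL,C)\xx_C = \bb_C$, determining $\xx_C$ up to a multiple of $\vecone_C$ that is killed upon contraction with $\bb$; and polarization upgrades the equality of quadratic forms on $\vecone_C^\perp$ to the projected matrix identity. In short, your proof both fills the missing argument and supplies the small correction the stated Fact needs.
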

However, we only have approximate Schur complements.
To bound this also, we once again invoke the bound about
preservations of approximations when subtracting matrices
from $\II$ from Lemma~\ref{lem:SubtractError}.
%\begin{lemma}
%\label{lem:SubtractError}
%If $\AA$ and $\BB$ are matrices such that
%$0 \preceq \AA \preceq \II$,
%and $\AA \approx_{\epsilon} \BB$ for some $0 < \epsilon \leq 1/2$,
%then for any $0 < \theta \leq 1/2$ such that $\epsilon / \theta \leq 1/10$,
%we have
%\[
%\II - \left( 1 - \theta\right) \AA
%\approx_{3\epsilon / \theta}
%\II - \left( 1 - \theta\right) \BB.
%\]
%\end{lemma}
%The proof is deferred to Appendix~\ref{sec:Errors}.

\begin{lemma}
\label{lem:UsingApproxSchur}
	There is an algorithm $\widetilde{x} = \QuadSolver(\BB_T,\WW_T, \bb, \theta, \eps, \SStil)$ which takes
	an edge-vertex incidence matrix $\BB_T$ corresponding to edges in $T\subset E$
	with edge weight matrix $\WW_T$ supported on vertex set $V(T)$,
	a vector $\bb \in \mathbb{R}^n$,
	scalars $0 < \theta \leq 1/2$ and $0 < \eps < 1/2$,
	and a Laplacian matrix $\SStil$ whose edges are supported on $V(T)$
	such that $\SStil \approx_{\eps \theta / 9} \SC(\LL, V(T))$,
	and returns a value $\widetilde{x}$ satisfying
\begin{align*}
\widetilde{x}
\approx_{\epsilon}
\bb^\top
\kh{\II - (1 - \theta)\WW^{1/2}_T\BB_T\LL^\dag\BB_T^\top\WW^{1/2}_T}^{-1}
\bb.
\end{align*}
%Let $m$ denote the number of edges in $\SStil$.
The algorithm runs in time
$O(\mathrm{nnz}(\SStil)\theta^{-0.5}\log^3 n \log(1/\eps) + \sizeof{T}\theta^{-2.5}\eps^{-2}\log^5 n \log(1/\eps) \operatorname{polyloglog}(n))$,
where $\mathrm{nnz}(\SStil)$ is the number of nonzero entries in $\SStil$.
\end{lemma}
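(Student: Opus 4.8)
The plan is to reduce the quadratic form $\bb^\top\kh{\II-(1-\theta)\WW_T^{1/2}\BB_T\LL^\dag\BB_T^\top\WW_T^{1/2}}^{-1}\bb$ to a single well-conditioned linear system, tracking multiplicative approximations through Fact~\ref{fact:Approximations} at each step, and then solving that system with $\ChebSolve$ from Lemma~\ref{lem:Chebyshev}. First I would reduce to the Schur complement: since every column of $\BB_T$ outside $V(T)$ is zero, $\BB_T\LL^\dag\BB_T^\top$ sees only the $V(T)\times V(T)$ block of $\LL^\dag$, which by Fact~\ref{fact:SchurSubset} equals $\SC(\LL,V(T))^\dag$, so the matrix appearing in the lemma is $\MM:=\WW_T^{1/2}\BB_T\SC(\LL,V(T))^\dag\BB_T^\top\WW_T^{1/2}$ (with $\BB_T$ read as a $|T|\times|V(T)|$ matrix), and $0\preceq\MM\preceq\II$ as noted in Section~\ref{sec:vertexbyjlschur}. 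The quantity to approximate is thus $\bb^\top\kh{\II-(1-\theta)\MM}^{-1}\bb$.

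Next I would make the matrix applicable. From the hypothesis $\SStil\approx_{\eps\theta/9}\SC(\LL,V(T))$ and the preservation of spectral approximation under pseudoinverse for connected-graph Laplacians, $\SStil^\dag\approx_{\eps\theta/9}\SC(\LL,V(T))^\dag$. Computing once an approximate Cholesky factorization of $\SStil$ yields a fixed symmetric PSD linear operator $\ZZ_\delta\approx_\delta\SStil^\dag$ with $\delta$ a sufficiently small polynomial in $\theta\eps/n$, so $\ZZ_\delta\approx_{\eps\theta/8}\SC(\LL,V(T))^\dag$ by transitivity, and conjugating by $\BB_T^\top\WW_T^{1/2}$ through Fact~\ref{fact:Approximations}(\ref{part:CompositionMatrix}) gives $\MMtil:=\WW_T^{1/2}\BB_T\ZZ_\delta\BB_T^\top\WW_T^{1/2}\approx_{\eps\theta/8}\MM$. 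Since $\eps\le 1/2$ makes $(\eps\theta/8)/\theta=\eps/8\le 1/10$ and $0\preceq\MM\preceq\II$, Lemma~\ref{lem:SubtractError} gives $\PP_\delta:=\II-(1-\theta)\MMtil\approx_{3\eps/8}\II-(1-\theta)\MM$, where both matrices are positive definite (in fact $\succeq\Omega(\theta)\II$, and $\PP_\delta\preceq\II$ since $\ZZ_\delta\succeq 0$); inverting and conjugating by $\bb$ then gives $\bb^\top\PP_\delta^{-1}\bb\approx_{3\eps/8}\bb^\top\kh{\II-(1-\theta)\MM}^{-1}\bb$.

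It remains to estimate $\bb^\top\PP_\delta^{-1}\bb$. Because $\tfrac{1}{\kappa}\II\preceq\PP_\delta\preceq\II$ with $\kappa=O(1/\theta)$, and one matrix-vector product with $\PP_\delta$ costs a single solve against $\SStil$ through its precomputed Cholesky factor (time $\tilde{O}(\mathrm{nnz}(\SStil))$) plus $O(|T|)$ work for $\WW_T^{1/2}\BB_T$, running $\ChebSolve(\PP_\delta,\kappa,5\eps/8,\bb)$ realizes a symmetric operator $\ZZ_{\ChebSolve}\approx_{5\eps/8}\PP_\delta^{-1}$; returning $\widetilde{x}:=\bb^\top\ChebSolve(\PP_\delta,\kappa,5\eps/8,\bb)$ then yields $\widetilde{x}\approx_{5\eps/8}\bb^\top\PP_\delta^{-1}\bb\approx_{3\eps/8}\bb^\top\kh{\II-(1-\theta)\MM}^{-1}\bb$, i.e.\ $\widetilde{x}\approx_{\eps}$ the target. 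For the running time, the one-time Cholesky setup is $\tilde{O}(\mathrm{nnz}(\SStil))$, and $\ChebSolve$ performs $O(\sqrt{\kappa}\log(1/\eps))=O(\theta^{-1/2}\log(1/\eps))$ iterations of cost $\tilde{O}(\mathrm{nnz}(\SStil))+O(|T|)$ each; tracking the polylog factors from the solves yields the first summand, and substituting the sparsity bound $\mathrm{nnz}(\SStil)=O(|V(T)|(\eps\theta)^{-2}\log n)=O(|T|\theta^{-2}\eps^{-2}\log n)$ from Lemma~\ref{lem:partialChol} yields the second.

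The main obstacle I expect is the error bookkeeping across the nested approximations rather than any single estimate. The accuracy of $\SStil$ must be supplied at $\eps\theta/9$ rather than $\eps$ precisely so that Lemma~\ref{lem:SubtractError} converts it back to an $O(\eps)$ approximation after the $\II-(1-\theta)(\cdot)$ operation and the inversion, and the same $\theta$-scaling keeps $\PP_\delta$ a constant factor above $\theta\II$ from below, which is exactly what makes Chebyshev converge in $O(\theta^{-1/2})$ iterations. One must also ensure that $\ChebSolve$ is run against a genuinely fixed linear operator --- this is why I would apply $\SStil^\dag$ through a precomputed Cholesky operator $\ZZ_\delta$ rather than a per-call randomized Laplacian solve --- so that its guarantee of realizing an operator $\ZZ_{\ChebSolve}$ is meaningful, leaving only the static perturbation $\MMtil\approx_{\eps\theta/8}\MM$ to be absorbed as done above.
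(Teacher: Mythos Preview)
Your approach is essentially the paper's: reduce to the Schur complement on $V(T)$ via Fact~\ref{fact:SchurSubset}, replace $\SC(\LL,V(T))^\dag$ by a fixed Cholesky-based operator for $\SStil^\dag$, push the resulting $O(\eps\theta)$ spectral error through Lemma~\ref{lem:SubtractError} to an $O(\eps)$ error on $\II-(1-\theta)(\cdot)$, and finish with $\ChebSolve$ using $\kappa=O(1/\theta)$. The only cosmetic difference is that the paper applies Lemma~\ref{lem:SubtractError} twice (once for $\SStil$ vs.\ $\SC$, once for the Cholesky factor vs.\ $\SStil$, each at level $\eps/3$), whereas you first combine both approximations by transitivity and apply the lemma once.

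Two small points to tighten. First, taking $\delta$ ``polynomial in $\theta\eps/n$'' is more than needed and, if read literally, would inject extra $n$-factors into the $\delta^{-2}$ terms of $\partialChol$; $\delta=\Theta(\eps\theta)$ already gives $\ZZ_\delta\approx_{\eps\theta/8}\SC(\LL,V(T))^\dag$ and matches the stated running time. Second, your derivation of the second summand is mis-sourced: the lemma keeps $\mathrm{nnz}(\SStil)$ as an independent parameter, so you cannot obtain the $|T|\theta^{-2.5}\eps^{-2}$ term by ``substituting'' a bound on $\mathrm{nnz}(\SStil)$. That term instead comes from the Cholesky factorization of $\SStil$ itself: by Lemma~\ref{lem:partialChol}, $\partialChol(\SStil,\{v\},\Theta(\eps\theta))$ costs $O(\mathrm{nnz}(\SStil)\log^3 n+|V(T)|\theta^{-2}\eps^{-2}\log^5 n\,\operatorname{polyloglog}(n))$ and the resulting factor has $O(\mathrm{nnz}(\SStil)+|V(T)|\theta^{-2}\eps^{-2}\log^3 n)$ nonzeros, so each Chebyshev iteration pays that much; multiplying by $O(\theta^{-1/2}\log(1/\eps))$ iterations and using $|V(T)|\le 2|T|$ gives both summands directly.
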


\begin{proof}
We will invoke preconditioned Chebyshev iteration as
stated in Lemma~\ref{lem:Chebyshev} to estimate the
quantity
\begin{align*}
\bb^\top
\kh{\II - (1 - \theta)
\WW^{1/2}_{T}\BB_{T,V(T)}
  \SStil^\dag
\BB_{T,V(T)}^\top\WW_{T}^{1/2}}^{-1}
\bb.
\end{align*}
Since $\BB_{T}$ is only non-zero on the entries
corresponding to $V(T)$,
Fact~\ref{fact:SchurSubset} gives
\begin{align*}
\WW^{1/2}_T\BB_T\LL^\dag\BB_T^\top\WW^{1/2}_T
& = \WW^{1/2}_T\BB_{T,V\left(T\right)}
  \SC\left( \LL, V\left(T\right) \right)^\dag
\BB_{T,V\left(T\right)}^\top\WW^{1/2}_T,
\end{align*}
and hence Fact~\ref{fact:Approximations} Part~\ref{part:CompositionMatrix}
gives
\begin{align}
\WW^{1/2}_T\BB_T\LL^\dag\BB_T^\top\WW^{1/2}_T
& \approx_{\epsilon \theta/9}
\WW^{1/2}_T\BB_{T, V\left( T \right)}\SStil^\dag\BB_{T, V\left( T \right)}^\top\WW^{1/2}_T.
\label{eq:ReplaceWithApproxSchur}
\end{align}
Also, since $T$ is a subset of edges,
\begin{align*}
\left( \WW_T^{1/2} \BB_T \right)^{\top}
\left( \WW_T^{1/2} \BB_T  \right)
\preceq \LL,
\end{align*}
which in turn implies
\begin{align*}
\WW_T^{1/2} \BB_T 
\LL^{\dag}
\BB_{T}^\top \WW_T^{1/2}
\preceq \II,
\end{align*}
and
\begin{align*}
\theta \II
\preceq \II - (1 - \theta)\WW^{1/2}_T\BB_{T,V(T)}
\SC\left( \LL, V\left(T\right) \right)^\dag
\BB_{T,V(T)}^\top\WW^{1/2}_T
\preceq \II.
\end{align*}

Combining this with the approximation factor above
from Equation~(\ref{eq:ReplaceWithApproxSchur}) and
Lemma~\ref{lem:SubtractError} then gives
\begin{align}\label{eq:quadapx1}
\II - 
(1 - \theta) \WW^{1/2}_T\BB_{T}\LL^\dag\BB_{T}^\top\WW^{1/2}_T
\approx_{\epsilon / 3}
\II - 
(1 - \theta) \WW^{1/2}_T\BB_{T,V(T)}\SStil^\dag\BB_{T,V(T)}^\top\WW^{1/2}_T.
\end{align}
To apply $\SStil^\dag$, we can invoke the algorithm
$\partialChol(\SStil,\setof{v},\eps\theta/9)$
in Lemma~\ref{lem:partialChol}
for an
arbitrary vertex $v$ to get an
$\eps\theta/9$-approximate
sparse \textit{complete} Cholesky factorization of $\SStil$ and then
apply its inverse quickly.
Suppose the Cholesky factorization returned is
$\matlowtil \calDtil \matlowtil^\top \approx_{\eps\theta/9} \SStil$,
then again by Lemma~\ref{lem:SubtractError}
we have
\begin{align}\label{eq:quadapx2}
\II - 
(1 - \theta) \WW^{1/2}_T\BB_{T,V(T)}\SStil^\dag\BB_{T,V(T)}^\top\WW^{1/2}_T
\approx_{\epsilon / 3}
\II - 
(1 - \theta) \WW^{1/2}_T\BB_{T,V(T)}\kh{\matlowtil \calDtil \matlowtil^\top}^{\dag}\BB_{T,V(T)}^\top\WW^{1/2}_T.
\end{align}
Combining Equation~(\ref{eq:quadapx1}) and~(\ref{eq:quadapx2}) leads to
\begin{align}\label{eq:quadapx3}
	\II - 
(1 - \theta) \WW^{1/2}_T\BB_{T}\LL^\dag\BB_{T}^\top\WW^{1/2}_T
	\approx_{2\eps/3}
	\II - 
(1 - \theta) \WW^{1/2}_T\BB_{T,V(T)}\kh{\matlowtil \calDtil \matlowtil^\top}^{\dag}\BB_{T,V(T)}^\top\WW^{1/2}_T,
\end{align}
which also means that all the eigenvalues of
$
\II - 
(1 - \theta) \WW^{1/2}_T\BB_{T,V(T)}\kh{\matlowtil \calDtil \matlowtil^\top}^{\dag}\BB_{T,V(T)}^\top\WW^{1/2}_T$
are between $\mathrm{exp}(-2\eps/3)\theta$ and 1.
Therefore by Lemma~\ref{lem:Chebyshev}, we can access
a linear operator $\ZZ_{\textsc{Solve}}$ such that
\begin{align}\label{eq:quadapx4}
\ZZ_{\ChebSolve}
& \approx_{\epsilon / 3}
\left( \II - 
(1 - \theta) \WW^{1/2}_T\BB_{T,V(T)}\kh{\matlowtil \calDtil \matlowtil^\top}^{\dag}\BB_{T,V(T)}^\top\WW^{1/2}_T \right)^{-1},
\end{align}
whose cost is $O(\theta^{-0.5} \log(1/\eps))$
matrix-vector multiplications involving \\
$\II - 
(1 - \theta) \WW^{1/2}_T\BB_{T,V(T)}\kh{\matlowtil \calDtil \matlowtil^\top}^{\dag}\BB_{T,V(T)}^\top\WW^{1/2}_T$.
Here $\II$, $\WW^{1/2}_T$ and $\BB_{T,V(T)}$ can all be
applied in $O(\sizeof{T})$ time.
By Lemma~\ref{lem:partialChol},
$\kh{\matlowtil \calDtil \matlowtil^\top}^{\dag}$
can be applied in $O(\mathrm{nnz}(\SStil) + \sizeof{T}\theta^{-2}\eps^{-2}\log^3 n)$ time,
and $\partialChol(\SStil,\setof{v},\eps\theta/9)$ runs
in $O(\mathrm{nnz}(\SStil)\log^3 n + \sizeof{T}\theta^{-2}\eps^{-2}\log^5 n \operatorname{polyloglog}(n))$ time.

Inverting both sides of Equation~(\ref{eq:quadapx3}) and then combining it with Equation~(\ref{eq:quadapx4}) gives
\begin{align*}
\ZZ_{\ChebSolve}
& \approx_{\epsilon}
\kh{\II - 
(1 - \theta) \WW^{1/2}_T\BB_T\LL^\dag\BB_T^\top\WW^{1/2}_T}^{-1},
\end{align*}
so we can set $\widetilde{x} = \bb^\top \ZZ_{\ChebSolve} \bb$ and return it
as our overall estimate.
\end{proof}

%%
%%\begin{lemma}
%%\label{lem:SchurOk}
%%	Given a edge set $T \subset E$ supported on vertex set $C$
%%	and a Laplacian matrix $\SStil$ whose edges are supported on $C$ such that
%%	$\SStil\approx_{\eps} \SC(\LL,C)$. %and a constant scalar $0 < \theta < 1$.
%%	Let $F = V\setminus C$.
%%	Let $\BB_T$ be the ${\sizeof{T} \times n}$ edge-vertex incidence matrix corresponding to edges in $T$,
%%	and $\WW_T$ be the $\sizeof{T}\times \sizeof{T}$ diagonal edge weight matrix corresponding to edges in $T$.
%%	For any vector $\zz = \begin{pmatrix} \zz_F \\ \zz_C \end{pmatrix} \in \mathbb{R}^n$,
%%	the following statement holds:
%%	\begin{align*}
%%		\len{ \yy - \xx}_{\LL} \leq \eps
%%			\len{\xx}_{\LL},
%%	\end{align*}
%%	where $\xx = \kh{\WW^{1/2}_T \BB_T \LL^\dag \BB_T^\top \WW^{1/2}_T } \zz$ and
%%	%where $\len{\xx}_{\LL} = \sqrt{\xx^\top \LL \xx}$, and
%%	\begin{align*}
%%		\yy = \begin{pmatrix}
%%		0 \\
%%		\kh{\WW^{1/2}_T\BB_{TC} \SStil^\dag \BB_{TC}^\top \WW^{1/2}_T} \zz_C
%%		\end{pmatrix}.
%%	\end{align*}
%%\end{lemma}
%%
%%\begin{proof}
%%\end{proof}
%%

%%\begin{proof}
%%
%%Here we have $\kappa = \frac{1}{\theta}$, so is suffices
%%to invoke this scheme on a matrix $\MM_{T}$ such that
%%\[
%%\WW_{T}^{1 / 2} \BB_{T}
%%  \LL^\dag
%%\BB_{T}^\top\WW_{T}^{1 / 2}
%%\approx_{\epsilon \theta}
%%\MM_{T},
%%\]
%%with the cost being $O(\theta^{-1/2} \log(1/\eps))$
%%matrix-vector multiplications in $\MM_T$.
%%
%%	\todo{ idea is using iterative solver and Laplacian solver. }
%%\end{proof}
%%
%%

Thus, the problem becomes efficiently approximating
Schur complements onto subsets of edges.
We give an algorithm $\quadApx$ that first computes
approximate Schur complements onto neighbors of
each vertex and then uses the algorithm $\QuadSolver$
in Lemma~\ref{lem:UsingApproxSchur}
to compute
\[
		\zz^\top \BB_{\Ev{v}}^\top \WW_{\Ev{v}}^{1 / 2}
		\kh{ \II - (1 - \theta)\WW_{\Ev{v}}^{1 / 2} \BB_{\Ev{v}} \LL^\dag \BB_{\Ev{v}}^\top\WW_{\Ev{v}}^{1 / 2} }^{-1}
		\WW_{\Ev{v}}^{1 / 2} \BB_{\Ev{v}} \zz
\]
for some vector $\zz$.
The pseudocode for $\quadApx$ is given in Algorithm~\ref{alg:normApprox}.
Note that in the pseudocode we use $G[C]$ to denote
$G$'s induced graph on a subset of vertices $C$, $\zz_C$ to
denote a $\sizeof{C}$-dimensional vector obtained
from $\zz$ by taking entries corresponding to vertices
in $C$,
and $\degu{v}$ to denote the number of edges incident
with $v$.
The performance of $\quadApx$ is characterized in Lemma~\ref{lem:normApprox}.
\begin{lemma}
	\label{lem:normApprox}
	Given a connected undirected graph $G = (V,E)$ with $n$ vertices,
	$m$ edges,
	positive edge weights
	$w : E \to \rea_{+}$, and associated Laplacian $\LL$,
	a set of vertices $V^Q \subset V$ such that
	$V = \setof{N(v) \,|\, v \in V^Q} \cup V^Q$,
	a vector $\zz\in \mathbb{R}^n$,
  	and scalars $0 < \theta \leq 1/2$, $0<\eps\leq1/2$,
  	the algorithm $\quadApx(G, \LL, V^Q, \zz, \theta, \eps\theta/9, \eps)$ returns a set of pairs
  	$\hat{N}^\Delta = \{ (v,\hat{n}_v^\Delta) \mid v \in V^Q \}$. With 	
  	high probability, the following statement holds:	For $\forall v \in V^Q$,
	\begin{align}\label{normApproxErrorBound}
		n_v^\Delta \approx_{\eps} \hat{n}_v^\Delta,
	\end{align}
	where
	\[
		n_v^\Delta = \zz^\top \BB_{\Ev{v}}^\top \WW_{\Ev{v}}^{1 / 2}
		\kh{ \II - (1 - \theta)\WW_{\Ev{v}}^{1 / 2} \BB_{\Ev{v}} \LL^\dag \BB_{\Ev{v}}^\top\WW_{\Ev{v}}^{1 / 2} }^{-1}
		\WW_{\Ev{v}}^{1 / 2} \BB_{\Ev{v}} \zz,
	\]
	and $\Ev{v} = \setof{(u,v)\,|\, u\sim v}$ is the set of edges incident with $v$.
	The total running time of this algorithm is bounded by
	%$O(\kh{m \theta^{-4.5} \eps^{-4} \log^6 n \log (\frac{1}{\epsilon}) +m\theta^{-2}\eps^{-2}\log^7 n} \operatorname{polyloglog}(n)) $.
	%$O(m \eps^{-2} \log^2 m \log^6 n \operatorname{polyloglog}(n))$.% =
	%\tilde{O}(m \eps^{-2})$.
	$O(m(\theta^{-2}\eps^{-2}\log^8 n + \theta^{-2.5}\eps^{-2}\log^5 n \log(1/\eps)) \operatorname{polyloglog}(n)) $.
\end{lemma}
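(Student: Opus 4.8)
The plan is to realize $\quadApx$ as a divide-and-conquer recursion over the query set $V^Q$, structured exactly like $\quadEst$ in Section~\ref{sec:approx}: the state carried down the recursion is a sparse approximate Schur complement onto the vertices that still matter, and the work at a leaf is a single call to $\QuadSolver$ from Lemma~\ref{lem:UsingApproxSchur}. The algebraic fact that makes this legal is that, for each $v\in V^Q$, the value $n_v^\Delta$ depends on $\LL$ only through $\BB_{\Ev{v}}\LL^\dag\BB_{\Ev{v}}^\top$, which by Fact~\ref{fact:SchurSubset} equals $\BB_{\Ev{v},V(\Ev{v})}\,\SC(\LL,V(\Ev{v}))^\dag\,\BB_{\Ev{v},V(\Ev{v})}^\top$, and depends on $\zz$ only through its restriction to $V(\Ev{v})$; hence $n_v^\Delta$ is unchanged if $\LL$ is replaced by $\SC(\LL,C)$ for any $C\supseteq V(\Ev{v})$, since Schur complements compose: $\SC(\SC(\LL,C),V(\Ev{v}))=\SC(\LL,V(\Ev{v}))$.

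Concretely, if $\sizeof{V^Q}>1$ I would partition $V^Q$ into halves $V^{Q,(1)},V^{Q,(2)}$, set $C_i=\bigcup_{v\in V^{Q,(i)}}V(\Ev{v})=V^{Q,(i)}\cup N(V^{Q,(i)})$, call $\partialChol(\LL,C_i,\eps')$ to get $\SStil^{(i)}\approx_{\eps'}\SC(\LL,C_i)$ with $\eps'=\Theta(\eps\theta/\log n)$, and recurse on $(\SStil^{(i)},V^{Q,(i)},\zz_{C_i})$; at a leaf $V^Q=\{v\}$ I would compute $\SStil_v$ by one more $\partialChol(\LL,V(\Ev{v}),\eps')$ and return $\QuadSolver(\BB_{\Ev{v}},\WW_{\Ev{v}},\WW_{\Ev{v}}^{1/2}\BB_{\Ev{v}}\zz,\theta,\eps,\SStil_v)$. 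Correctness then follows by induction on the depth: the edges incident to a query vertex in $V^{Q,(i)}$ have both endpoints in $C_i$, so they are left intact by the eliminations forming $\SC(\LL,C_i)$ and their images are unaffected; and since spectral approximation is preserved by Schur complements (from the energy characterization $\xx^\top\SC(\LL,C)\xx=\min_{\yy:\yy_C=\xx_C}\yy^\top\LL\yy$, monotonicity gives $\LL_1\approx_\eps\LL_2\Rightarrow\SC(\LL_1,C)\approx_\eps\SC(\LL_2,C)$) and composes additively, choosing the per-level $\partialChol$ error to be $\eps\theta/(9(d+1))$, with $d=O(\log\sizeof{V^Q})$ the recursion depth, guarantees the $\SStil_v$ reaching $\QuadSolver$ satisfies $\SStil_v\approx_{\eps\theta/9}\SC(\LL^G,V(\Ev{v}))$; Lemma~\ref{lem:UsingApproxSchur} then yields $n_v^\Delta\approx_\eps\hat n_v^\Delta$, and a union bound over the $O(\log n)$ levels and all the $\partialChol$/$\QuadSolver$ calls upgrades this to the high-probability guarantee.

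For the running time I would set $\eps'=\eps\theta/(9(d+1))=\Theta(\eps\theta/\log n)$, so $(\eps')^{-2}=O(\eps^{-2}\theta^{-2}\log^2 n)$. By Lemma~\ref{lem:partialChol} a call $\partialChol(\LL_{\mathrm{cur}},C,\eps')$ costs $O((m_{\mathrm{cur}}\log^3 n+\sizeof{C}(\eps')^{-2}\log^5 n)\operatorname{polyloglog}(n))$ and outputs $O(\sizeof{C}(\eps')^{-2}\log n)$ edges, so I would write the recursion $T(q,m_{\mathrm{cur}})\le 2T(q/2,m_{\mathrm{cur}}')+O((m_{\mathrm{cur}}\log^3 n+n_{\mathrm{cur}}(\eps')^{-2}\log^5 n)\operatorname{polyloglog}(n))$, argue $m_{\mathrm{cur}}=O(m+n(\eps')^{-2}\log^2 n)$ throughout and that the depth is $O(\log n)$, and sum over the recursion tree to get the $\theta^{-2}\eps^{-2}\log^8 n$ term. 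Separately, the leaf for $v$ costs, by Lemma~\ref{lem:UsingApproxSchur}, $O(\mathrm{nnz}(\SStil_v)\theta^{-0.5}\log^3 n\log(1/\eps)+\degu{v}\theta^{-2.5}\eps^{-2}\log^5 n\log(1/\eps)\operatorname{polyloglog}(n))$; using $\sum_v\degu{v}=2m$ and $\mathrm{nnz}(\SStil_v)=O(\degu{v}(\eps')^{-2}\log n)$, so $\sum_v\mathrm{nnz}(\SStil_v)=O(m\,\eps^{-2}\theta^{-2}\log^3 n)$, this contributes the $\theta^{-2.5}\eps^{-2}\log^5 n\log(1/\eps)$ term, giving the claimed bound.

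I expect the main obstacle to be the running-time accounting rather than correctness. Unlike $\quadEst$, this recursion splits \emph{vertices}, and the sets $C_i=V^{Q,(i)}\cup N(V^{Q,(i)})$ need not shrink geometrically — a single high-degree query vertex can force $C_i$ to be almost the entire current vertex set. The crux is to show that in such non-shrinking steps $\partialChol$ eliminates essentially no vertices and therefore costs only $\tilde O(m_{\mathrm{cur}})$, that the edge count stays $\tilde O(m+n(\eps')^{-2})$ across all $O(\log n)$ levels (each step halving $\sizeof{V^Q}$, so polylogarithmic depth keeps accumulated fill-in under control), and that the covering hypothesis $V=\setof{N(v)\mid v\in V^Q}\cup V^Q$ is maintained for the recursive calls (up to harmlessly retaining at most $n_{\mathrm{cur}}$ extra edges). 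Only after this can one cleanly extract the stated powers of $\log n$, $\theta^{-1}$, and $\eps^{-1}$ — in particular seeing where the extra $\log^2 n$ from $\eps'=\Theta(\eps\theta/\log n)$ and the $\theta^{-0.5}$ from the Chebyshev iteration inside $\QuadSolver$ land.
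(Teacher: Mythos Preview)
Your correctness argument is sound: the algebraic reduction via Fact~\ref{fact:SchurSubset}, the composition of Schur complements, the preservation of spectral approximation under Schur complementation, and the additive error budgeting across levels all go through, and Lemma~\ref{lem:UsingApproxSchur} finishes the leaf.

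The gap is exactly where you anticipate it: the running-time accounting under cardinality-based halving. Your proposed recurrence $T(q,m_{\mathrm{cur}})\le 2T(q/2,m'_{\mathrm{cur}})+\ldots$ together with a \emph{per-node} bound $m_{\mathrm{cur}}=\tilde O(m+n(\eps')^{-2})$ does not close: the recursion tree has $O(n)$ nodes, and multiplying a per-node bound by that count overshoots by a factor of $n$. Your stated resolution --- ``non-shrinking steps cost only $\tilde O(m_{\mathrm{cur}})$'' --- is not right either, since $\partialChol$ always pays $\tilde O(n_{\mathrm{cur}}(\eps')^{-2})$, and $n_{\mathrm{cur}}$ can stay $\Theta(n)$ for $\Theta(\log n)$ consecutive levels along a branch containing a high-degree vertex.

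The paper avoids this by recursing on \emph{volume} $\mathrm{vol}(V^Q)=\sum_{v\in V^Q}\degu{v}$ rather than on $\sizeof{V^Q}$. It first peels off the at most four vertices with $\degu{v}\ge \mathrm{vol}/4$ and sends each straight to a leaf; on the remainder it splits $V^Q$ into two parts with volume in $[\mathrm{vol}/4,3\mathrm{vol}/4]$. Since $\sizeof{C}\le \sizeof{V^Q}+\mathrm{vol}(V^Q)\le 2\,\mathrm{vol}(V^Q)$, both $n_{\mathrm{cur}}$ and $m_{\mathrm{cur}}$ now shrink geometrically along every branch, yielding the clean recurrence $T(m)=T(3m/4)+T(m/4)+\tilde O(m(\eps')^{-2})$ with depth $\log_{4/3}(2m)$.

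For what it is worth, your cardinality recursion can be salvaged, but not by the argument you sketch. The right observation is a \emph{per-level sum}: at any level the $V^{Q,(i)}$'s partition the original $V^Q$, so $\sum_i\sizeof{C_i}\le \sum_i\bigl(\sizeof{V^{Q,(i)}}+\mathrm{vol}_G(V^{Q,(i)})\bigr)=\sizeof{V^Q}+\mathrm{vol}_G(V^Q)=O(m)$; hence the total cost per level is $\tilde O(m(\eps')^{-2})$, and summing over $O(\log n)$ levels recovers the paper's bound. If you want to keep your scheme, this is the missing step.
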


\begin{algorithm}
	\caption{$\quadApx(G,\SS, V^Q, \zz, \theta,\eps_1,\eps_2)$}
	\label{alg:normApprox}
	\Input{
		$G = (V,E)$: A graph. \\
		$\SS$: A graph Laplacian whose edges are supported on $V$. \\
		$V^Q \subset V$: a set of vertices \\
		$\zz \in \rea^{\sizeof{V}}$: a vector. \\
		$\theta$: a scalar between 0 and 1/2. \\
		$\eps_1$: the error parameter for Schur complement. \\
		$\eps_2$: the error parameter for $\QuadSolver$.
	}
	\Output{
		$\hat{N}^\Delta
		= \{ (v,\hat{n}_v^\Delta) \mid v \in V^Q \}$.
	}
	%Let $m$ denote the number of edges in $\LL$. \;
	\If {$\sizeof{V^Q} = 1$}{
	 	Let $n$ and $m$ be the number of vertices and edges in $G$, respectively. \label{line:quadedgecase}\;
		Let $\BB$ be the $m\times n$ edge-vertex incidence matrix of $G$,
		and $\WW$ be the $m\times m$ diagonal edge weight matrix of $G$.\;
		Let $\hat{n}_v^\Delta = \QuadSolver(\BB,\WW,\WW^{1/2} \BB \zz,\theta, \eps_2, \SS)$ and return
		$\setof{(v, \hat{n}_v^\Delta)}$
			for the only vertex $v \in V^Q$. \label{line:quadedgecaseend}  \;
	}
	Let $\mathrm{vol} = \sum\nolimits_{v\in V^Q} \degu{v}$, and
		set $\eps_{\mathrm{schur}} = \eps_1 / \log_{\frac{4}{3}} \mathrm{vol}$. \label{line:epsschur}\;
	Let $V_4$ be vertices in $V_Q$ with $\degu{v} \geq \mathrm{vol} / 4$. \label{line:vfour} \;
	\If {$V_4 \neq \emptyset$}{
		\label{line:vfourstart}
		\For{each $v\in V_4$}{
			Let $C$ denote $v$ and its neighbors.\;
			$(\matlowtil, \calDtil, \SStil) \gets \partialChol(\SS, C, \eps_{\mathrm{schur}})  $\label{line:schur1} \;
			$\hat{N}^{\Delta (v)} \gets \quadApx(G[C],\SStil, \setof{v}, \zz_C, \theta, \eps_1 - \eps_{\mathrm{schur}}, \eps_2)$ \label{line:epsschur1} \;
		}
		Let $C$ denote vertices in $V^Q \setminus V_4$ and their neighbors. \label{line:twelve} \;
		$(\matlowtil, \calDtil, \SStil) \gets \partialChol(\SS, C, \eps_{\mathrm{schur}}) $ \label{line:schur2}\;
		\KwRet $\quadApx(G[C],\SStil', V^Q \setminus V_4,
		\zz_{C}, \theta, \eps_1 - \eps_{\mathrm{schur}},\eps_2) \cup \kh{ \bigcup\nolimits_{v\in V_4} \hat{N}^{\Delta (v)} }$ \label{line:epsschur2}
	}
	Divide $V^Q$ into two parts $V^{(1)}$ and $V^{(2)}$ such that both $\sum\nolimits_{v\in V^{(1)}} \degu{v}$ and
	$\sum\nolimits_{v\in V^{(2)}} \degu{v}$ are in the range
		$\left[\frac{1}{4}\mathrm{vol},\frac{3}{4}\mathrm{vol}\right]$. \label{line:fifteen} \;
	\For{$i = 1$ to $2$}{
		Let $C$ denote vertices in $V^{(i)}$ and their neighbors. \;
		$(\matlowtil, \calDtil, \SStil) \gets \partialChol(\SS, C, \eps_{\mathrm{schur}}) $ \label{line:schur3} \;
		$\hat{N}^{\Delta (i)} \gets \quadApx(G[C],\SStil, V^{(i)}, \zz_{C}, \theta, \eps_1 - \eps_{\mathrm{schur}} ,\eps_2)$ \label{line:epsschur3}\;
	}
	\KwRet $\hat{N}^{\Delta (1)} \cup \hat{N}^{\Delta (2)}$ \label{line:return}.
\end{algorithm}

\begin{proof}[Proof of Lemma~\ref{lem:normApprox}] 

%We fist analyze the structure of our recursion.

Let $\mathrm{volume}(V^Q)$ denote the quantity $\mathrm{vol}$ on Line~\ref{line:epsschur}.
We first observe that every time we recursively call $\quadApx$,
one of the following two events occurs:
\begin{enumerate}
\item $\mathrm{volume}(V^Q)$ becomes no more than its $3/4$ (Lines~\ref{line:epsschur2} and~\ref{line:epsschur3}), or
\item $\sizeof{V^Q}$ becomes $1$ (Lines~\ref{line:epsschur1}).
\end{enumerate}
When $\sizeof{V^Q} = 1$, the algorithm
will go to Lines~\ref{line:quadedgecase}\,-\,\ref{line:quadedgecaseend},
and hence the recursion depth is only $1$.
Then,
as we set $V^Q = V$ in the earliest call to $\quadApx$,
we have that the total recursion depth is no more
than $\log_{\frac{4}{3}} \mathrm{volume}(V) = \log_{\frac{4}{3}} 2m$.

We then give guarantees for our approximations. Note that we set $\eps_{\mathrm{schur}} = \eps_1 / \log_{\frac{4}{3}} \mathrm{vol}$ (Line~\ref{line:epsschur}),
and when recursively calling $\quadApx$ we set the $\eps_1$ of the recursive call to
$\eps_1 - \eps_{\mathrm{schur}}$
 (Line~\ref{line:epsschur1},~\ref{line:epsschur2} and~\ref{line:epsschur3}).
Then, since we set $\eps_1 = \eps\theta/9$ in the earliest call to $\quadApx$, we have that $\eps_{\mathrm{schur}} \leq \frac{\eps\theta}{9} /\log_{\frac{4}{3}} 2m$
always holds. Coupled with the fact that the total recursion depth is no more than $\log_{\frac{4}{3}} 2m$,
on Line~\ref{line:quadedgecaseend} we have that
\begin{align*}
	\SS \approx_{\eps\theta/9} \SC(\LL^G, V(\Ev{v}))
\end{align*}
holds for the only vertex $v\in V^Q$, where
$\LL^G$ is the Laplacian matrix of the graph in
the earliest call to $\quadApx$ (i.e., the original graph). Then by
Lemma~\ref{lem:UsingApproxSchur}, $\hat{n}_v^\Delta$ on
Line~\ref{line:quadedgecaseend} satisfies
\begin{align*}
\hat{n}_v^\Delta
\approx_{\epsilon}
\zz^\top \BB_{\Ev{v}}^\top \WW_{\Ev{v}}^{1 / 2}
		\kh{ \II - (1 - \theta)\WW_{\Ev{v}}^{1 / 2} \BB_{\Ev{v}} \LL^\dag \BB_{\Ev{v}}^\top\WW_{\Ev{v}}^{1 / 2} }^{-1}
		\WW_{\Ev{v}}^{1 / 2} \BB_{\Ev{v}} \zz.
\end{align*}

We now analyze the running time. % $\quadEst$.

  	Let $T(m)$ denote the running time of
  	$\quadApx(G,\SS, V^Q, \zz, \theta,\eps_1,\eps_2)$,
  	where $m = \mathrm{volume}(V^Q)$.
  	Let $\ncur$ and $\mcur$ denote the number of vertices and the number of edges
	in $\LL$ in the current call, respectively.
  	We first assume $\QuadSolver$ to be an $O(1)$ operation,
  	and hence $T(m) = O(1)$ for $\sizeof{V^Q} = 1$.
  	For $\sizeof{V^Q} > 1$, We consider the set $V_4$ on Line~\ref{line:vfour}:
  	\begin{enumerate}
	\item If $V_4$ is not empty, the algorithm goes to Lines~\ref{line:vfourstart}\,-\,\ref{line:epsschur2}.
	Since there are at most $4$ vertices in $V_4$, and
	by our assumption the recursive calls to $\quadApx$ on
	Line~\ref{line:epsschur1} all run in $O(1)$ time,
	we have by Lemma~\ref{lem:partialChol} Lines~\ref{line:vfourstart}\,-\,\ref{line:schur2} runs in
	total $O((\mcur\log^3 n + \ncur\eps_{\mathrm{schur}}^{-2}\log^5 n)\operatorname{polyloglog}(n))$ time.
	Since $V_4$ is not empty, we have $\mathrm{volume}(V\setminus V_4) \leq \frac{3}{4}\mathrm{volume}(V)$. Hence,
	the running time of the recursive call to $\quadApx$ on Line~\ref{line:epsschur2} is at most $T(3m/4)$.
	\item If $V_4$ is empty, the algorithm goes to Lines~\ref{line:fifteen}\,-\,\ref{line:return}. By Lemma~\ref{lem:partialChol}, the calls to\\ $\partialChol$ on Line~\ref{line:schur3} run in total $O((\mcur\log^3 n + \ncur\eps_{\mathrm{schur}}^{-2}\log^5 n)\operatorname{polyloglog}(n))$ time. The running time of the recursive calls to $\quadApx$ on Line~\ref{line:epsschur3} is $T(m_1) + T(m - m_1)$, where $m_1 = \mathrm{volume}(V^{(1)})$ is in the range $\left[m/4, 3m/4 \right]$.
\end{enumerate}

Since $\eps_{\mathrm{schur}} = O(\theta\eps/\log n)$, $\ncur = O(m)$,` and by Lemma~\ref{lem:partialChol}
$\mcur = O(\ncur \eps_{\mathrm{schur}}^{-2} \log n)
= O(m \theta^{-2} \eps^{-2} \log^3 n)$, we have
in the worst case
\begin{align*}
	T(m) = T(3m/4) + T(m/4) + O(m\theta^{-2} \eps^{-2}\log^7 n \operatorname{polyloglog}(n)),
\end{align*}
which gives $T(m) = O(m\theta^{-2} \eps^{-2}\log^8 n \operatorname{polyloglog}(n))$.

Note that we get this running time under the assumption that
$\QuadSolver$ is an $O(1)$ operation. Thus, we also need to analyze the total running time of the calls to $\QuadSolver$ on Line~\ref{line:quadedgecaseend}.

By Lemma~\ref{lem:UsingApproxSchur},
the $\QuadSolver(\BB,\WW,\WW^{1/2} \BB \zz,\theta, \eps_2, \SS)$ on Line~\ref{line:quadedgecaseend}
runs in\\ $O(\mathrm{nnz}(\SS)\theta^{-0.5}\log^3 n \log(1/\eps) + \degu{v}\theta^{-2.5}\eps^{-2}\log^5 n \log(1/\eps) \operatorname{polyloglog}(n))$ time, where $v$ indicates
the only vertex in $V^Q$ and $\degu{v}$ is the number of edges incident to $v$.
By Lemma~\ref{lem:partialChol},
we have $\mathrm{nnz}(\SS) = O(\degu{v}\eps_{\mathrm{schur}}^{-2}\log n) = O(\degu{v}\theta^{-2}\eps^{-2}\log^3 n)$.
Then, summing this running time over all vertices gives
$O(m\theta^{-2.5}\eps^{-2}\log^5 n \log(1/\eps) \operatorname{polyloglog}(n))$, which plus $T(m)$
gives the overall running time of this algorithm.

	%where $m = \sizeof{\EQ}$.
	%We refer to the very first call to $\quadEst$ as root call,
	%and all subsequent calls to $\quadEst$ as descendant calls.

\iffalse
The guarantees of the output follows from
Lemma~\ref{lem:UsingApproxSchur} and the guarantees on
Schur complement approximations from the proof of
Theorem~\ref{lem:normEst}.

The overall running time is similar to the analysis
of the divide-and-conquer routine $\quadEst$,
with the only additional case being introduced by
the large degree vertices in $V_4$.
However, each subset of vertices can only contain
$O(1)$ such vertices, to the overhead incurred
by this is only a constant factor more than the
analysis from Theorem~\ref{lem:normEst}.
\fi

\end{proof}

%\thmvertexbyjlschur*

\subsection{Approximating $\mathcal{C}_\theta^\Delta(v)$}

We give the pseudocode of the algorithm $\VCComp$ which
approximates $\theta$-Kirchhoff vertex centrality $\mathcal{C}_\theta^\Delta(v)$ for all $v\in V$
in Algorithm~\ref{alg:VCComp}.
Note that in this algorithm we once again
invoke the Laplacian solver %from Lemma~\ref{lem:laplsolver} 
of~\cite{CKMPPRX14}.
The performance of $\VCComp$ is characterized in
Theorem~\ref{lem:vertexbyjlschur}.
Analyzing this algorithm gives the main result for estimating vertex
centralities.

%By combining the guarantees of Theorem~\ref{lem:normApprox}
%with Hutchinson's trace estimation from Lemma~\ref{lem:TraceEstimation}.
%\end{proof}

\begin{algorithm}[h]
	\caption{$\VCComp(G=(V,E), w, \theta,\epsilon)$}
	\label{alg:VCComp}
	\Input{
		$G = (V,E)$, $w$: A connected undirected graph with positive edges \\weights $w: E\to \rea_{+}$. \\
		$\theta$: A scalar between 0 and 1/2. \\
		$\epsilon$: Error of the centrality estimate per vertex. \\
	}
	\Output{
		$\hat{C}^\Delta = \left\{ (v, \hat{c}_v^\Delta) \mid v \in V \right\}$.
	}
	%$H \gets \expander(n, \epsilon/3)$ \;
	%Let $m' = O(n/\epsilon^4)$ be the number of edges in $H$, and $\BB_{m' \times n}$ be the signed edge-vertex incidence matrix of $H$. \;
	Let $\zz_1,\ldots,\zz_M$ be independent random $\pm 1$ vectors, where $M = \ceil{432\eps^{-2}\ln(2n)}$. \;
	%Let $\QQ_{n \times k}$ be a random $n\times k$ matrix with each entry chosen from
	%	$\setof{{\pm} 1/\sqrt{k}}$ independently, where $k = \ceil{(\eps^2/2 - \eps^3/6)^{-1}\ln n}$.
	%	Let $\qq_j$ denote the $j^\text{th}$ column of $\QQ$. \;
	%$\ZZ_{n \times k} \gets \BB^\top \QQ$ \;
	%$(\matlowtil, \calDtil, \SStil) \gets \partialChol(\LL^G, \emptyset, \epsilon/4)$ \;
	\For{$i = 1$ to $M$}{
		%Compute $\yy_i = \WW^{1/2}_G \BB_G \matlowtil^{-T} \calDtil^{-1} \matlowtil^{-1} \zz_i$. \;
		$\yy_i \gets \LaplSolver(\LL^G, \zz_i, \frac{1}{36}\theta\eps n^{-7} U^{-4})$ \;
		%Compute $\yy_i = \WW^{1/2}_G \BB_G \kh{\LL^G}^\dag \zz_i$ using Fast Laplacian solvers. \;
		$\kh{ \hat{N}^{\Delta (i)} = \{ (v,\hat{n}_v^{\Delta (i)}) \mid v \in V \} } \gets
			\quadApx(\LL^G, V, \yy_i, \theta,\theta\eps/27, \epsilon/3)$
	}
	For each $v\in V$ compute $\hat{c}_v^\Delta = (1 - \theta) \frac{n}{M} \sum\limits_{i=1}^M \hat{n}_v^{\Delta (i)}$ and
		return $\hat{C}^\Delta = \left\{ (v, \hat{c}_v^\Delta) \mid v \in V \right\}$.
\end{algorithm}

\begin{proof}[Proof of Theorem~\ref{lem:vertexbyjlschur}]
The running time is the total cost of $O(\eps^{-2}\log n)$
calls to $\LaplSolver$ each of which runs
in $O(m\log^{1.5} n\log(\frac{1}{\eps\theta})\operatorname{polyloglog}(n))$ time,
and $O(\eps^{-2}\log n)$ calls to
$\quadApx$ each of which runs in
$O(m(\theta^{-2}\eps^{-2}\log^8 n + \theta^{-2.5}\eps^{-2}\log^5 n \log(1/\eps)) \operatorname{polyloglog}(n))$
time.

In the rest of this proof, we will use the matrix $\CC_v$,
defined as
\begin{align*}
\CC_v \defeq \BB_{\Ev{v}}^\top \WW_{\Ev{v}}^{1 / 2}
		\kh{ \II - (1 - \theta)\WW_{\Ev{v}}^{1 / 2} \BB_{\Ev{v}} \LL^\dag \BB_{\Ev{v}}^\top\WW_{\Ev{v}}^{1 / 2} }^{-1}
		\WW_{\Ev{v}}^{1 / 2} \BB_{\Ev{v}},
\end{align*}
to simplify notation.

Since $M = \ceil{432\eps^{-2}\ln(2n)} \geq 48 \kh{\eps/3}^{-2} \ln(2n)$, by Lemma~\ref{lem:mcL}, we have
\begin{align}\label{eq:ver1}
	\frac{1}{M}\sum\limits_{i=1}^M
	\zz_i^\top \LL^\dag \CC_v \LL^\dag\zz_i
		\approx_{\eps/3}
	\trace{\LL^\dag \CC_v \LL^\dag}.
\end{align}
By Lemma~\ref{lem:Lower}, we have
\begin{align*}
	\trace{\LL^\dag \CC_v \LL^\dag}
		\geq \frac{2}{n^2U^2},
\end{align*}
and hence
\begin{align}\label{eq:quadlower}
	\frac{1}{M}\sum\limits_{i=1}^M
	\zz_i^\top \LL^\dag \CC_v \LL^\dag\zz_i
		\geq
		\exp(-\eps/3)\frac{2}{n^2U^2} \geq \frac{1}{n^2U^2},
\end{align}
where the second inequality follows by $0 < \eps \leq 1/2$.

Since we set $\delta = \frac{1}{36}\theta\eps n^{-7} U^{-4}$ when invoking
	$\LaplSolver$, by Lemma~\ref{lem:laplsolver},
	\begin{align*}
		\len{\yy_i - \LL^\dag \zz_i}_{\LL} \leq
		\frac{1}{36}\theta\eps n^{-7} U^{-4}
		\len{\LL^\dag \zz_i}_{\LL},
	\end{align*}
	holds for each $i$.
	Then, by Lemma~\ref{lem:SolverError},
	we have that
	\begin{align*}
		\left|
		\yy_i^\top \CC_v \yy_i -
		\zz_i^\top \LL^\dag \CC_v \LL^\dag \zz_i
		\right|
		\leq \frac{1}{6}\eps n^{-2} U^{-2}
	\end{align*}
	holds for each $i$. We then have
	\begin{align*}
		&\left|
		\frac{1}{M} \sum\limits_{i=1}^M \yy_i^\top \CC_v \yy_i -
		\frac{1}{M} \sum\limits_{i=1}^M \zz_i^\top \LL^\dag \CC_v \LL^\dag \zz_i \right| \\
		\leq &
		\frac{1}{M} \sum\limits_{i=1}^M
		\left| \yy_i^\top \CC_v \yy_i
		- \zz_i^\top \LL^\dag \CC_v \LL^\dag \zz_i \right| \\
		\leq & \frac{1}{6}\eps n^{-2} U^{-2} \\
		\leq  &\frac{1}{6}\eps \kh{
		\frac{1}{M} \sum\limits_{i=1}^M \zz_i^\top \LL^\dag \CC_v \LL^\dag \zz_i },
	\end{align*}
	where the last inequality follows by~(\ref{eq:quadlower}).
	Thus,
	\begin{align*}
		(1 - \eps/6)
		\frac{1}{M} \sum\limits_{i=1}^M \zz_i^\top \LL^\dag \CC_v \LL^\dag \zz_i
		\leq
		\frac{1}{M}\sum\limits_{i=1}^M \yy_i^\top \CC_v \yy_i
		\leq
		(1 + \eps/6)
		\frac{1}{M} \sum\limits_{i=1}^M \zz_i^\top \LL^\dag \CC_v \LL^\dag \zz_i,
	\end{align*}
	which implies
	\begin{align}\label{eq:ver2}
		\frac{1}{M} \sum\limits_{i=1}^M \zz_i^\top \LL^\dag \CC_v \LL^\dag \zz_i \approx_{\eps/3}
		\frac{1}{M}\sum\limits_{i=1}^M \yy_i^\top \CC_v \yy_i.
	\end{align}
	
	By Lemma~\ref{lem:normApprox}, we have
	\begin{align}\label{eq:ver3}
		\hat{n}_v^{\Delta (i)} \approx_{\eps/3}
		\yy_i^\top \CC_v \yy_i.
	\end{align}
	Combining Equation~(\ref{eq:ver1}),~(\ref{eq:ver2}) and~(\ref{eq:ver3}), we have
	\begin{align*}
		\frac{1}{M}\sum\limits_{i=1}^M \hat{n}_v^{\Delta (i)}
		\approx_{\eps}
		\trace{\LL^\dag \CC_v \LL^\dag},
	\end{align*}
	which coupled with the fact that
	\begin{align*}
		\mathcal{C}_{\theta}^{\Delta}(v)
		&= \Kf{G\bsk \Ev{v}} - \Kf{G} \qquad \text{by definition} \\
		&= n\kh{\trace{\kh{\LL\bsk \Ev{v}}^\dag} - \trace{\LL^\dag}} \qquad \text{by Fact~\ref{fact:kirchtr}} \\
		&= n (1-\theta) \trace{\LL^\dag \CC_v \LL^\dag}
		\qquad \text{by Equation~(\ref{eq:woodbury})}
	\end{align*}
	gives the guarantee of our approximation.
\end{proof}

\section{Conclusion and Future Work}\label{sec:conclude}

The Kirchhoff index arises in many applications
such as noisy consensus problems~\cite{PaBa14} and
social recommender systems~\cite{WoLiCh16}.
It is a global index, and any changes of network structure,
e.g.  weight of edges,  can be reflected in this popular index.
In this paper, we proposed to use  Kirchhoff index
 as a global metric of the importance of edges in
 weighted undirected networks.
For any network, when the weight of any edge $e$ is changed from $w(e)$
from $\theta w(e)$, the  Kirchhoff index of the resulting graph
will strictly increase, with the increase
deciphering the importance of edge $e$.
We used the   Kirchhoff index of the new graph,
or its increment with respect to the original graph,
as the  $\theta$-Kirchhoff edge centrality. 
We demonstrated experimentally that this new global measure of
 edge centrality has a more discriminating power
than edge betweenness, spanning edge centrality,
and current-flow centrality.

However, the time cost of exactly computing the
$\theta$-Kirchhoff edge centrality is prohibitive.
To overcome this weakness, we introduced two approaches that estimate
the $\theta$-Kirchhoff edge centrality for all edges in nearly linear time.
Our proposed centrality metrics are the first global measure of centrality
that can be estimated in nearly linear time.
Our algorithms combine techniques from several recent works
on graph algorithms~\cite{LeSiWo15, DKP+17}.
We also extend these ideas to develop efficient algorithms for estimating
$\theta$-Kirchhoff vertex centrality, as well as estimating the
Kirchhoff edge centrality to a set of edges.
This raises the possibility of designing highly efficient
algorithms that can detect the set of $k$ most
influential edges, that is, the $k$ edges whose $\theta$-deletion leads to
the largest increase of the Kirchhoff index.
%For example, the previous algorithm maintains effective resistances under vertex contractions, while our algorithm for computing $\mathcal{C}_\theta(e)$ preserves norms against arbitrary vectors, which in some sense
%combines new techniques for computing effective resistances, such as  Johnson-Lindenstrauss lemma and Schur complements.

Despite the advantages of our algorithms, their theoretical performance still has much room for improvement
%The theoretical performance of our algorithms still has much room for improvement
, both in the overhead of logarithmic factors and
the dependency on $\theta$.
The latter is particularly interesting because our two algorithms
for estimating edge centrality can perform better under different
regimes of $\theta$.
On the other hand, the importance of centrality measures in graph
mining means it is just as, if not more, interesting to study the
practical behaviors of our algorithms.
Specifically, to see if they are reasonably fast and accurate on
massive networks with millions of vertices and edges.
Recent packages for solving large scale linear systems and related
tasks~\cite{LAMG,CMG,NetworkKIT,LaplacianJL} should greatly facilitate
such a study.
Moreover, the significantly higher deviations from our experiments
suggest the question of whether it is possible to theoretically model
the advantages/disadvantages of the many centrality measures.

Finally, it should be mentioned that as an application of the introduced edge centrality, we studied the vertex centrality based on the idea of the definition for $\mathcal{C}_\theta^\Delta(e)$. Actually, we can also define the centrality of a vertex $v$ as the Kirchhoff index of the graph $G\bsk \Ev{v}$, the algorithm for the $\eps$-approximation of which is similar to $\ECComp$. We thus omit the algorithmic details of this version of vertex centrality for the lack of space. Another reason for ignoring this algorithm is that our main focus is the edge centrality. %In addition, we can also define the centrality of a vertex $v$ as the sum of effective resistances between $v$ and all other vertices, which is beyond the scope this present paper but deserves to study in the future.

%\cite{JL84}
%\cite{IM98}
%\cite{Ach01}

\newpage

\newcommand{\etalchar}[1]{$^{#1}$}

%\bibliographystyle{alpha}
%\bibliography{ref}

\newpage

\begin{appendix}
\section{Proofs of Our Version of Sherman-morrision and Woodbury Formulas}\label{sec:morrison}

In this section, we give detailed proofs for the Sherman-Morrision and Woodbury
formulas we used, i.e., Equations~(\ref{eq:morrison}) and~(\ref{eq:woodbury}).

In the proofs, we will use the matrix $\PPi$
defined as
\begin{align*}
	\PPi \defeq \LL\LL^\dag = \II - \frac{1}{n}\vecone\vecone^{1},
\end{align*}
where $\vecone$ is the vector with all entries being $1$.

\begin{proof}[Proof of Equation~(\ref{eq:morrison})]
	First, we have
	\begin{align}\label{eq:idmorrison}
		\bb_e\kh{1 - (1-\theta)w(e)\bb_e^\top\LL^\dag\bb_e}
		%= \bb_e - (1 - \theta)\bb_e\bb_e^\top\LL^\dag\bb_e
		= \bb_e - (1 - \theta)w(e)\bb_e\bb_e^\top\LL^\dag\bb_e
		= \kh{\LL - (1 - \theta)w(e)\bb_e\bb_e^\top}\LL^\dag\bb_e,
	\end{align}
	where the second equality follows by $\bb_e = \PPi\bb_e = \LL\LL^\dag \bb_e$.
	
	Since $\theta < 1$,
	we have that
	$\LL - (1 - \theta)w(e)\bb_e\bb_e^\top$ is a Laplacian matrix and
	$1 - (1 - \theta)w(e)\bb_e^\top\LL^\dag\bb_e$
	is strictly positive.
	Thus, Equation~(\ref{eq:idmorrison})
	implies
	\begin{align*}
		\kh{\LL - (1 - \theta)w(e)\bb_e\bb_e^\top}^\dag \bb_e
		=
		\frac{\LL^\dag \bb_e}{1 - (1 - \theta)w(e)\bb_e^\top\LL^\dag\bb_e}.
	\end{align*}
	Then, we have
	\begin{align*}
		\LL^\dag &= \kh{\LL - (1 - \theta)w(e)\bb_e\bb_e^\top}^\dag \kh{\LL - (1 - \theta)w(e)\bb_e\bb_e^\top}\LL^\dag \\
		&= \kh{\LL - (1 - \theta)w(e)\bb_e\bb_e^\top}^\dag
			\kh{\PPi - (1 - \theta)w(e)\bb_e\bb_e^\top\LL^\dag} \\
		&= \kh{\LL - (1 - \theta)w(e)\bb_e\bb_e^\top}^\dag -
			(1 - \theta)w(e)\kh{\LL - (1 - \theta)w(e)\bb_e\bb_e^\top}^\dag
			\bb_e\bb_e^\top\LL^\dag \\
		&= \kh{\LL - (1 - \theta)w(e)\bb_e\bb_e^\top}^\dag -
			(1 - \theta)\frac{w(e)\LL^\dag\bb_e\bb_e^\top\LL^\dag}{1 - (1 - \theta)w(e)\bb_e^\top\LL^\dag\bb_e},
	\end{align*}
	which implies Equation~(\ref{eq:morrison}).
\end{proof}

\begin{proof}[Proof of Equation~(\ref{eq:woodbury})]
	First, we have
	\begin{align}\label{eq:idwoodbury}
		&\BB_{T}^\top \WW_{T}^{1 / 2}
\kh{ \II - (1 - \theta)\WW_{T}^{1 / 2} \BB_{T}
  \LL^\dag
\BB_{T}^\top\WW_{T}^{1 / 2} } \notag\\
		=& \BB_{T}^\top \WW_{T}^{1 / 2} -
			(1 - \theta)\BB_{T}^\top \WW_{T}\BB_{T}
  \LL^\dag
\BB_{T}^\top\WW_{T}^{1 / 2} \notag\\
		=& \kh{\LL - (1-\theta)\BB_T^\top\WW_T\BB_T}\LL^\dag
\BB_{T}^\top\WW_{T}^{1 / 2},
	\end{align}
	where the second equality follows by
	$
		\BB_{T}^\top
		= \PPi \BB_{T}^\top
		= \LL \LL^\dag \BB_T^\top
	$.
	
	Since $\theta < 1$, we have that
	$\II - (1 - \theta)\WW_{T}^{1 / 2} \BB_{T}
  \LL^\dag
\BB_{T}^\top\WW_{T}^{1 / 2}$
	is positive definite
	and\\
	$\LL - (1-\theta)\BB_T^\top\WW_T\BB_T$
	is a Laplacian matrix.
	Thus, Equation~(\ref{eq:idwoodbury}) implies
	\begin{align*}
		\kh{\LL - (1-\theta)\BB_T^\top\WW_T\BB_T}^\dag \BB_{T}^\top \WW_{T}^{1 / 2}
		=
		\LL^\dag \BB_{T}^\top\WW_{T}^{1 / 2}
		\kh{\II - (1 - \theta)\WW_{T}^{1 / 2} \BB_{T}
  \LL^\dag
\BB_{T}^\top\WW_{T}^{1 / 2}}^{-1}.
	\end{align*}
	Then, we can write $\LL^\dag$ as
	\begin{align*}
		&
			\kh{\LL - (1-\theta)\BB_T^\top\WW_T\BB_T}^\dag
			\kh{\LL - (1-\theta)\BB_T^\top\WW_T\BB_T}
			\LL^\dag \\
		= & \kh{\LL - (1-\theta)\BB_T^\top\WW_T\BB_T}^\dag
			\kh{\PPi - (1-\theta)\BB_T^\top\WW_T\BB_T\LL^\dag}\\
		= & \kh{\LL - (1-\theta)\BB_T^\top\WW_T\BB_T}^\dag
			-
			(1 - \theta)
			\kh{\LL - (1-\theta)\BB_T^\top\WW_T\BB_T}^\dag
			\BB_T^\top\WW_T\BB_T\LL^\dag \\
		= & \kh{\LL - (1-\theta)\BB_T^\top\WW_T\BB_T}^\dag
			-
			(1 - \theta)
			\LL^\dag \BB_{T}^\top\WW_{T}^{1 / 2}
		\kh{\II - (1 - \theta)\WW_{T}^{1 / 2} \BB_{T}
  \LL^\dag
\BB_{T}^\top\WW_{T}^{1 / 2}}^{-1}
			\WW_T^{1/2}\BB_T\LL^\dag,
	\end{align*}
	which implies Equation~(\ref{eq:woodbury}).
\end{proof}

\section{Approximations When Subtracted From Identity Matrix}\label{sec:Errors}

In this section, we bound the transfer of approximations between $\AA$ and $\BB$
to approximations between $\II - (1 - \theta) \AA$ and $\II - (1 - \theta) \BB$.

\begin{proof}[Proof of Lemma~\ref{lem:SubtractError}]
The given condition with the approximation
can be written as:
\[
\left(1 - 2\epsilon \right) \AA
\preceq \BB \preceq
\left(1 + 2\epsilon \right) \AA,
\]
which implies% subtracting from $\II$ gives
\[
\II - (1 - \theta) \left( 1 + 2\epsilon \right) \AA
\preceq \II - (1 - \theta) \BB \preceq
\II - (1 - \theta) \left(1 - 2\epsilon \right) \AA.
\]
Since $0 \preceq \AA \preceq \II$, we have the following
lower bound:
%Then we substitute in the condition of $0 \preceq \AA %\preceq \II$.
%For the lower bound we get:
\begin{align*}
\II - (1 - \theta) \left( 1 + 2\epsilon \right) \AA
& = \left( 1 - \frac{2 \epsilon}{\theta} \right)
\left( \II - \left( 1 - \theta \right) \AA \right)
+ \frac{2 \epsilon}{\theta} \II - \left( 1 - \theta \right)
\left( 2 \epsilon + \frac{2 \epsilon}{\theta} \right) \AA\\
& \succeq \left( 1 - \frac{2 \epsilon}{\theta} \right)
\left( \II - \left( 1 - \theta \right) \AA \right)
+ \left[\frac{2 \epsilon}{\theta} - \left( 1 - \theta \right)
\left( 2 \epsilon + \frac{2 \epsilon}{\theta} \right) \right] \AA.
\end{align*}
The coefficient on the trailing $\AA$ in turn simplifies to
$2 \epsilon - (1 - \theta) 2 \epsilon \geq 0$.

Similarly for the upper bound we get:
\begin{align*}
\II - (1 - \theta) \left( 1 - 2\epsilon \right) \AA
& = \left( 1 + \frac{2 \epsilon}{\theta} \right)
\left( \II - \left( 1 - \theta \right) \AA \right)
- \frac{2 \epsilon}{\theta} \II + \left( 1 - \theta \right)
\left( 2 \epsilon + \frac{2 \epsilon}{\theta} \right) \AA\\
& \preceq \left( 1 + \frac{2 \epsilon}{\theta} \right)
\left( \II - \left( 1 - \theta \right) \AA \right).
\end{align*}
Then the final bound involving $\exp(3 \epsilon / \theta)$
follows from the condition of $\epsilon / \theta$ being small.
\end{proof}

\section{Error Tracking for Laplacian Solvers}

In this section, we provide more details on error tracking
for Laplacian solvers in Section~\ref{sec:edgebyjl} and~\ref{sec:vertexbyjlschur} in a way similar to Section 4 of~\cite{SS11}.

We first give bounds on eigenvalues of $\LL$.
Let $\LL$ be the Laplacian matrix of a graph $G = (V,E)$ with $n$ vertices, $m$ edges and edge weights all in the range $[1,U]$.
Let $0 = \lambda_1 \leq \lambda _2 \leq \ldots \leq \lambda_n$
be the eigenvalues of $\LL$,
and $0 = \nu_1 \leq \nu_2 \leq \ldots \leq \nu_n$ be
the eigenvalues of the normalized Laplacian matrix,
$\NN \defeq \DD^{-1/2} \LL \DD^{-1/2}$,
of $G$.
It is easy to verify that
$\nu_i \leq \lambda_i \leq nU \nu_i$
holds for all $i$.
Let
\begin{align*}
	\phi_G = \min\limits_{S \subset V}
	\frac{\left| \partial(S) \right|}{\min\kh{d(S),d(V\setminus S)}}
\end{align*}
be the conductance of $G$,
where $\left| \partial(S) \right|$ denotes the total weights
of edges with one endpoint in $S$ and the other endpoint
in $V\setminus S$, and $d(S)$ denotes the total degree
of vertices in $S$.
Then, we can %use Cheeger's inequality to
bound $\lambda_2$ by
\begin{align}\label{lambda2}
	\lambda_2 & \geq \nu_2
	\geq \phi_{G}^2/2 \qquad \text{by Cheeger's inequality} \notag \\
	&\geq \kh{\frac{1}{n^2 U}}^2 / 2 \qquad \text{since all edge weights are in $[1,U]$} \notag \\
	&= \frac{1}{2n^4 U^2}.
\end{align}
We then bound $\lambda_n$ using the fact
that $\LL^G \preceq U\LL^{K_n}$, where
$K_n$ is the complete graph of $n$ vertices.
Thus,
\begin{align}\label{lambdan}
	\lambda_n^G \leq \lambda_n^{K_n}U = nU.
\end{align}
From~(\ref{lambda2}) and~(\ref{lambdan}) it is immediate
that
\[
	\frac{1}{2n^4 U^2} \PPi \preceq \LL \preceq nU \II
	\qquad \text{and} \qquad
	\frac{1}{n U} \PPi \preceq \LL^\dag \preceq 2n^4 U^2 \II
\]
hold,
where $\PPi \defeq \LL \LL^\dag = \II - \frac{1}{n}\vecone\vecone^\top$.

We will also need to use the inequality
\begin{align*}
	\left| x^2 - y^2 \right| \leq (2|y| + |x-y|)|x-y|
\end{align*}
for scalars $x,y$, which follows by
\begin{align*}
	\left| x^2 - y^2 \right|
	\leq (|x| + |y|)|x - y| \leq (|y| + |y+(x-y)|)|x-y|
	\leq (2|y|+|x-y|)|x-y|.
\end{align*}

\subsection{Error Tracking for the Laplacian Solver in Section~\ref{sec:edgebyjl}}
\label{sec:Errorsedge}

\begin{proof}[Proof of Lemma~\ref{lem:edgeSolveError}]
	The lhs of inequality~(\ref{eq:edgeSolveError}) can be
	written as
	\begin{align*}
		\left| \norm{\yy}_{\bb_e\bb_e^\top}^2 -
		\norm{\LL^\dag\zz}_{\bb_e\bb_e^\top}^2
		 \right|.
	\end{align*}
	We first bound the value
	$\left| 
		\norm{\yy}_{\bb_e\bb_e^\top} -
		\norm{\LL^\dag\zz}_{\bb_e\bb_e^\top} \right|$
	by
	\begin{align*}
		\left| 
		\norm{\yy}_{\bb_e\bb_e^\top} -
		\norm{\LL^\dag\zz}_{\bb_e\bb_e^\top} \right|
		& \leq
			\len{\yy - \LL^\dag \zz}_{\bb_e\bb_e^\top}
			\qquad \text{by the triangle inequality of norms}
			\\
		& \leq \len{\yy - \LL^\dag \zz}_{\LL}
		\qquad \text{since $\bb_e\bb_e^\top \preceq \LL$} \\
		& \leq \delta \len{\LL^\dag \zz}_{\LL}
			= \delta \sqrt{\zz^\top \LL^\dag \zz} \\
		& \leq \delta n^{0.5}
			\sqrt{\frac{\zz^\top \LL^\dag \zz}{\zz^\top \zz}}
			\qquad \text{since $\len{\zz}^2 \leq n$} \\
		& \leq \sqrt{2}\delta n^{2.5} U
			\qquad \text{since $\LL^\dag \leq 2n^4 U^2 \II$.}
	\end{align*}
	We then use the inequality
	$\left| x^2 - y^2 \right| \leq (2|y| + |x-y|)|x-y|$
	to bound $\left| \norm{\yy}_{\bb_e\bb_e^\top}^2 -
		\norm{\LL^\dag\zz}_{\bb_e\bb_e^\top}^2
		 \right|$:
	\begin{align*}
		&\left| \norm{\yy}_{\bb_e\bb_e^\top}^2 -
		\norm{\LL^\dag\zz}_{\bb_e\bb_e^\top}^2
		 \right| \\
		 \leq &
		 \kh{ 2\norm{\LL^\dag \zz}_{\bb_e\bb_e^\top} +
		 \left|
		\norm{\yy}_{\bb_e\bb_e^\top} -
		\norm{\LL^\dag\zz}_{\bb_e\bb_e^\top} \right| }
		\left|
		\norm{\yy}_{\bb_e\bb_e^\top} -
		\norm{\LL^\dag\zz}_{\bb_e\bb_e^\top} \right| \\
		\leq &
		\kh{ 2\norm{\LL^\dag \zz}_{\LL} +
		\sqrt{2}\delta n^{2.5} U
		 }
		\sqrt{2}\delta n^{2.5} U
		\qquad \text{by $\bb_e\bb_e^\top \preceq \LL$
			and the above bound} \\
		\leq
		&\kh{ 2\sqrt{2} n^{2.5} U +
		\sqrt{2}\delta n^{2.5} U
		 }
		\sqrt{2}\delta n^{2.5} U \qquad
		\text{since $\len{\zz}^2 \leq n$
		and $\LL^\dag \leq 2n^4 U^2 \II$} \\
		\leq & 6\delta n^5 U^2 \qquad
		\text{by $\delta < 1$.}
	\end{align*}
\end{proof}

\begin{proof}[Proof of Lemma~\ref{lem:edgeLower}]
	\begin{align*}
		\trace{\LL^\dag \bb_e \bb_e^\top \LL^\dag}
		&= \bb_e^\top \LL^\dag \LL^\dag \bb_e
		\qquad \text{by cyclicness of trace} \\
		&= 2
			\frac{\bb_e^\top \kh{\LL^\dag}^2 \bb_e}{\bb_e^\top\bb_e}
			\qquad \text{since $\len{\bb_e}^2 = 2$} \\
		&\geq \frac{2}{n^2 U^2} \qquad
			\text{since $\kh{\LL^\dag}^2\succeq \frac{1}{n^2 U^2} \PPi$.}
	\end{align*}
\end{proof}

\subsection{Error Tracking for the Laplacian Solver in Section~\ref{sec:vertexbyjlschur}}
\label{sec:Errorsvertex}

\begin{proof}[Proof of Lemma~\ref{lem:SolverError}]
	The lhs of inequality~(\ref{eq:SolverError}) can be seen
	as the difference between the following two values:
	\begin{align*}
		\norm{\yy}_{\BB_{T}^\top \WW_{T}^{1 / 2}
\kh{ \II - (1 - \theta)\WW_{T}^{1 / 2} \BB_{T}\LL^\dag
\BB_{T}^\top\WW_{T}^{1 / 2} }^{-1}
\WW_{T}^{1 / 2} \BB_{T}}^2 \\
		\norm{\LL^\dag\zz}_{\BB_{T}^\top \WW_{T}^{1 / 2}
\kh{ \II - (1 - \theta)\WW_{T}^{1 / 2} \BB_{T}\LL^\dag
\BB_{T}^\top\WW_{T}^{1 / 2} }^{-1}
\WW_{T}^{1 / 2} \BB_{T}}^2.
	\end{align*}
	By the triangle inequality of norms, the difference
	between the square roots of these two values is at most
	\begin{align*}
		&\len{\yy - \LL^\dag \zz}_{\BB_{T}^\top \WW_{T}^{1 / 2}
\kh{ \II - (1 - \theta)\WW_{T}^{1 / 2} \BB_{T}\LL^\dag
\BB_{T}^\top\WW_{T}^{1 / 2} }^{-1}
\WW_{T}^{1 / 2} \BB_{T}}
			\\
		\leq &
		\theta^{-0.5}\len{\yy - \LL^\dag \zz}_{\BB_{T}^\top \WW_{T}\BB_{T}}
		\qquad \text{since $\kh{ \II - (1 - \theta)\WW_{T}^{1 / 2} \BB_{T}\LL^\dag \BB_{T}^\top\WW_{T}^{1 / 2} }^{-1} \preceq \frac{1}{\theta} \II$} \\
		\leq & \theta^{-0.5}\len{\yy - \LL^\dag \zz}_{\LL}
		\qquad \text{since $\BB_{T}^\top \WW_{T}\BB_{T} \preceq \LL$} \\
		\leq & \theta^{-0.5} \delta \len{\LL^\dag \zz}_{\LL}
			= \theta^{-0.5} \delta \sqrt{\zz^\top \LL^\dag \zz} \\
		\leq & \theta^{-0.5} \delta n^{0.5}
			\sqrt{\frac{\zz^\top \LL^\dag \zz}{\zz^\top \zz}}
			\qquad \text{since $\len{\zz}^2 \leq n$} \\
		\leq & \sqrt{2} \theta^{-0.5} \delta n^{2.5} U
			\qquad \text{since $\LL^\dag \leq 2n^4 U^2 \II$.}
	\end{align*}
	Then by the inequality
	$\left| x^2 - y^2 \right| \leq (2|y| + |x-y|)|x-y|$,
	the lhs of~(\ref{eq:SolverError}) is at most
	\begin{align*}
		&\kh{2\len{\LL^\dag \zz}_{\BB_{T}^\top \WW_{T}^{1 / 2}
\kh{ \II - (1 - \theta)\WW_{T}^{1 / 2} \BB_{T}\LL^\dag
\BB_{T}^\top\WW_{T}^{1 / 2} }^{-1}
\WW_{T}^{1 / 2} \BB_{T}} + \sqrt{2} \theta^{-0.5} \delta n^{2.5} U} \sqrt{2} \theta^{-0.5} \delta n^{2.5} U
			\\
		\leq &
		\kh{2\theta^{-0.5}\len{\LL^\dag \zz}_{\LL} + \sqrt{2} \theta^{-0.5} \delta n^{2.5} U} \sqrt{2} \theta^{-0.5} \delta n^{2.5} U \\
		& \qquad \text{since $\BB_{T}^\top \WW_{T}^{1 / 2}
\kh{ \II - (1 - \theta)\WW_{T}^{1 / 2} \BB_{T}\LL^\dag
\BB_{T}^\top\WW_{T}^{1 / 2} }^{-1}
\WW_{T}^{1 / 2} \BB_{T} \preceq \frac{1}{\theta} \LL$} \\
		\leq &
		\frac{1}{\theta}\kh{2\sqrt{2}n^{2.5}U + \sqrt{2} \delta n^{2.5} U} \sqrt{2} \delta n^{2.5} U \qquad \text{since $\len{\zz}^2 \leq n$ and $\LL^\dag \preceq 2n^4U^2\II$}\\
		\leq & 6\theta^{-1}\delta n^5 U^2 \qquad
		\text{by $\delta < 1$.}
	\end{align*}
\end{proof}

\begin{proof}[Proof of Lemma~\ref{lem:Lower}]
	\begin{align*}
		& \trace{\LL^\dag \BB_{T}^\top \WW_{T}^{1 / 2}
\kh{ \II - (1 - \theta)\WW_{T}^{1 / 2} \BB_{T}
  \LL^\dag
\BB_{T}^\top\WW_{T}^{1 / 2} }^{-1}
\WW_{T}^{1 / 2} \BB_{T} \LL^\dag} \\
		\geq & \trace{\LL^\dag \BB_{T}^\top \WW_{T} \BB_{T} \LL^\dag} \qquad \text{since $\kh{ \II - (1 - \theta)\WW_{T}^{1 / 2} \BB_{T}
  \LL^\dag
\BB_{T}^\top\WW_{T}^{1 / 2} }^{-1} \succeq \II$} \\
		%= & \trace{\WW_{T}^{1/2} \BB_{T} \LL^\dag \LL^\dag \BB_{T}^\top \WW_{T}^{1/2}}
		%\qquad \text{by cyclicness of trace} \\
		%= & \sum\limits_{e\in T} w(e) \bb_e^\top \kh{\LL^\dag}^2 \bb_e \\
		= &\trace{\LL^\dag \kh{\sum\limits_{e\in T} w(e)\bb_e\bb_e^\top} \LL^\dag} \\
		= & \sum\limits_{e\in T} w(e) \trace{\LL^\dag\bb_e\bb_e^\top \LL^\dag} \\% \qquad \text{by cyclicness of trace} \\
		\geq & \frac{2\sizeof{T}}{n^2U^2} \qquad \text{by Lemma~\ref{lem:edgeLower} and $w(e) \geq 1$.}
	\end{align*}
\end{proof}

\end{appendix}

%\end{spacing}
\end{document}